\newcommand{\ignore}[1]{}
\newtheorem{assumption}{Assumption}
\title{VBR: Version Based Reclamation} 
\author{Gali Sheffi}{Department of Computer Science, Technion, Haifa, Israel}{galish@cs.technion.ac.il}{}{}
\author{Maurice Herlihy}{Department of Computer Science, Brown University, Providence, USA}{mph@cs.brown.edu}{}{}
\author{Erez Petrank}{Department of Computer Science, Technion, Haifa, Israel}{erez@cs.technion.ac.il}{}{}
\authorrunning{G. Sheffi, M. Herlihy and E. Petrank} 
\keywords{Safe memory reclamation, concurrency, linearizability, lock-freedom} 
\begin{document}

\maketitle

\begin{abstract}
Safe lock-free memory reclamation is a difficult problem. 
Existing solutions follow three basic methods (or their combinations): epoch based reclamation, hazard pointers, and optimistic reclamation. 
Epoch-based methods are fast, but do not guarantee lock-freedom. Hazard pointer solutions are lock-free but typically do not provide high performance. 
Optimistic methods are lock-free and fast, but previous optimistic methods did not go all the way. While reads were executed optimistically, writes were protected by hazard pointers. In this work we present a new reclamation scheme called {\em version based reclamation} (VBR), which provides a full optimistic solution to lock-free memory reclamation, obtaining lock-freedom and high efficiency. Speculative execution is known as a fundamental tool for improving performance in various areas of computer science, and indeed evaluation with a lock-free linked-list, hash-table and skip-list shows that VBR  outperforms state-of-the-art existing solutions. 
\end{abstract}

\section{Introduction} \label{sec-intro}

Lock-freedom guarantees eventual system-wide progress, regardless of the behavior of the executing threads.
Achieving this desirable progress guarantee in practice requires a lock-free memory reclamation mechanism.
Otherwise, available memory space may be exhausted and the executing threads may be indefinitely blocked while attempting to allocate, foiling any progress guarantee. 
Automatic garbage collection could solve this problem for high-level managed languages, but while some efforts have been put into designing a garbage collector that supports lock-free executions~\cite{herlihy1991lock,hudson2001sapphire,pizlo2007stopless,pizlo2008study,pizlo2010schism}, a lock-free garbage collector for the entire heap is not available in the literature. Consequently, lock-free implementations must use manual memory reclamation schemes.

Manual reclamation methods rely on \textit{retire} invocations by the program, announcing that a certain object has been unlinked from a data-structure. 
After an object is retired, the task of the memory reclamation mechanism is to decide when it is safe to reclaim it, making its memory space available for reuse in future allocations.
The memory reclamation mechanism ensures that an object is not freed by one thread, while another thread is still using it.
Accessing a memory address which is no longer valid may result in unexpected and undesirable program behavior. Conservative reclamation methods make sure that no thread accesses reclaimed space, while optimistic methods allow threads to speculatively access reclaimed memory, taking care to preserve correctness nevertheless. 

Conservative manual reclamation schemes can be roughly classified as either \textit{epoch--based} or \textit{pointer--based}. 
In epoch--based reclamation (EBR) schemes~\cite{harris2001pragmatic,hart2007performance,mckenney1998read,brown2015reclaiming}, all threads share a global epoch counter, which is incremented periodically. 
Additionally, the threads share an announcements array, in which they record the last seen epoch. 
During reclamation, only objects that had been retired before the earliest recorded epoch are reclaimed.
These schemes are often fast, but they are not robust. I.e., a stalled thread may prevent the reclamation of an unbounded number of retired objects. At a worst-case scenario, these objects will consume too much memory space resulting in blocking all new allocations and consequently, foiling system-wide progress.

Pointer--based reclamation methods~\cite{herlihy2005nonblocking,michael2004hazard,michael2002safe,solomon2021efficiently} allow threads to protect specific objects. Namely, before accessing an object, a thread can announce its access in order to prevent this object from being reclaimed. While pointer-based methods can guarantee robustness (and consequentially, lock-freedom), they incur significant time overheads because they need to protect each dereference to shared memory, and issue an expensive memory synchronization fence to make sure  the protection is visible to all threads before accessing the derefeneced object. 
Furthermore, pointer-based schemes are not applicable to many concurrent data-structures (e.g., to Harris's linked-list~\cite{harris2001pragmatic}).

Hybrid schemes that enjoy stronger progress guarantees and smaller time overheads have been proposed. Some epoch-based algorithms try to minimize the chance a non-cooperative thread will block the entire system, by allowing reclamation of objects whose life cycle do not overlap with the activity of the non-responsive thread, e.g., HE (Hazard Eras~\cite{ramalhete2017brief}) and IBR (Interval-Based Reclamation~\cite{wen2018interval}).
However, while these algorithms are effectively non-blocking in most practical scenarios, a halted thread can still prevent the reclamation of a large space that relates to the size of the heap.  
There exists a wait-free variant of the Hazard Eras algorithm~\cite{nikolaev2020universal} which provides a better guarantee but is naturally slower. 
Another hybrid approach, called \textit{PEBR}~\cite{kang2020marriage}, obtains lock-freedom at a lower cost, but relies on the elimination of the costly memory fences using the mechanism of~\cite{dice2016fast}, which in turn relies on hardware modifications or on undocumented operating systems assumptions that might not hold in the future. 
Another hybrid of pointer- and epoch-based reclamation is the DEBRA+ and the subsequent  NBR~\cite{brown2015reclaiming,singh2021nbr} reclamation schemes. In order to deal with stuck threads in the epoch-based reclamation, these schemes signal non-cooperative threads and prevent them from postponing the reclamation procedure. While DEBRA+ and NBR are fast, their lock-free property relies on the system's lock-free implementation of signaling. This assumption is not currently available in most existing operating systems, but it may become available in the future. 

The first optimistic approach to lock-free memory reclamation was the \textit{Optimistic Access} scheme~\cite{cohen2015efficient} (also denoted as OA), speculatively allowing reads from retired objects, but protecting writes from modifying retired objects through the use of conservative pointer-based reclamation.  
After each read, designated per-thread flags signify whether reclamation took place. This allows the reading thread to avoid using stale data loaded from reclaimed space. 
Subsequent work~\cite{cohen2015automatic,Cohen18} increased automation and applicability. 
While the optimistic access reclamation scheme initiated speculative memory reclamation, its mechanism only allowed speculative read instructions. Write instructions were still applied conservatively, using hazard pointers (also denoted as HP) protection to avoid writes on reclaimed space, limiting the benefits of speculative execution. 

In this paper we present \textit{VBR}, a novel optimistic memory reclamation scheme that employs versions to allow full speculative execution in which both read and write instructions are allowed to access reclaimed space. Both accesses are guarded from affecting program semantics by extending the object with versions for each mutable field. Write operations to shared memory will fail due to modified field versions. In addition, loads of stale values will be ignored because the versions will signify a possible memory reclamation. 
VBR is fully optimistic. Unlike OA~\cite{cohen2015efficient,cohen2015automatic,Cohen18}, writes are also speculative and do not require costly fences. 
Memory fences are used infrequently, upon updating the global epoch. 
VBR provides full lock-freedom and it does not allow a non-cooperative thread to stall the reclamation process. 
VBR does not rely on hardware or operating system assumptions (or modifications), except for the existence of a wide (double-word) compare and swap instruction, which is available on most existing architectures (e.g., x86). 

The proposed VBR can reuse any retired object immediately after it is retired without jeopardizing correctness, which makes it highly space efficient. However, VBR does encounter an issue that pops up in several other schemes~\cite{michael2004hazard,cohen2015efficient,cohen2015automatic,brown2015reclaiming,singh2021nbr}. 
The memory manager sometimes causes a read or a write instruction to ``fail'' due to memory reclamation validation test. This failure is not part of the original program control flow and thus, an adequate handling of such failure should be added. In~\cite{michael2004hazard} Michael proposes informally to "skip over the hazards and follow the path of the target algorithm when conflict is detected, i.e., try again, backoff, exit loop, etc.". Indeed in many known lock-free data structures~\cite{harris2001pragmatic,natarajan2014fast,shalev2006split,ellen2010non,michael2002high,linden2013skiplist,michael1996simple,fraser2004practical}, it is easy to handle such failures adequately. In a similar manner, all failures in our VBR scheme are easily handled for these lock-free data structures. A first rigorous treatment of these failures was provided in~\cite{cohen2015efficient} for data structures that are written in the normalized form of lock-free data structures~\cite{timnat2014practical}. Subsequently, a weaker version of normalized concurrent data structures was presented in~\cite{singh2021nbr}, where this problem is somewhat more severe, as signals may occur at an arbitrary point in the program flow. We follow this line of papers by rigorously addressing the case in which a read or a write validation fails. 


We have implemented VBR on a linked-list, a skip-list, and a hash table and evaluated it against epoch-based reclamation, hazard pointers, hazard eras, interval-based reclamation, and no reclamation at all. As expected, speculative execution outperforms conservative approaches and so VBR yields both lock-freedom as well as high performance. 

The paper is organized as follows. In Section~\ref{sec-overview} we provide an overview of the VBR scheme. In Section~\ref{sec-settings}, we describe our shared memory model and specify some assumptions a data structure must satisfy in order to be correctly integrated with our reclamation mechanism. We describe the VBR scheme integration in Section~\ref{sec-algorithm}. Experiments appear in Section~\ref{sec-evaluation}. Related work is surveyed in Section~\ref{sec-related}. We prove that integrating the VBR scheme maintains linearizability and lock-freedom in Appendix~\ref{sec-correctness} and~\ref{sec-checkpoint-extra}. Finally, an illustration of integrating VBR into a lock-free data-structure (Harris's linked-list~\cite{harris2001pragmatic}) appears in Appendix~\ref{sec-example}.


\section{Overview of VBR} \label{sec-overview}

The VBR memory reclamation scheme follows an optimistic approach where access to reclaimed objects is allowed. Optimistic approaches reduce the overhead but require care to guarantee correctness. 
First, VBR allows immediate reclamation of each retired object. There is no need to wait for guards to be lowered to make sure an object is reclaimable as in other methods. This property ensures that the amount of quarantined space waiting for reclamation is small and so lock-freedom can be obtained. Lock-freedom is also enabled because stalled threads do not delay reclamation of any object.
Second, on strongly-ordered systems (e.g., x86, SPARC TSO, etc.) VBR does not require a costly overhead on read or write accesses. No additional shared memory writes or memory synchronization fences are required with reads or writes to shared memory. This provides the high efficiency seen in the evaluation. However, on weakly-ordered systems (e.g., ARM, PowerPc, etc.), reads must be ordered using special CPU load or memory fence instructions~\cite{cppMem}.  
VBR requires a type-preserving allocator. I.e., a memory space allocated for a specific type is used only for the same type, even when re-allocated. 
The assumption of type preserving (see also~\cite{wen2018interval,cohen2015efficient,cohen2015automatic}) is necessary for applying our scheme, and is reasonable because data structure nodes are typically fixed-size nodes.
Retired nodes are not returned to the operating system. Instead, they are returned to a shared pool of nodes, from which they can be re-allocated by any thread.
As in~\cite{brown2015reclaiming,singh2021nbr}, a collection of local node pools (one per thread) is added to the shared pool. A thread accesses the shared pool only when it has no available nodes in its local pool.


Similarly to epoch-based reclamation, VBR maintains a global epoch counter, and as in~\cite{arbel2018harnessing,wen2018interval,ramalhete2017brief,nikolaev2020universal}, VBR tracks the birth epoch and retire epoch of each allocated node.
The birth epoch is determined upon allocation, and the retire epoch is set upon retirement.
The reclamation of a retired node does not involve any action. Upon an allocation of a node, a thread makes sure that the retire epoch of the node is smaller than the current global epoch. 
If it is not, then the thread increments the global epoch. This ensures that an object is allocated at a global epoch that is strictly larger than its previous retire epoch. Next, the thread 
re-allocates the object by updating its birth epoch with the current global epoch. 
This method guarantees that the ABA problem~\cite{michael2004aba} can only occur when the global epoch changes. Namely, when a thread encounters a  node during a data-structure traversal, it is guaranteed that this node has not been re-allocated during the traversal if the global epoch has not changed. 

VBR allows accessing reclaimed objects, while conservatively identifying reads that may access reclaimed nodes. To identify the access to a reclaimed node, each executing thread keeps track of the global epoch, by reading it upon most shared memory reads (as long as the epoch does not change, this read is likely to hit in the cache). When the thread observes an epoch change, it conservatively assumes that a value was read from a reclaimed memory and it applies a roll-back mechanism (described in Section~\ref{sec-modifications}), returning to a pre-defined checkpoint in its code.  
Since a node is always re-allocated at an epoch that is strictly larger than its former retirement, threads never rely on the content of stale values.

We now move on to handling optimistic writes. In addition to the birth epoch and retire epoch, each mutable field (e.g., node pointers) is associated with a version that resides next to it on the data structure node. During the execution, mutable fields are always updated atomically with their associated versions (using a wide CAS instruction). During allocation, after the birth epoch initialization, all versions of all mutable fields are set to the node's birth epoch. Throughout the life-cycle of a not-yet retired node, all of its versions remain greater than or equal to its birth epoch, and they never exceed its future retire epoch. Versions are decreased or increased during the execution in the following manner: when updating a pointer from a node $n$ to a node $m$, the pointer's version is set to the maximum birth epoch of the two nodes (either $n$'s or $m$'s). Notice that we assume that $n$'s pointer is never updated after its retirement (for more details, see Section~\ref{sec-assumptions}), and therefore, none of its pointers are assigned a version that exceeds their retirement epoch.

Let us consider the ABA problem for this versioning scheme. The concern is that re-allocations may result in an  erroneous success of CAS executions. For example, suppose in a linked list that a node $n$ points to another node, $m$, which in turn points to a third node, $k$. Now, suppose that a thread $T_1$ tries to remove $m$ by setting $n$'s pointer to point to $k$. Right before executing the removing CAS, $T_1$ is halted. While $T_1$ is idle, $T_2$ removes $m$ and then reallocates $m$'s space as a new node $d$. Next, $T_2$ inserts $d$ as a new node between $n$ and $k$. In the lack of versions, $T_1$'s CAS will now be erroneously successful. However, with versions it must fail. Since $d$'s birth epoch is necessarily bigger than $m$'s retire epoch, the version in the original pointer to $m$ must be smaller than the version assigned to the pointer when $d$ becomes its referent, 
and the CAS fails (for more details, see Appendix~\ref{sec-correctness}). We leave the rest of the discussion on handling unsuccessful CAS executions to Section~\ref{sec-algorithm}.
\section{Settings and Assumptions} \label{sec-settings}
In this section we describe our shared memory model and specify the assumptions a data structure must satisfy for integrating with our reclamation mechanism.

\subsection{System Model} \label{sec-model}
We use the basic asynchronous shared memory model, as described in~\cite{herlihy1991wait}. In this model, a fixed set of threads communicate through memory access operations.
Since threads may be arbitrarily delayed or may crash in the middle of their execution (which immediately halts their execution), they cannot assume anything about other threads' speed or whether or not other threads are active at all.
The shared memory is accessed via atomic instructions, provided by the hardware. Such instructions may be atomic reads and writes, the compare-and-swap (CAS) instruction and the wide-compare-and-swap (WCAS, which atomically updates two adjacent memory words, and is often supported in commodity hardware~\cite{yosifovich2017windows}) instruction.
The CAS operation receives three input arguments: an address of a certain word in memory, an expected value and a new value (both of the size of a single word). It then atomically compares the memory address content to the expected value, and if they are equal, it replaces it with the new received value. Otherwise, it does nothing. The WCAS operation operates in the same manner, on two adjacent memory words. 

Concurrent implementations provide different progress guarantees.
\textit{Lock-freedom} guarantees that as long as at least one thread executes its algorithm long enough, some thread will eventually make progress (e.g., complete an operation). This progress guarantee is not affected by the scheduler or even by the crash of all threads except for one. For a lock-free data structure to be truly lock-free,
it must rely on an allocation method that is also lock-free,
because otherwise a blocked allocation can prevent all threads from making progress.


A {\em data structure} represents a set of {\em items}, which are distinguished by unique keys, and are often arranged in some order. Each item is represented by a {\em node}, consisting of both mutable and immutable fields. In particular, each node has an immutable {\em key} field. The data-structure has a fixed set of {\em entry points} (e.g., the head of the linked-list in~\cite{harris2001pragmatic}), which are node pointers. 
A data structure provides the user with a set of operations for accessing it. Moreover, The user cannot access the data-structure in other ways, and the data structure operations never return a node reference.
An item that belongs to the data structure set of items must be represented by a node which is reachable from an entry point, by following a finite set of pointers. In particular, the data-structure nodes are accessible only via the entry points. However, a reachable node does not necessarily represent an item in the data structure set of items. We denote the removal of an item from the set of items that the data structure represent by {\em logical deletion}, and we denote the unlinking of a node from the data structure (i.e., making the node unreachable from the entry points) as {\em physical deletion}. 
E.g., in~\cite{harris2001pragmatic}, a special mechanism is used in order to mark reachable nodes as deleted. Once a node is marked, it stops representing an item in the data structure set of items (i.e., it is logically deleted), even though it is reachable from an entry point. 
A logically-deleted node is never logically re-inserted into the data-structure, even if it is physically re-inserted~\cite{sheffi2018scalable,timnat2012wait}. For example, if a node is logically deleted by marking one of its fields, 
then a marked node may be reinserted to the data structure, but its deleted mark cannot be removed. 

\subsection{Executions, Histories and Linearizability} \label{sec:linearizability}

A {\em step} can either be a shared-memory access by a thread (including the access input and output values), a local step that updates its own local variables, an invocation of an operation or the return from an operation (including the respective inputs and outputs).
We assume each step is atomic, so an {\em execution} $E=s_1 \cdot s_2 \cdot \ldots$
consists of a sequence of steps, assumed to start after an initial state, in which all data-structures are initialized and empty.
Given E, we further denote the finite sub-execution $s_1 \cdot s_2 \cdot \ldots \cdot s_i$ as $E_i$.

We follow~\cite{herlihy1990linearizability}, and model an execution E by its {\em history} H (and $E_i$ by $H_i$, respectively), which is the sub-sequence of operation invocation and response steps.
A history is {\em sequential} if it begins with an invocation step, and all invocations (except possibly the last one) have immediate matching responses. 
We assume that a concurrent system is associated with a {\em sequential specification}, which is a prefix-closed set of all of its possible sequential histories. A sequential history is {\em legal} iff it belongs to the sequential specification.
An invocation is {\em pending} in a given history if the history does not contain its matching response.
Given a history H, its sub-sequence excluding all pending invocations is denoted as complete(H).
An {\em extension} of a history H is a history constructed by appending responses to zero or more pending invocations in H. 
We further extend the notion of extensions, and say that an execution E' is an {\em extension} of an execution E if E is a prefix of E'.
In addition, given an execution E, EXT(E) is the set of all histories H' such that (1) H' is an extension of E's respective history, and (2) H' is the respective history of an extension of E.
Given a history H and a thread T, T's {\em sub-history}, denoted as H|T, is the sub-sequence of H consisting of all (and exactly) the steps executed by T.
Two histories H and H' are {\em equivalent} if for every thread T, H|T and H'|T are equal.
A history H is {\em well-formed} if for every executing thread T, H|T is a sequential history.
A well-formed history H is linearizable if it has an extension H' for which there exists a legal sequential history S such that (1) complete(H') is equivalent to S, and (2) if a response step precedes an invocation step in H, then it also precedes it in S.

\subsection{Implementation Assumptions} \label{sec-assumptions}

We focus on adding the VBR reclamation scheme to lock-free linearizable concurrent data-structure implementations. As in~\cite{wei2021constant}, we first assume that modifications are executed using the CAS instruction. No simple writes are used, and no other atomic instructions are supported. 
Consequently, our scheme does not support the use of other atomic primitives (such as $fetch\&add$ and $swap$).

\begin{assumption} \label{assumption-cas}
All updates occur only via CAS executions.
\end{assumption}




In general, as in~\cite{ramalhete2017brief}, we assume that all mutable fields of a removed node are invalidated, in order to prevent their future updates. It can be achieved either by marking them~\cite{harris2001pragmatic,michael2002high} or by self-linking (in the case of pointers). 
More formally:
\begin{assumption} \label{assumption-invalidate}
Node fields are invalidated (and become immutable) using a designated {\em invalidate()} method. This method receives as input a node field and invalidates it. The invalidation succeeds iff the field is valid and is not concurrently being updated by another thread. In order to check whether a certain field is invalid, a thread calls a designated {\em isValid()} method. Finally, given a node field, a thread separates the value from the (possible) invalidation mark by calling a designated {\em getField()} method.
\end{assumption}


Following the standard interface for manual reclamation~\cite{michael2004hazard,wen2018interval,cohen2015efficient,ramalhete2017brief,kang2020marriage}, applying our reclamation scheme to an existing implementation includes allocating nodes using an \textit{alloc} instruction and retiring nodes using a \textit{retire} instruction. Nodes are always retired before they can be reclaimed by the reclamation scheme. We assume that it is possible to retire each node only once. 
%
To sum up, in a similar way to~\cite{michael2004hazard}:
\begin{assumption} \label{assumption-life}
We assume the following life-cycle of a node $n$:
\begin{enumerate}
    \item \label{node-allocated} \textbf{Allocated}: $n$ is allocated by an executing thread, but is not yet reachable from the data-structure entry points. Once it is physically inserted into the data-structure, it becomes reachable.  
    \item \label{node-reachable} \textbf{Reachable} (optional): $n$ is reachable from the data structure entry points, but is not yet necessarily logically inserted into the data-structure (e.g.,~\cite{sheffi2018scalable,timnat2012wait}). When it is made logically included in the data structure it becomes Reachable and valid. 
    \item \label{node-valid} \textbf{Reachable and valid}: $n$ is reachable from the entry points and is considered logically in the data-structure (i.e., valid). At the end of this phase, fields of $n$ are invalidated. We think of $n$ as invalid when at least one of its mutable fields is invalid. 
    \item \label{node-invalid} \textbf{Invalid}: $n$ is logically deleted by a designated invalidation procedure, and all of its mutable fields are invalidated (e.g., by marking~\cite{harris2001pragmatic}). Once a field is invalidated, it becomes immutable. At the end of this phase, $n$ is unlinked (physically deleted) from the data structure. 
    \item \label{node-unlinked} \textbf{Unlinked}: At this point, $n$ is not reachable from the data-structure entry points, and therefore, it is not reachable from any other linked node. At the end of this phase it is retired by some thread. We assume a node is retired only once in an execution. After being retired the node is never linked back into the data structure. 
    \item \label{node-retired} \textbf{Retired}: $n$ has been retired, by a certain thread. We assume that only unlinked nodes can be retired. 
\end{enumerate}

\end{assumption}
%

Notice that, as discussed in~\cite{fraser2004practical,cohen2015automatic}, a node can be physically removed and re-inserted into the data-structure several times during stage~\ref{node-invalid}. However, a {\em retire} instruction is issued on a node only after it is physically removed for the last time and it will not be re-inserted thereafter. 

Finally, we assume that a thread does not use data on nodes without occasionally checking that the nodes are valid. For our scheme to work, we require this check after modifying the data structure. We assume that if a thread performs a successful modification of the data structure, and if it has some locally saved pointers that were read prior to the modification, then the thread makes limited use of these pointers. Actually, we do not even need to impose the restriction on all modifications. Restrictions are needed only for "important" modifications that cannot be rolled back. Such modifications are called {\em rollback-unsafe} and they are formally defined in Section~\ref{sec-checkpoints} below.   
In particular: 
\begin{assumption} \label{assumption-mutable}
If thread $T$ executes any rollback-unsafe modification after updating a local pointer $p$, then a future use of $p$ is limited. Suppose $p$ references a node $n$, then a future (i.e., after the rollback-unsafe modification) read of one of $n$'s mutable fields by $T$ is allowed only if the read is followed by an {\em isValid()} call, and if it returns FALSE, the field content is not used by $T$. 
\end{assumption}


While "not using" the content of a read field is intuitively clear, let us also formally say that the content of a read field is not used by a thread $T$, if $T$'s behavior is indistinguishable from its behavior when reading the $\bot$ sign instead of the actual value read.

Note that even after a rollback-unsafe update, $T$ is allowed to use the content of fields that were read before the modification. However, after the modification, $T$ is not allowed to dereference a local pointer and read values from the referenced node without checking the validity of the node. 
For example, $T$ is allowed to use previously read pointers as expected values of a CAS, or as the target of a write operation. $T$ can traverse a list in a wait free manner (since there is no modification involved). $T$ can trim all invalid nodes along a traversal. This is allowed since such modifications are  rollback-safe, and even if they were not, trimming includes checking the validity of the traversed nodes. 
Finally, T can search for the set of expected predecessors and successors in all of the skip-list upper levels, before it starts the physical insertion of a new node~\cite{linden2013skiplist,herlihy2020art,fraser2004practical}. 
All known lock-free data structures that we are aware of (e.g., ~\cite{harris2001pragmatic,natarajan2014fast,shalev2006split,ellen2010non,michael2002high,linden2013skiplist,michael1996simple,fraser2004practical,herlihy2020art}) satisfy Assumption~\ref{assumption-mutable}. 

\section{VBR: Version Based Reclamation} \label{sec-algorithm}

In this section we present the VBR mechanism: a lock-free recycling support for lock-free linearizable~\cite{herlihy1990linearizability} data-structures. 
We start by describing the reclamation scheme and the modifications applied to the nodes' representation in Section~\ref{sec-allocator}, and continue with the modifications applied to the data-structure operations in Section~\ref{sec-modifications}. 
In Section~\ref{sec-checkpoints} we define the notion of code checkpoints and show how to insert them into an existing linearizable implementation. In Section~\ref{sec-read} we go over the necessary adjustments to read operations (from shared variables). Handling update operations is described in Section~\ref{sec-write}.
A full example API appears in Figure~\ref{fig:interface}.
For ease of presentation, we refer to data-structures for which each node has a single immutable field (the node's key) and a single mutable field (the node's next pointer), and nodes' invalidation is executed via the marking mechanism~\cite{harris2001pragmatic}. However, this interface can be easily extended to handle multiple immutable and mutable fields, and other invalidation schemes. 
We present a full correctness proof for Theorem~\ref{theorem-maintain} in Appendix~\ref{sec-correctness}.

\begin{theorem} \label{theorem-maintain}
Given a lock-free linearizable data-structure implementation, that satisfies all of the assumptions presented in Section~\ref{sec-assumptions}, the implementation remains lock-free and linearizable after integrating it with VBR according to the modifications described in Sections~\ref{sec-allocator}-\ref{sec-modifications}.
\end{theorem}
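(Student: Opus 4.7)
The plan is to prove the two conclusions separately by establishing a simulation between executions of the VBR-integrated implementation and executions of the original lock-free linearizable data structure. The central technical device is an epoch/version invariant: throughout any execution, for every node $n$, its birth epoch strictly exceeds any earlier retire epoch of the same memory cell, and every version stored in a mutable field of $n$ lies in the interval between $n$'s birth epoch and $n$'s eventual retire epoch. This invariant follows by induction on the number of steps, using the allocation protocol (which forces the global epoch to advance past the previous retire epoch) and the version-assignment rule (which sets a field's version to the maximum of the two endpoint birth epochs, each of which is at most the current global epoch). From this invariant, the key ABA lemma follows: a WCAS succeeds on a field only if no reallocation of either endpoint occurred between the read of the expected value and the WCAS itself, since any reuse would push the observed version strictly above the expected one.

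For linearizability, I would define, for each thread $T$ and each operation, the effective suffix as the sequence of steps executed after $T$'s last rollback within that operation; its final step (the one that would have been the linearization point of the original algorithm) becomes the VBR linearization point. The proof then reduces to showing that this suffix is step-for-step indistinguishable from a fresh execution of the underlying algorithm over the current shared state. For writes this follows from the ABA lemma above; for reads it follows from the epoch-change detection, which forces a rollback whenever the value could have come from a reclaimed incarnation, combined with the fact that the underlying algorithm was already linearizable against an adversarial scheduler. Assumption~\ref{assumption-mutable} is needed precisely at the boundary: for any local pointer that crosses a rollback-unsafe update, any subsequent dereference of a mutable field is guarded by \emph{isValid()}, so the information $T$ carries across the boundary is either immutable (keys and raw pointer identities used as CAS arguments) or revalidated before semantic use.

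For lock-freedom, I would argue that VBR introduces only two new potential obstacles to progress: version-induced WCAS failures and rollbacks. A WCAS that fails because of a version mismatch witnesses either a genuine concurrent modification of the same field (exactly as in the original algorithm, which is lock-free by hypothesis) or a successful retire/reallocate cycle by another thread, which is itself system-wide progress. A rollback is triggered only by an observed change of the global epoch, and the global epoch is incremented only when some thread completes an allocation whose predecessor retire epoch equals the current global epoch, again witnessing progress. Combining these two observations with the lock-freedom of the original data structure gives that in any infinite execution some thread completes an operation.

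The hard part, I expect, will be the linearizability argument around rollback-unsafe modifications that commit before a rollback would otherwise be forced, because here VBR cannot undo the shared effect. Handling this cleanly requires the formal checkpoint classification promised in Section~\ref{sec-checkpoints} and a careful case analysis showing that, under Assumption~\ref{assumption-mutable}, any information $T$ still trusts after such a modification is either stable (immutable fields, still-valid pointers used only as CAS expected values) or explicitly rechecked, so that the version mechanism of Sections~\ref{sec-allocator}--\ref{sec-modifications} suffices to force any now-invalid assumption into a CAS failure rather than into an incorrectly linearized step.
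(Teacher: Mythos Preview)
Your version/epoch invariant and the resulting ABA lemma are exactly what the paper proves (their Claims~\ref{claim-alloc-safe} and~\ref{claim-version-correct}, and Claims~\ref{claim-update-precondition}--\ref{claim-mark-postconditions}), so that part is on target. Your linearizability route, however, diverges from the paper's. You want to isolate an ``effective suffix'' after the last rollback and argue it simulates a fresh run of the original algorithm. The paper instead builds an explicit \emph{inverse transformation}: from a VBR execution $E$ it constructs, step by step, a reclamation-free execution $E^{RF}$ of the original implementation (mapping a node at address $a$ with birth epoch $b$ to a fresh address $\langle a,b\rangle$), proves $E^{RF}$ is legal, and proves $E$ and $E^{RF}$ have identical histories. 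Crucially, the paper \emph{keeps} the rollback steps in $E^{RF}$, having first shown (Corollary~\ref{corollary-rollback-maintain}) that adding rollbacks to the original implementation preserves linearizability. Your effective-suffix picture does not obviously account for rollback-safe but globally visible updates (e.g., physical deletions in \texttt{find}) that were performed during a later-discarded attempt; those effects persist in shared memory and are observed by other threads, so ``start fresh from the current shared state'' is not automatically an execution of the original algorithm. The paper's construction sidesteps this by never discarding such steps.

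Your lock-freedom argument has a real gap. You claim a rollback ``witnesses progress'' because it is triggered by an epoch increment, which is triggered by an allocation encountering a matching retire epoch. But an epoch increment in line~\ref{alloc-update-e} is immediately followed by a rollback (line~\ref{alloc-return-cp}) and the node is returned to the list---no allocation completes, and certainly no operation completes. What an epoch increment witnesses is a \emph{prior retirement}, not a completed operation; and one retirement can trigger at most one epoch increment, while one epoch increment can trigger up to $N$ rollbacks. So you still need to tie ``infinitely many retirements'' to ``some operation completes,'' which is not immediate. The paper closes this by arguing the contrapositive: assume a suffix with no completions; then there are only finitely many logical insertions/removals and hence finitely many retirements (Claim~\ref{claim-inf-logical-insertions}); hence the epoch eventually stabilizes (Claim~\ref{claim-inf-epoch-changes}); hence rollbacks stop (Claim~\ref{claim-inf-rollbacks}); and the remaining suffix maps, via the inverse transformation, to a rollback-free execution of the original, contradicting its lock-freedom. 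You should adopt this structure or supply the missing link explicitly.
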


\subsection{The Reclamation Mechanism} \label{sec-allocator}

VBR uses a shared epoch counter, denoted $e$, incremented periodically by the executing threads. In addition, each executing thread keeps track of the global epoch using a local {\em my\_e} variable.
Each node is associated with {\em birth\_epoch} and {\em retire\_epoch} fields. Its birth epoch contains the epoch seen by the allocating thread upon its allocation, and its retire epoch contains the epoch seen by the thread which removed this node from the data structure, right before its retirement.

\ignore{
\begin{figure}[!ht]
	\begin{algorithmic}[1]\footnotesize
		\State \textbf{Class} Node \label{class-node}
		\Indent
            \State long birth\_epoch
            \State long retire\_epoch
            \State int key
            \State struct next
            \Indent
                \State Node* data
                \State long version
            \EndIndent
        \EndIndent
	\end{algorithmic}
	\caption{\small An example node class}
	\label{fig:node}
\end{figure}
}

Similarly to IBR's  \textit{Tagged Pointer IBR}~\cite{wen2018interval}, we add a version field adjacent to each mutable field (e.g., node pointers).
The field's version is guaranteed to always be equal to or greater than the node's birth epoch, and equal to or smaller than its eventual retire epoch (if there exists any). 
The field's data and its associated version are always updated together (using WCAS). E.g., see lines~\ref{alloc-cas},~\ref{cas-wcas} in Figure~\ref{fig:interface}.

Handling reclamation at the operating system level often requires using locks (unless it is configured to ignore certain traps). Therefore, for maintaining lock-freedom, VBR uses a user-level allocator. Retired nodes are inserted into manually-managed node pools~\cite{treiber1986systems,herlihy1993methodology} for future re-allocation.
We use a type-preserving allocator. I.e., a memory space allocated for a specific type is used only for the same type, even when re-allocated. 

Each thread's allocation and reclamation mechanism works as follows.
Besides sharing a global nodes pool~\cite{treiber1986systems,herlihy1993methodology}, each executing thread maintains a local pool of retired nodes, from which it retrieves reclaimed nodes for re-allocations. When the local pool becomes large enough, retired nodes may be moved to a global pool of retired nodes, allowing re-distribution of reclaimed nodes between the threads. 
When retiring a node, it is possible to re-allocate this node immediately. However, we use a local retired list to stall its re-allocation for a while. This allows infrequent increments to the global epoch counter, which improves performance. A retired node is therefore  
added to the private list of retired nodes. When the size of the retired list exceeds a pre-defined threshold, it is appended as a whole to the thread's local allocation pool, becoming available for allocation. 

The full allocation method appears in lines~\ref{alloc-call}-\ref{alloc-return} of Figure~\ref{fig:interface}. First, the thread reads the retire epoch of the next available node in its allocation pool. If it is equal to the shared epoch counter, then the thread increments the shared epoch counter using CAS (line~\ref{alloc-update-e}) and executes a rollback to the previous checkpoint (for more details, see Section~\ref{sec-checkpoints}). This makes sure that the birth epoch of a new node is larger than the retire epoch of the node that was previously allocated on the same memory space. 
If the CAS is unsuccessful, then another thread has incremented the global epoch value and there is no need to try incrementing it again. 
If $e$ is bigger than the retired node's retire epoch, the thread sets the new node's birth epoch to its current value. After setting the node's birth epoch, its next pointer version is set to this value, along with an initialization of its data to NULL in line~\ref{alloc-cas}. 
Due to Assumption~\ref{assumption-life} (the mutable fields of a node become immutable before it is retired), the WCAS executed in line~\ref{alloc-cas} is always successful. Finally, the key field is set to the key value received as input. 

The {\em retire} method appears in lines~\ref{retire-call}-\ref{retire-return-cp}. First, the retiring thread makes sure that the node is not already retired in line~\ref{retire-if-retired} (for more details, see Appendix~\ref{sec-correctness}). Then, it sets the node's retire epoch to be the current global epoch, and appends the retired node to its local retired nodes list. In case its local copy of the global epoch counter is not up to date, it performs a checkpoint rollback in line~\ref{retire-return-cp} (for more details, see Section~\ref{sec-checkpoints}).

\begin{figure}[!ht]
	\begin{algorithmic}[1]\footnotesize
		\State \textbf{alloc}(int key) \label{alloc-call}
		\Indent
		    \State n := alloc\_list $\rightarrow$ next \label{alloc-retrieve}
		    \State if (n $\rightarrow$ retire\_epoch $\geq$ my\_e) \label{alloc-if}
		    \Indent
		        \State CAS(\&e, my\_e, my\_e + 1) \label{alloc-update-e}
		        \State alloc\_list $\rightarrow$ next := n \label{alloc-realloc}
		        \State return to checkpoint \label{alloc-return-cp} \Comment{Checkpoint rollback}
		    \EndIndent
		    \State n $\rightarrow$ birth\_epoch := my\_e \label{alloc-update-birth}
		    \State n $\rightarrow$ retire\_epoch := $\bot$ \label{alloc-update-retire}
		        \State WCAS(\&(n $\rightarrow$ next), $\langle$n $\rightarrow$ next.data, n $\rightarrow$ next.version$\rangle$, $\langle$NULL, my\_e$\rangle$) \label{alloc-cas}
            \State n $\rightarrow$ key := key \label{alloc-key}
            \State return n \label{alloc-return}
        \EndIndent
        \Statex
        
        \State \textbf{retire}(Node* n, long n\_b) \label{retire-call}
        \Indent
            \State if (n $\rightarrow$ birth\_epoch > n\_b || n $\rightarrow$ retire\_epoch $\neq \bot$) return \label{retire-if-retired} \Comment{Avoiding double retirements}
            \State n $\rightarrow$ retire\_epoch := e.get() \label{retire-update}
            \State retired\_list $\rightarrow$ next := n \label{retire-retire}
            \State if (n $\rightarrow$ retire\_epoch > my\_e) return to checkpoint \label{retire-return-cp} \Comment{Checkpoint rollback}
        \EndIndent
        \Statex

        \State \textbf{getNext}(Node* n) \label{get-ref-call} 
        \Indent
            \State n\_next := unmark(n $\rightarrow$ next.data) \label{get-ref-get-ref}
            \State n\_next\_b := n\_next $\rightarrow$ birth\_epoch \label{get_next_birth}
		    \State if (my\_e $\neq$ e.get()) return to checkpoint \label{read-return-cp} \label{read-if} \Comment{Checkpoint rollback}
            \State return n\_next, n\_next\_b \label{get-ref-return}
        \EndIndent
        \Statex
        
        \State \textbf{getKey}(Node* n) \label{get-key-call} 
        \Indent
            \State n\_key := n $\rightarrow$ key \label{get-key-get-key}
            \State if (my\_e $\neq$ e.get()) \label{get-key-if} return to checkpoint \label{get-key-return-cp} \Comment{Checkpoint rollback}
            \State return n\_key \label{get-key-return}
        \EndIndent
        \Statex
        
        \State \textbf{isMarked}(Node* n, long n\_b) \label{is-valid-call}
        \Indent
            \State res := isMarked(n $\rightarrow$ next.data) \label{is-valid-is-valid}
            \State if (n $\rightarrow$ birth\_epoch $\neq$ n\_b) return TRUE \label{is-valid-if} \label{is-valid-return-false} \Comment{The node is already removed}
            \State return res \label{is-valid-return-res}
        \EndIndent
        \Statex
        
		\State \textbf{update}(Node* n, long n\_b, Node* exp, long exp\_b, Node* new, long new\_b) \label{cas-call}
		\Indent
		    \State exp\_v := max \{ n\_b,  exp\_b \} \label{cas-max-exp}
		    \State new\_v := max \{ n\_b,  new\_b \} \label{cas-max-birth}
            \State return WCAS(\&(n $\rightarrow$ next), $\langle$ exp, exp\_v $\rangle$, $\langle$ new, new\_v $\rangle$) \label{cas-wcas}
        \EndIndent
        \Statex
        
        \State \textbf{mark}(Node* n, long n\_b) \label{invalidate-call}
		\Indent
		    \State exp := unmark(n $\rightarrow$ next.data) \label{mark-read-next}
		    \State exp\_v := max \{ n\_b,  exp $\rightarrow$ birth\_epoch \} \label{mark-max-exp}
		    \State if (n $\rightarrow$ birth\_epoch $\neq$ n\_b) return FALSE \label{mark-return-false} \label{invalidate-if-marked} \Comment{The node is already removed}
		    \State new := mark(exp) \label{mark}
            \State return WCAS(\&(n $\rightarrow$ next), $\langle$ exp, exp\_v\ $\rangle$, $\langle$ new, exp\_v $\rangle$) \label{invalidate-wcas}
        \EndIndent
	\end{algorithmic}
	\caption{\small An example VBR interface}
	\label{fig:interface}
\end{figure}

\subsection{Code Modifications} \label{sec-modifications}

Unlike former reclamation methods, VBR allows both optimistic reads and optimistic writes. Namely, the executing threads may sometimes access a previously reclaimed node, and either read its stale values or try to update it.
To the best of our knowledge, there exists no other scheme which allows optimistic writes, and optimistic reads are allowed only in~\cite{cohen2015efficient,cohen2015automatic,Cohen18}. It is therefore the first algorithm that allows full speculative execution in a memory manager, for the purpose of improved performance. 

Accesses to reclaimed nodes should be monitored and handled. Our versioning mechanism ensures that a write to a previously reclaimed node always fails, and that stale values that are read from reclaimed nodes are always ignored. 
This gives rise to an additional problem -- when failing to write a value or to read a fresh value due to an access to a reclaimed node, the program needs to move control to an adequate location. 
This problem does not arise with epoch based reclamation because a thread never fails due to a test that the memory reclamation scheme imposes.
Failures that arise due to optimistic access are not part of the original lock-free concurrent data structure. Interestingly, deciding how to treat failed reads or writes is very easy in practice. We could easily modify lock-free data structures that satisfy the assumptions presented in Section~\ref{sec-assumptions} (e.g., ~\cite{harris2001pragmatic,natarajan2014fast,shalev2006split,ellen2010non,michael2002high,linden2013skiplist,michael1996simple,fraser2004practical}), at a minimal performance cost.
However, while presenting this scheme, we would also like to propose a general manner to handle failed accesses. We are going to define the notion of execution checkpoints. 
Upon accessing an allegedly stale value, the code just rolls back to the appropriate checkpoint.
This problem is given general treatment in the format of a normalized form assumption in~\cite{cohen2015efficient,cohen2015automatic}, and in a total separation between read and write phases during the execution in~\cite{singh2021nbr}. Although the VBR scheme can be applied to implementations that adhere to both models, both of them require extensive modifications to the original program's structure. Therefore, we propose a new method, which is more general and makes less assumptions on the given implementation. Our method is to carefully define program checkpoints.

\subsubsection{Defining Checkpoints} \label{sec-checkpoints}



As briefly described in Section~\ref{sec-overview}, and as will be formally defined below, VBR occasionally requires a rollback to a predefined checkpoint. In order to install checkpoints in a given code in an efficient manner, one needs to be able to distinguish important shared-memory accesses that cannot be rolled back from non-important accesses that allow rolling back (a simple decision that all modifications are not safe to rollback is correct, but may reduce performance). 
The notion of important shared-memory accesses is similar to the definition of an {\em owner CAS} in~\cite{timnat2014practical}, and shares some mutual concepts with the {\em capsules} definition, given in~\cite{ben2019delay,blelloch2018parallel}.
Informally, non-important shared-memory accesses (that can be rollbacked) either do not affect the shared memory view (e.g., reading from the shared-memory) or do not have any meaningful impact on the execution flow. 
For example, consider Hariss's implementation of a linked-list~\cite{harris2001pragmatic}.
The physical removal of a node, i.e., trimming the node from the list after it has been marked, can be safely rollbacked. If we try the same trim again, it will simply fail, and if we rollback further, this trim will not even be attempted. 
However, the (successful) insertion of a new node into the list and the (successful) marking of a node for logical deletion are both important and are not rollback-safe. In both cases, performing a rollback right after the successful update would result in a non-linearizable history (as the inserter or remover would not return TRUE after successfully inserting or removing the node, respectively).
We now define the notion of {\em rollback-safe steps} in a given execution $E$ with a respective history $H$. 

\begin{definition}[Rollback-Safe Steps] \label{definition-rollback-safe}
We say that $s_i$ is a rollback-safe step in an execution E if EXT($E_i$)=EXT($E_{i-1}$).
\end{definition}


If $s_i$ is not a rollback-safe step, then we say that it is a {\em rollback-unsafe step}.
According to Definition~\ref{definition-rollback-safe}, if $s_i$ is a rollback-safe step, executed by a thread T during $E$, then T can safely perform a {\em local rollback step} after $s_i$. I.e., right after executing $s_i$ by T, T can restore the contents of all of its local variables and program counter (assuming they were saved before $s_i$), and the obtained execution would have the same set of corresponding history extensions. 
Note that, by Definition~\ref{definition-rollback-safe}, local steps, shared memory reads and unsuccessful memory updates (CAS executions returning FALSE) are considered as rollback-safe steps.

We extend Definition~\ref{definition-rollback-safe} in the following manner: a thread T can rollback to any previously saved set of local variables (including its program counter), as long as it has not performed any rollback-unsafe steps since they had been saved. I.e., as we prove in Appendix~\ref{sec-correctness}, it can safely rollback to its last visited checkpoint. 


Given a code for a concurrent data-structure, checkpoints are first installed after some shared memory update instructions, in the following manner: let $l$ be a shared-memory update instruction (i.e., a CAS instruction). If there exists an execution $E=s_1 \cdot \ldots$ such that $l$ is executed successfully during a step $s_i$ (i.e., the CAS execution returns TRUE), and $s_i$ is a rollback-unsafe step in $E$, then a checkpoint is installed right after $l$.
The installation of a checkpoint includes a check that the update is indeed successful (the CAS execution returns TRUE). If it is, then the checkpoint reference is updated (to the current value of the program counter), and all  local variables are saved for a future restoration\footnote{It is unnecessary to save uninitialized variables and variables that are not used anymore}. If the update is not successful (the CAS execution returns FALSE), then nothing is done. 
Checkpoints are also installed in the beginning of each data-structure operation. As opposed to the first type of checkpoint triggers, the installation does not depend on anything when done upon an operation invocation. Recall that, by Definition~\ref{definition-rollback-safe}, an operation invocation is always a rollback-unsafe step. 
After rolling back to a checkpoint, the thread updates its local copy of the global epoch, recovers its set of local variables, and continues its execution\footnote{Right before a thread rolls-back to its previous checkpoint,
it handles some unlinked nodes for guaranteeing VBR's robustness. As this issue does not affect correctness, we move this discussion to Appendix~\ref{sec-checkpoint-extra}.}. 

\subsubsection{Read Methods} \label{sec-read}
As threads may access stale values, the only way to avoid relying on a stale value is to constantly check that the node from which the value was read has not been re-allocated. 
In a conservative way, we think of a read instruction as potentially reading a stale value if the global epoch number changed since the last checkpoint time. A read of a stale value must imply a change of the global epoch number because the birth epoch of a node is strictly larger than the retirement epoch of a previous node that resides on the same memory space. 
Therefore, the shared epoch counter $e$ is read upon each operation invocation (see Section~\ref{sec-checkpoints}), each node retirement, and after certain allocations and reads from the shared memory. 
Since a node cannot be allocated during an epoch in which a node, previously allocated from the same memory address, is not yet removed from the data-structure (see lines~\ref{alloc-if}-\ref{alloc-return-cp} in Figure~\ref{fig:interface}), as long as the global epoch, read before the read of a node, is equal to the one read after the read of the node, it is guaranteed that the node's value is not stale. 
In general, when reading a node pointer into a local variable, it is always saved together with the node's birth epoch, as the node is represented by its birth epoch as well. 

W.l.o.g. and for simplifying our presentation, we assume each node originally consists of an immutable key field and a mutable next pointer field, and that a node is invalidated using the {\em mark()} method~\cite{harris2001pragmatic}. Therefore, there are roughly three types of read-only accesses in the original reclamation-free algorithm. The first type is the read of a node via the next pointer of its predecessor, the second one is the read of a node's key, and the third one is the read of a node's mark bit. We install the {\em getNext()} method instead of each pointer read in the original code, the {\em getKey()} method instead of each key read, and a new {\em isMarked()} method instead of the original one. Accesses to other (mutable or immutable) node fields should be very similar, and therefore require the same treatment.

The code that should be installed instead of each pointer read in the original code, appears in lines~\ref{get-ref-call}-\ref{get-ref-return}, and the code that should be installed instead of each key read, appears in lines~\ref{get-key-call}-\ref{get-key-return}. Both methods receive a pointer to the target node (assumed to be given as an unmarked pointer). First, the next node and its birth epoch (or the key, respectively) are saved in local variables. Then, the global epoch is read and compared to the previous recorded epoch. If the epoch has changed since the previous read, then the values may be stale, and the execution returns to the last checkpoint. Otherwise, the data is returned in line~\ref{get-ref-return} (or line~\ref{get-key-return}, respectively).

The {\em getKey()} method gives rise to another issue.
In a similar way to Assumption~\ref{assumption-mutable}, if a thread $T$ installs a pointer to a node $n$, then $T$ cannot read $n$'s immutable fields after executing a rollback-unsafe step. However, when dealing with immutable fields, as opposed to Assumption~\ref{assumption-mutable}, we do not assume that the original reclamation-free implementation satisfies this assumption. Instead, when $T$ installs a local pointer to $n$, it immediately calls the {\em getKey()} method and saves its output along with its local pointer to $n$ (unless $n$'s key has already been read into one of $T$'s local variables). Future reads of $n$'s key are replaced with reads of the saved value.

The code that should be installed instead of each {\em isMarked()} call in the original code, appears in lines~\ref{is-valid-call}-\ref{is-valid-return-res}.
It also receives an unmarked pointer to the target node, and additionally, its birth epoch. It first checks whether the node's next pointer is indeed marked (line~\ref{is-valid-is-valid}), using the original {\em isMarked()} method, which receives the actual allegedly marked pointer and checks if it is marked. Then the node's birth epoch is read, for guaranteeing that the given node is the correct one (and not another one, allocated from the same memory space). If it is not, then the target node has certainly been marked in the past, and the method returns TRUE in line~\ref{is-valid-return-false}. Otherwise, it returns the answer received in line~\ref{is-valid-is-valid}.
As this method returns a correct answer regardless of epoch changes or the retirement of the target node, it does not read the global epoch nor returns to the last checkpoint.

\subsubsection{Update Methods} \label{sec-write}

By Assumption~\ref{assumption-cas}, all data structure updates are executed using CAS instructions. We consider two types of pointer updates. The first type, depicted in lines~\ref{cas-call}-\ref{cas-wcas} of Figure~\ref{fig:interface}, is the update of an unmarked pointer. The second type (lines~\ref{invalidate-call}-\ref{invalidate-wcas}) is the marking of an unmarked pointer. 
The update of other mutable fields can be similarly implemented. In particular, the version of non-pointer mutable fields should always be equal to the node's birth epoch (which makes such fields much easier to handle).

The {\em update()} method replaces the original pointer update via a single CAS instruction. 
It receives pointers to the target node, its expected successor and the new successor, together with their respective birth epochs. All three pointers are assumed to be unmarked. The expected and new pointer versions are calculated in the same manner (lines~\ref{cas-max-exp}-\ref{cas-max-birth}): the maximum birth epoch of the target node and successor node. The next field is either successfully updated or remains unchanged in line~\ref{cas-wcas}. 
In Appendix~\ref{sec-correctness} we prove that the target node's next pointer is updated iff (1) it has not been reclaimed yet, (2) it is not marked, and (3) it indeed points to the expected node (including the given birth epoch).

The {\em mark()} method marks an unmarked {\em next} pointer, without changing its pointed node. It receives the target node and its birth epoch. 
The actual marking is executed in line~\ref{invalidate-wcas}. It uses the unmarked and marked variants of the pointer, and does not change the pointer's version (calculated in line~\ref{mark-max-exp}). In Appendix~\ref{sec-correctness} we prove that the target node is marked iff (1) it has not been reclaimed yet, (2) it is not marked, and (3) it indeed points to the expected node (read in line~\ref{mark-read-next}), just before the marking.

\begin{figure*}[tbh]

      \begin{subfigure}{0.325\textwidth}
         \includegraphics[width=\textwidth]{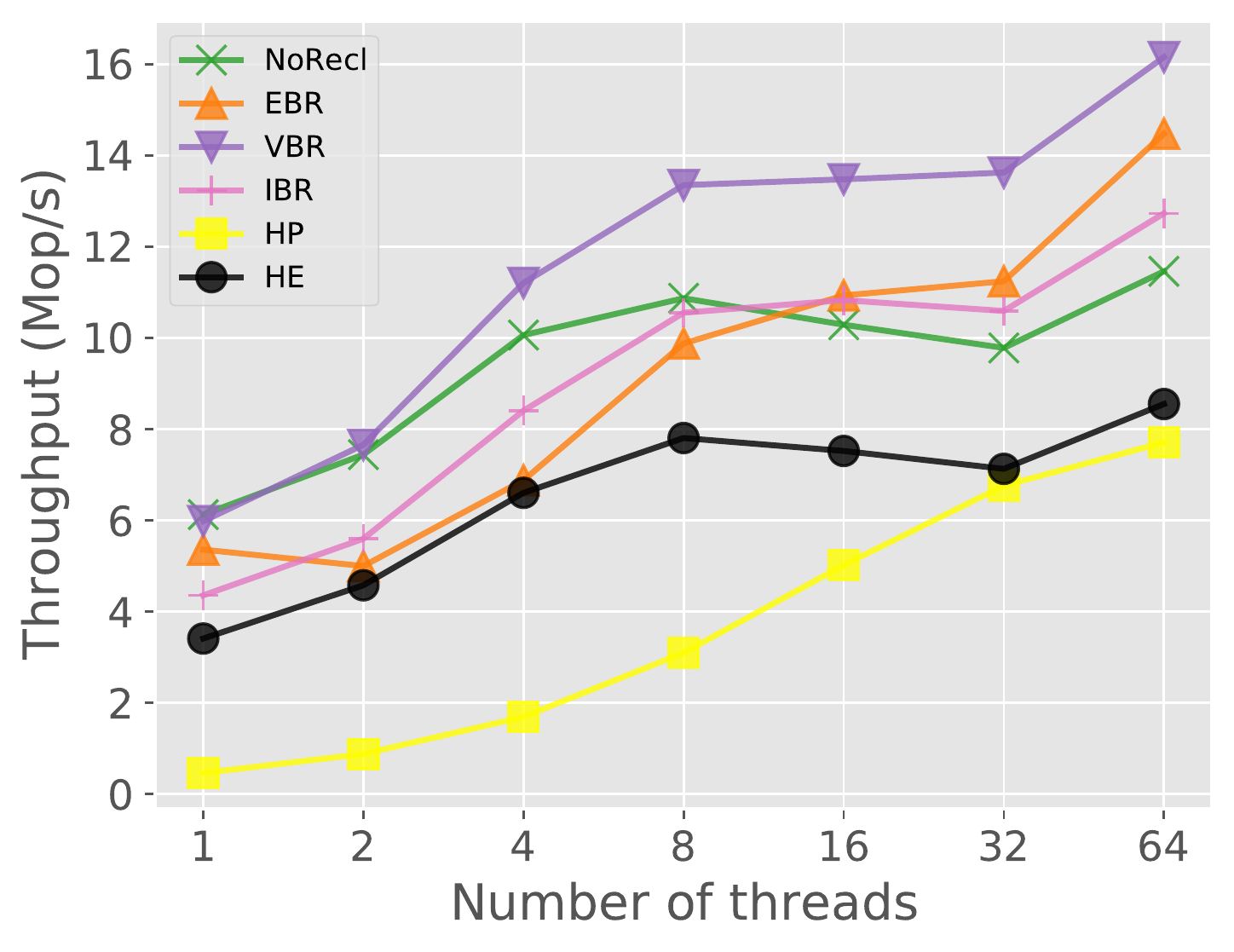}
        \caption{Linked-list. Key range: 256. 10\% inserts 10\% deletes 80\% reads.}
        \label{fig:list-256-20}
      \end{subfigure}
     \hfill    
	\begin{subfigure}{0.325\textwidth}
    	\includegraphics[width=\textwidth]{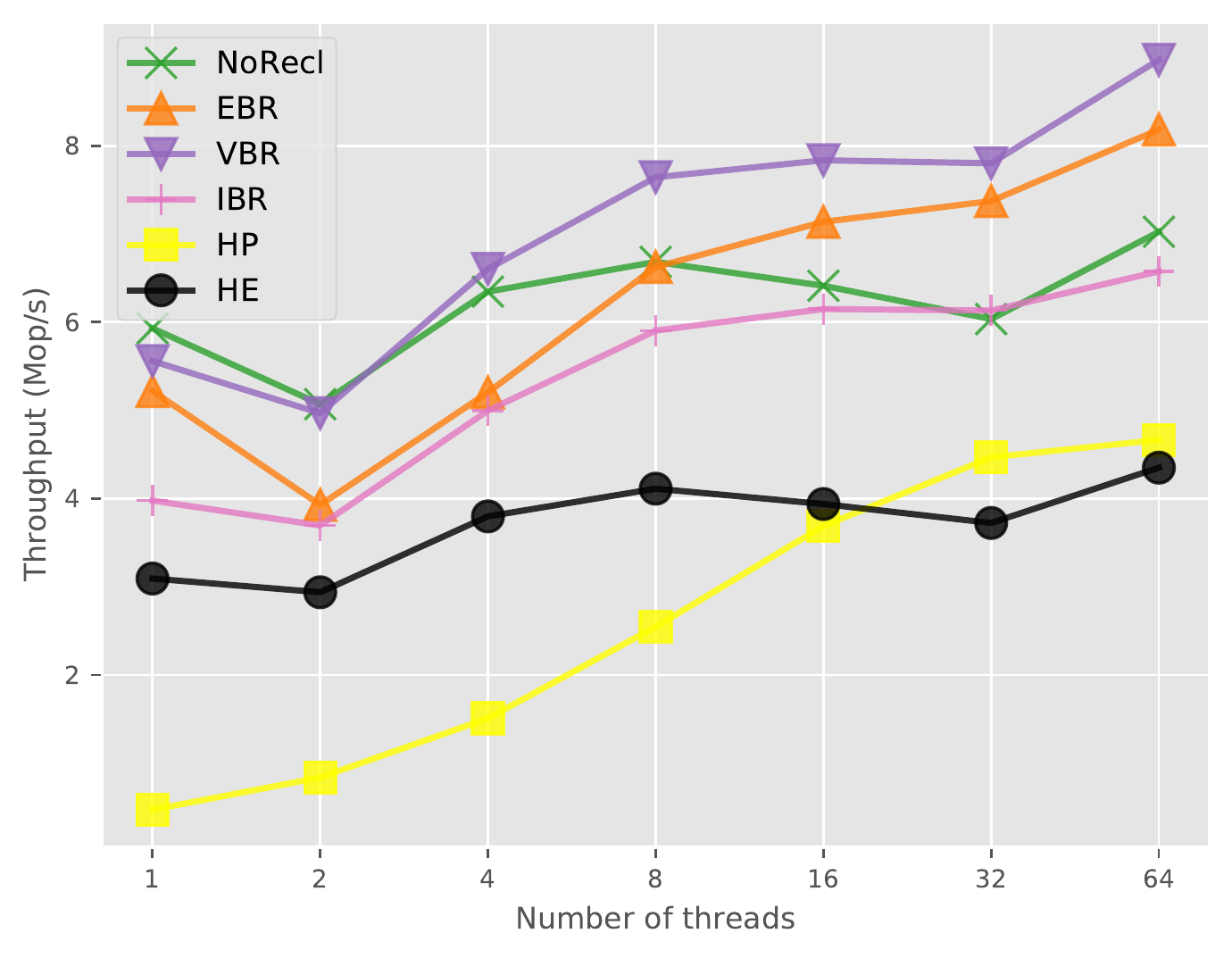}
	\caption{Linked-list. Key range: 256. 25\% inserts 25\% deletes 50\% reads.}
	\label{fig:list-256-50}
    \end{subfigure} 
 \hfill
     \begin{subfigure}{0.325\textwidth}
    \includegraphics[width=\textwidth]{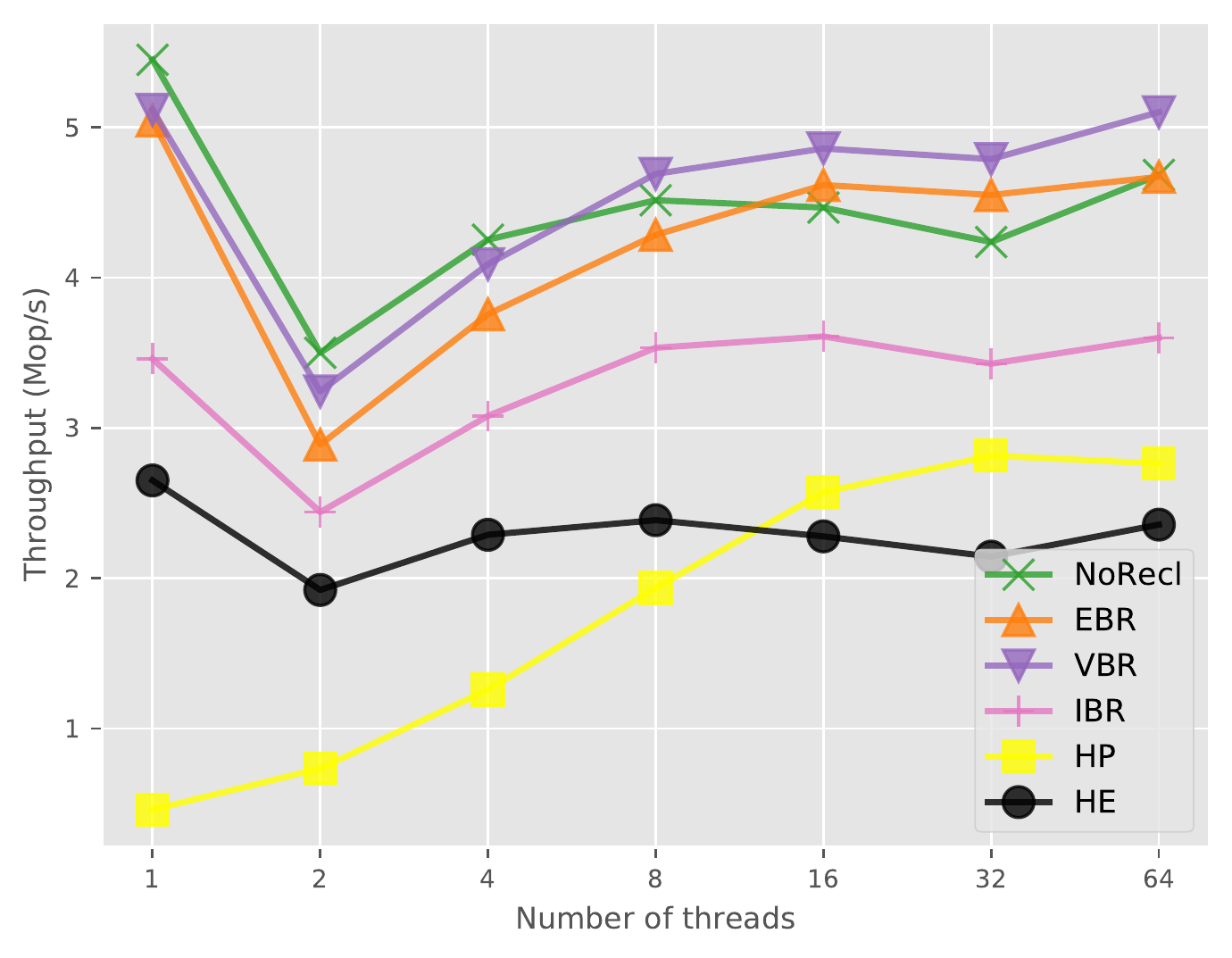}
	\caption{Linked-list. Key range: 256. 50\% inserts 50\% deletes.}
	        \label{fig:list-256-100}
    \end{subfigure}

     \begin{subfigure}{0.325\textwidth}
         \includegraphics[width=\textwidth]{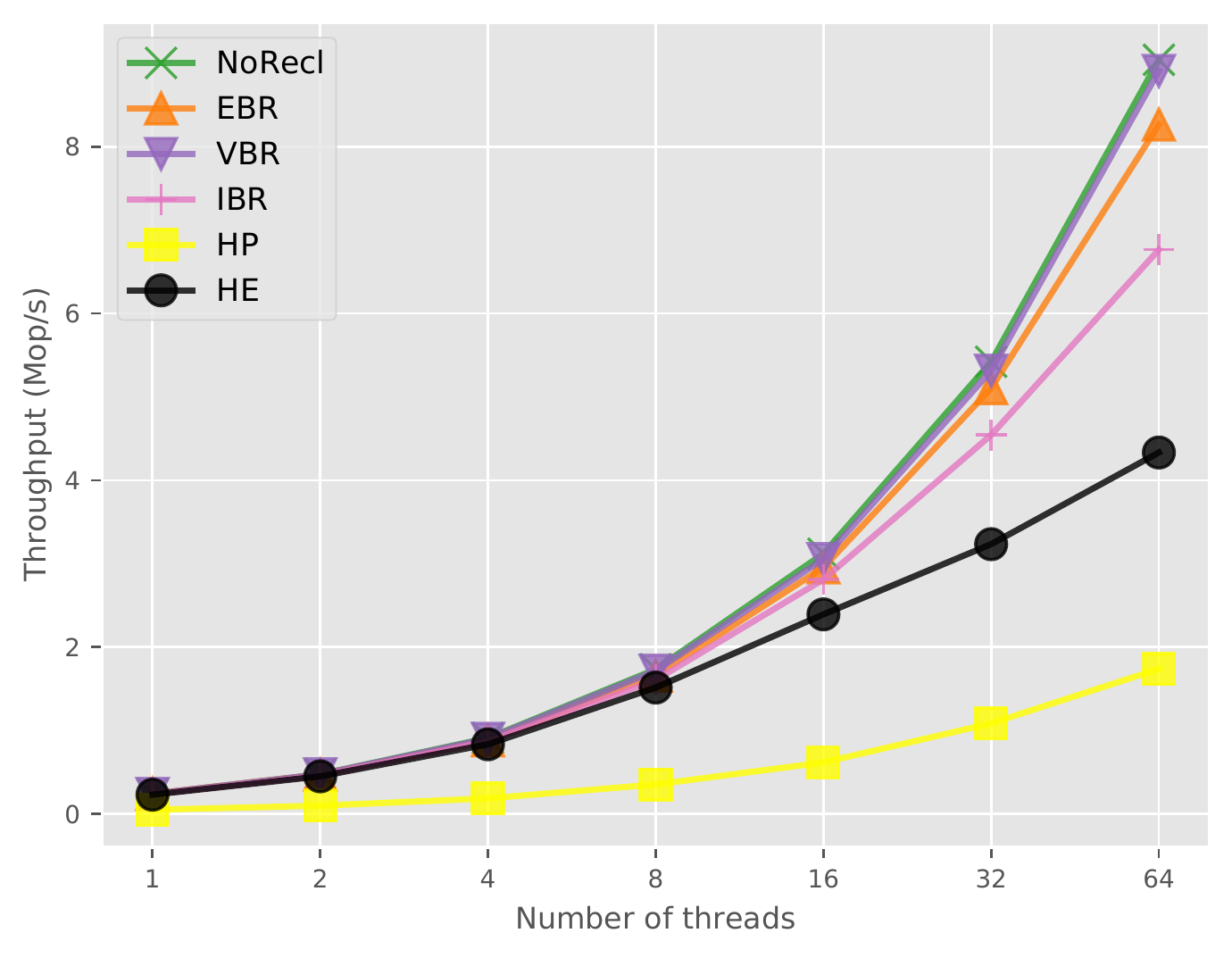}
        \caption{Skiplist. Key range: 10K. 10\% inserts 10\% deletes 80\% reads.}
        \label{fig:skiplist-10k-20}
      \end{subfigure}  
	\hfill
     	\begin{subfigure}{0.325\textwidth}
  	\includegraphics[width=\textwidth]{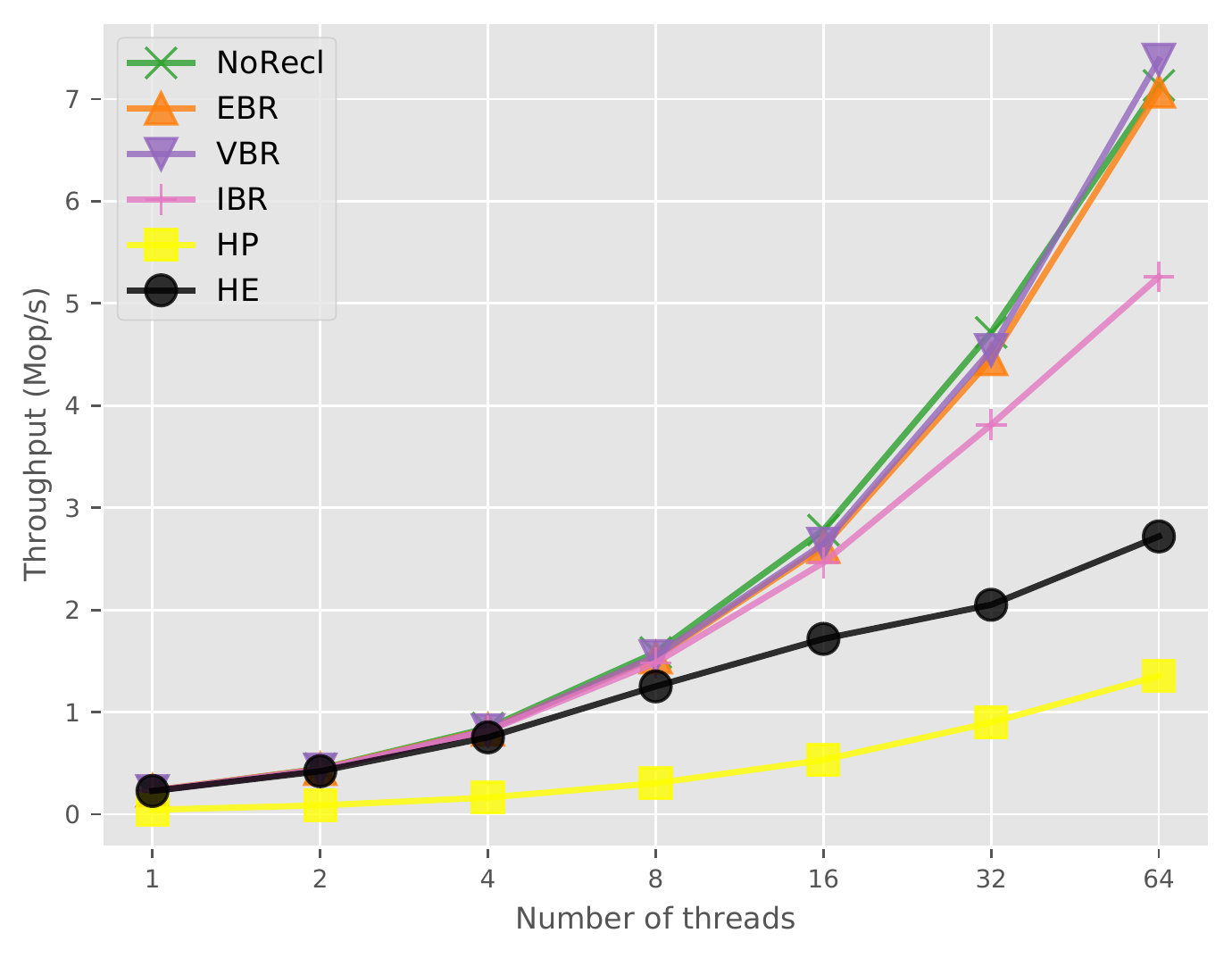}
  	\caption{Skiplist. Key range: 10K. 25\% inserts 25\% deletes 50\% reads.}
	\label{fig:skiplist-10k-50}
    \end{subfigure}
	\hfill    
      \begin{subfigure}{0.325\textwidth}
         \includegraphics[width=\textwidth]{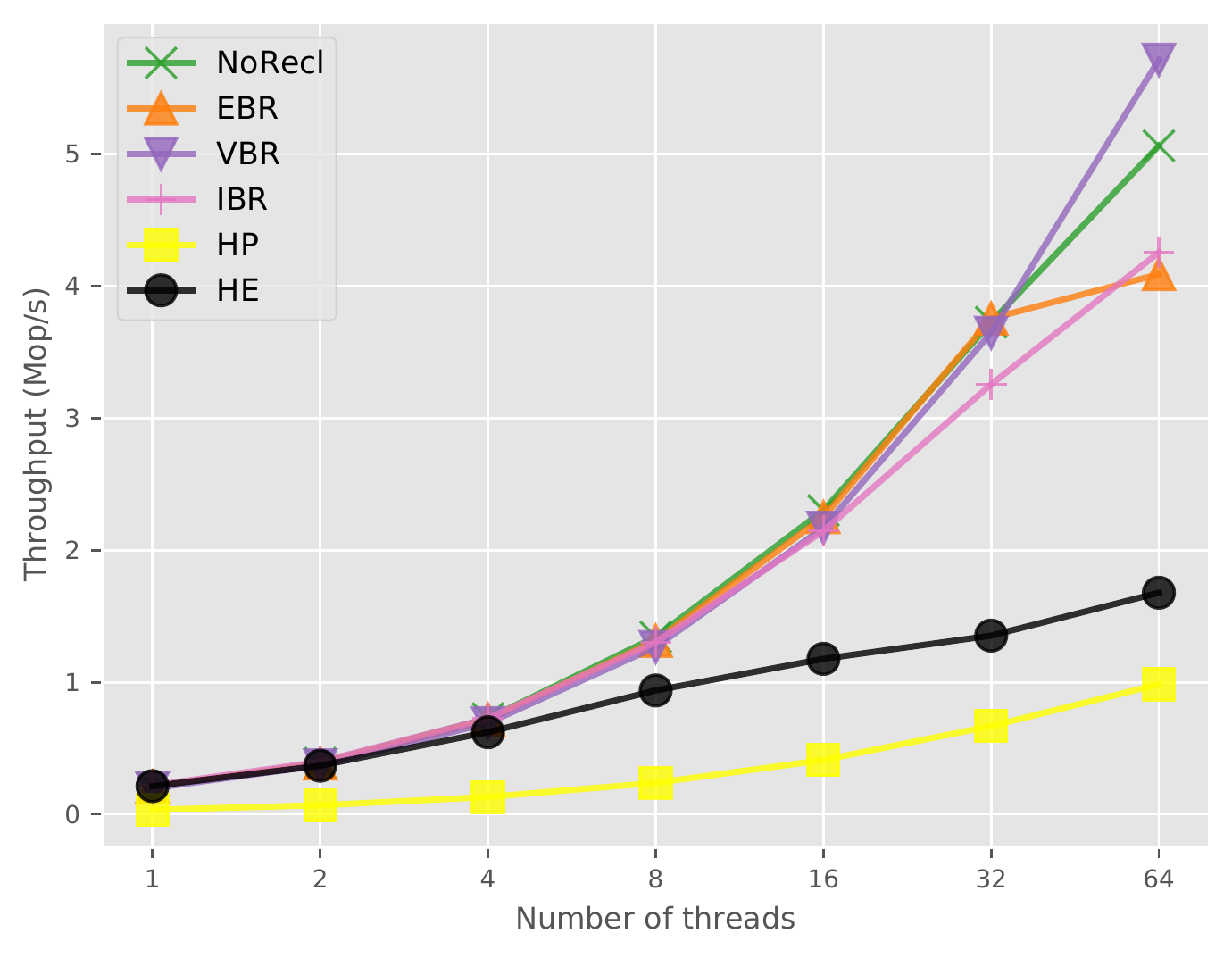}
        \caption{Skiplist. Key range: 10K. 50\% inserts 50\% deletes.}
        \label{fig:skiplist-10k-100}
      \end{subfigure}

      \begin{subfigure}{0.325\textwidth}
      	\includegraphics[width=\textwidth]{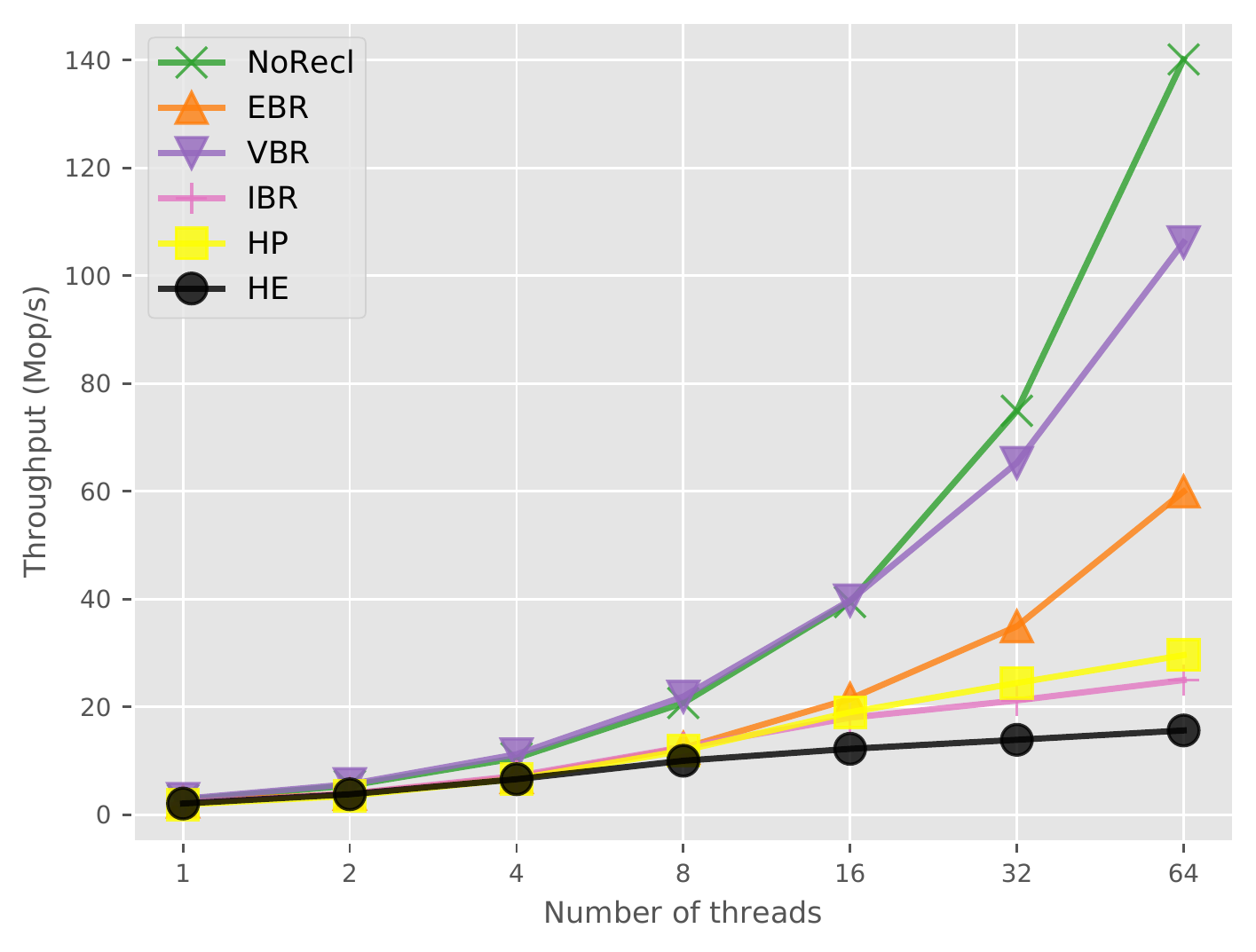}
        \caption{Hash table. Key range: 10M. 10\% inserts 10\% deletes 80\% reads.}
	\label{fig:hashmap-10m-20}
      \end{subfigure}  
\hfill
      \begin{subfigure}{0.325\textwidth}
         \includegraphics[width=\textwidth]{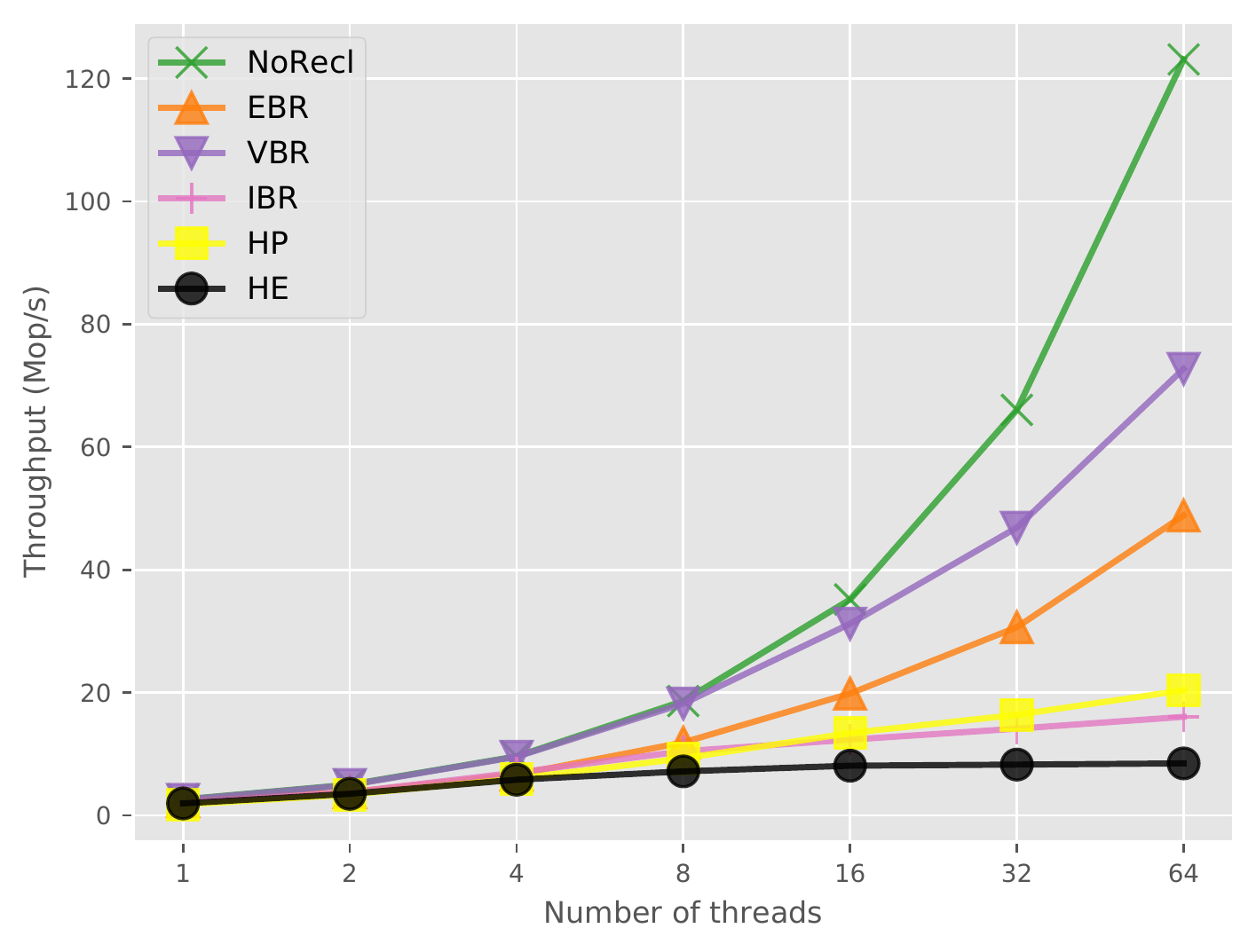}
        \caption{Hash table. Key range: 10M. 25\% inserts 25\% deletes 50\% reads.}
        	\label{fig:hashmap-10m-50}
      \end{subfigure}
\hfill 
      \begin{subfigure}{0.325\textwidth}
      \includegraphics[width=\textwidth]{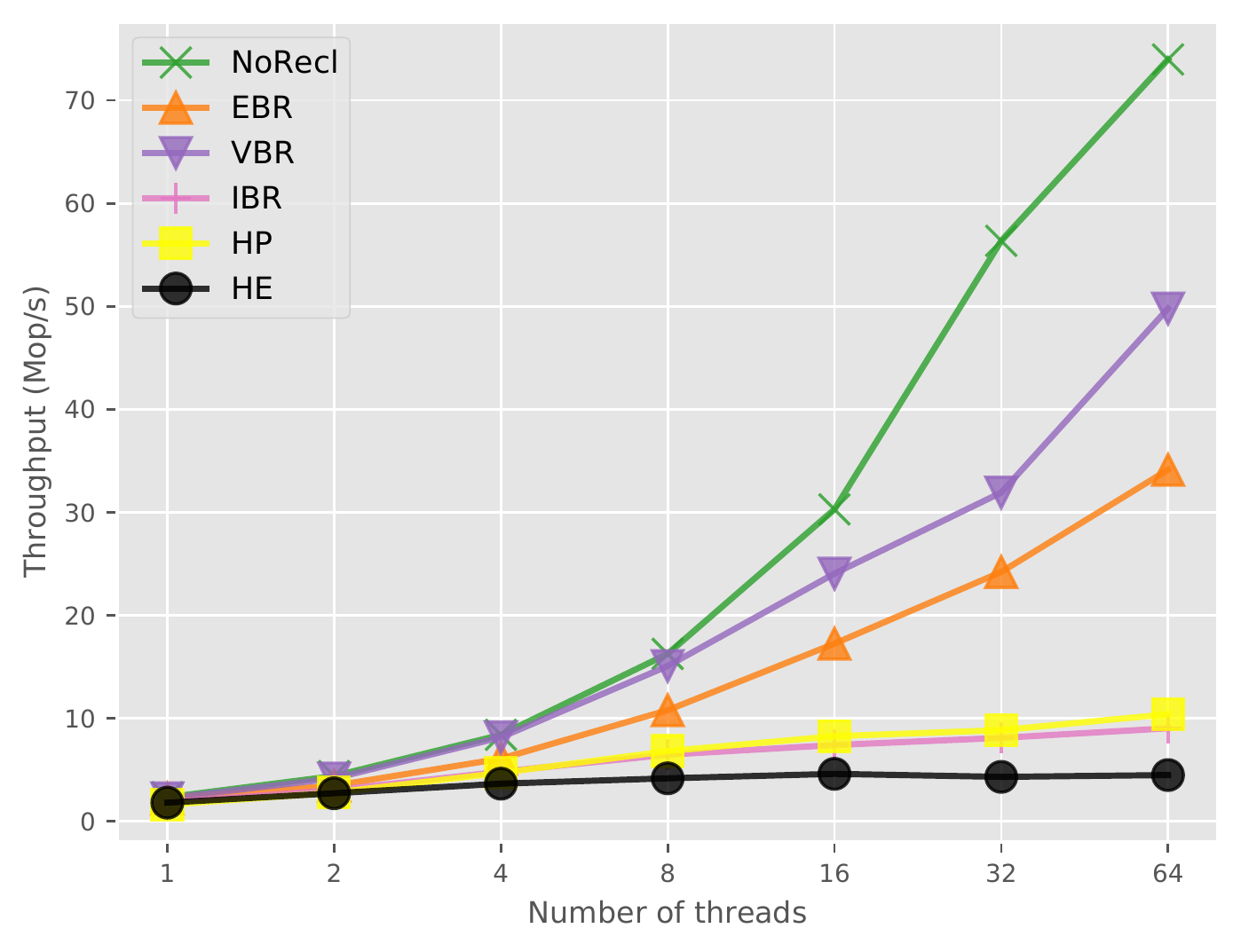}
	  \caption{Hash table. Key range: 10M. 50\% inserts 50\% deletes.}
	  	\label{fig:hashmap-10m-100}
      \end{subfigure}
      
\caption{Throughput evaluation. Y axis: throughput in million operations per second. X axis: \#threads.} \label{fig:throughput}
\end{figure*}

\section{Evaluation} \label{sec-evaluation}

For evaluating throughput of VBR we implemented lock-free variants of a linked-list from~\cite{michael2002high}, a hash table (implemented using the same list), and a skip list. 
We implemented Herlihy and Shavit's lock-free skiplist~\cite{herlihy2020art} with the amendment suggested in~\cite{fraser2004practical} for lock-free reclamation. 
VBR was integrated into all data-structures according to the guidelines presented in Section~\ref{sec-algorithm}.

We evaluated VBR against a baseline execution in which memory is never reclaimed (denoted NoRecl), an optimized implementation of the epoch-based reclamation method~\cite{harris2001pragmatic} (denoted EBR), the traditional hazard pointers scheme~\cite{michael2004hazard} (denoted HP), the hazard eras scheme~\cite{ramalhete2017brief} (denoted HE), and the 2GEIBR variant of the interval-based scheme~\cite{wen2018interval} (denoted IBR).
For all reclamation schemes, we implemented optimized local allocation pools.
Objects were reclaimed once the retire list is full, and were allocated from the shared pool if there were no objects available in the local pool.
As retired objects cannot be automatically reclaimed in EBR, IBR, HE and HP, we tuned their retire list sizes in order to achieve high performance. 
We further tuned the global epoch update rate in EBR, HE and IBR (in VBR it seldom happens and does not require any tuning).

Pointer-based methods require that it would not be possible to reach a reclaimed node by traversing the data structure from a protected node, even if the protected node has been unlinked and retired. 
This prevents schemes like HP, HE, IBR, etc. from being used with some data structures such as Harris's original linked-list~\cite{harris2001pragmatic} or the lock-free binary tree of~\cite{natarajan2014fast,clements2012scalable}. We did not implement binary search trees, because some of the measured competing schemes  cannot support it.

\subsection{Setup} \label{sec-setup}

Our experimental evaluation was performed on an Ubuntu 14.04 (kernel version 4.15.0) OS. The machine featured 4 AMD Opteron(TM) 6376 2.3GHz processors, each with 16 cores (64 threads overall). The machine used 128GB RAM, an L1 cache of 16KB per core, an L2 cache of 2MB for every two cores and an L3 cache of 6MB per processor. The code was compiled using the GCC compiler version 7.5.0 with the -O3 optimization flag. 

We implemented object pools in a similar way to~\cite{cohen2015automatic}, to avoid returning reclaimed objects to the OS. All schemes used that implementation, in which all pools are pre-allocated before the test.
Each test was a fixed-time micro benchmark in which threads randomly call the {\em Insert()}, {\em Delete()} and {\em Search()} operations according to three workload profiles: (1) a search-intensive workload (80\% searches, 10\% inserts and 10\% deletes), (2) a balanced workload (50\% searches, 25\% inserts and 25\% deletes), and (3) an update-intensive workload (50\% inserts and 50\% deletes).
Each execution started by filling the data-structure to half of its range size. 
For the hash-table, the load factor was 1.
We measured the throughput of the above schemes. Each experiment lasted 1 second (as longer executions showed similar results) and was run with a varying number of executing threads. Each experiment was executed 10 times, and the average throughput across all executions was calculated.

\subsection{Discussion} \label{sec-results-discussion}

Figure~\ref{fig:throughput} shows that VBR is faster than other manual reclamation schemes, even when contention is high (Figures~\ref{fig:list-256-20}-\ref{fig:list-256-100}), and in update-intensive workloads (Figures~\ref{fig:list-256-100},~\ref{fig:skiplist-10k-100},~\ref{fig:hashmap-10m-100}).
VBR outperforms epoch-based competitors (EBR, IBR and HE) due to its infrequent epoch updates. 
In order to avoid allocation bottlenecks,  
EBR, IBR and HE require frequent epoch updates. I.e., many global epoch accesses result in cache misses and slow down the allocation process, the reclamation process and the operations executions.
In contrast to these methods, VBR requires infrequent epoch updates. An increment is triggered when the next node to be allocated has a retire epoch equal to the current global epoch.
Most global epoch accesses during VBR hit the cache, and are negligible in terms of performance.
In addition, VBR outperforms its pointer-based competitors (IBR, HE and HP) since it requires neither read nor write fences. 
Specifically, in the hash table implementation, it surpasses the next best algorithm, EBR, by up to 60\% in the search-intensive workload (Figure~\ref{fig:hashmap-10m-20}), by up to 50\% in the balanced workload (Figure~\ref{fig:hashmap-10m-50}), and by up to 40\% in the update-intensive workload (Figure~\ref{fig:hashmap-10m-100}).
In the skiplist implementation, VBR is comparable to the baseline and EBR for the search-intensive and balanced workloads (Figures~\ref{fig:skiplist-10k-20}-\ref{fig:skiplist-10k-50}). For the update-intensive workload, it outperforms the next best algorithm, IBR, by up to 35\% (Figure~\ref{fig:skiplist-10k-100}).
In the linked-list implementation, VBR outperforms the next best algorithm, EBR, by up to 10\%, 11\% and 8\%, respectively (Figures~\ref{fig:list-256-20}-\ref{fig:list-256-100}).
For cache locality reasons, VBR outperforms the baseline execution for all linked-list and skiplist workloads and for all key ranges (Figures~\ref{fig:list-256-20}-\ref{fig:skiplist-10k-100}). As cache locality plays no role in the hash table implementation, VBR has no advantage against the baseline for this data-structure. VBR's throughput is around 75\% of the baseline for the search-intensive workload, around 60\% of the baseline for the balanced workload and around 65\% of the baseline for the update-intensive workload.
\section{Related Work} \label{sec-related}

Much related work was already discussed in the introduction or in the evaluation section (Section~\ref{sec-evaluation}). There are many memory management schemes, and in the evaluation we compared VBR against highly efficient schemes whose code is available (We could not compare against all).  
Previous works~\cite{kang2020marriage,singh2021nbr} defined a set of desirable reclamation properties. Safe reclamation algorithms should be fast (show low latency and high throughput), robust (the number of unreclaimed objects should be bounded), widely applicable and self-contained (not relying on external features).

Two novel methods initiated the study of memory reclamation for concurrent data structures. Pointer-based schemes~\cite{dice2016fast,michael2004hazard,herlihy2005nonblocking} protect objects that are currently accessed by placing a hazard pointer referencing them. These methods are often slow and not always applicable. Epoch-based schemes~\cite{harris2001pragmatic,brown2015reclaiming} (and quiescent state based schemes~\cite{hart2007performance}) wait until all threads move to the next operation to make sure that an unlinked node cannot be further accessed. Such methods  
are sometimes not robust, and most hybrids of the two approaches~\cite{nikolaev2020universal,wen2018interval,ramalhete2017brief,brown2015reclaiming,balmau2016fast} are either not always applicable, or rely on special hardware support.
Drop-the-anchor~\cite{braginsky2013drop} extends HP by protecting only some of the traversed nodes and reclaiming carefully, yet it is not easily applicable and has only been applied to linked-lists.

Another approach, which is neither fast nor robust, is reference counting based reclamation~\cite{blelloch2020concurrent,detlefs2002lock,gidenstam2008efficient,herlihy2005nonblocking}. This scheme keeps an explicitly count of the number of pointers to each object, and reclaims an object with a zero count. Such schemes require a way to break cyclic structures of retired objects, and are often slow or rely on hardware assumptions. This scheme has a wait-free (and in particular, robust) variant~\cite{sundell2005wait} and a lock-free variant~\cite{correia2021orcgc}, but they are not fast, since they require multiple expensive synchronization fences.

Many schemes rely on Hardware-specific or OS features, or 
affect the execution environment. ThreadScan~\cite{alistarh2018threadscan}, StackTrack~\cite{alistarh2014stacktrack}, and Dragojevi{\'c} et al.~\cite{dragojevic2011power} rely on  transactional memory for the reclamation, which is not always available in hardware or may be slow in a software implementation. 
DEBRA+~\cite{brown2015reclaiming} and NBR~\cite{singh2021nbr} use OS signals in order to wake unresponsive threads and allow lock-free progress even for EBR-based methods, if the OS signal implementration is lock-free. Morrison and Afek~\cite{morrison2015temporally} avoid memory fences by waiting for a short while. This relies on specific hardware properties that might not always be available. 
Dice et. al.~\cite{dice2016fast} and PEBR~\cite{kang2020marriage} avoid costly fences by relying on the existence of process-wide memory fences. QSense~\cite{balmau2016fast} requires control of the OS scheduler. In particular, to make hazard pointers visible, threads are periodically swapped out.


\bibliography{refs}

\appendix

\section{Correctness} \label{sec-correctness}

Safe memory reclamation often requires that before reclaiming a retired node, the reclaimer ensures that no other threads hold local pointers to this retired node, as they may be later dereferenced.
However, VBR does not return freed space to the operating system, and this requirement can be relaxed.
In this section we prove Theorem~\ref{theorem-maintain}. I.e., we prove that VBR 
maintains the original program's linearizability and lock-freedom guarantee. 
In Section~\ref{sec-linearizability}, we prove that linearizability is maintained (see Lemma~\ref{lemma-ds-linearizable}), and in Section~\ref{sec-lock-freedom}, we prove that lock-freedom is maintained as well (see Lemma~\ref{lemma-ds-lock-free}).

Our correctness proof relates to applying the interface introduced in Section~\ref{sec-algorithm}. Consequently, it applies to specific data-structures. However, it can be easily shown that the same invariants hold for every lock-free linearizable data-structure implementation, for which our assumptions hold (see Section~\ref{sec-assumptions}).

\subsection{VBR Maintains Linearizability} \label{sec-linearizability}

In this section we prove Lemma~\ref{lemma-ds-linearizable}:

\begin{lemma} \label{lemma-ds-linearizable}
Given a lock-free linearizable data-structure implementation, that satisfy all of the assumptions presented in Section~\ref{sec-assumptions}, the implementation remains linearizable after integrating it with VBR, according to the modifications described in Sections~\ref{sec-allocator}-\ref{sec-modifications}.
\end{lemma}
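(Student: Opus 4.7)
The plan is to establish a simulation between each VBR-integrated execution $E$ and a ``clean'' execution $E^\star$ of the original reclamation-free data-structure, and then transfer the existing linearizability argument from $E^\star$ back to $E$. The backbone is a set of version/epoch invariants that I would prove by induction on the length of the execution: (i) for every node $n$ currently in stages~\ref{node-allocated}--\ref{node-unlinked}, every version of a mutable field of $n$ lies in $[n.\mathit{birth\_epoch}, e]$, where $e$ is the current global epoch; (ii) once $n$ enters stage~\ref{node-retired}, every such version is $\le n.\mathit{retire\_epoch}$; (iii) if the memory cell of $n$ is re-used by a later allocation producing a node $n'$, then $n'.\mathit{birth\_epoch} > n.\mathit{retire\_epoch}$ (this is exactly what the check in line~\ref{alloc-if}, the CAS in line~\ref{alloc-update-e}, and the subsequent rollback enforce). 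Together these pin down a strict total order of ``incarnations'' of each memory cell, labelled by disjoint epoch intervals, which is what the versioning machinery is designed to expose.

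Next I would handle rollbacks. The formalization of rollback-unsafe steps (Definition~\ref{definition-rollback-safe}) and the checkpoint placement rule (a checkpoint immediately follows every rollback-unsafe successful update and every operation invocation) give the key property: between any checkpoint $C$ and the corresponding rollback of thread $T$, all steps performed by $T$ are rollback-safe. By Definition~\ref{definition-rollback-safe} this means $\mathrm{EXT}(E_i)=\mathrm{EXT}(E_{i-1})$ at each such step. A straightforward induction on the number of rolled-back segments then lets me excise all of them from $E$ while preserving $\mathrm{EXT}$ and, in particular, preserving the history $H$ restricted to operation invocations and responses. Call the resulting execution $E'$; it has the same history as $E$ but contains no aborted attempts.

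In $E'$ I would then show step-by-step correspondence with an original execution $E^\star$. For reads (lines~\ref{get-ref-call}--\ref{get-key-return}), the epoch comparison at line~\ref{read-if} (and its analogue in \textit{getKey}) ensures that any value actually returned was read while the global epoch agreed with the thread's locally cached epoch; by invariant~(iii) no reclamation-and-reuse affecting the target node could have happened in that window, so the read is equivalent to a direct read of a genuine live node. For the two update methods, I would show that a successful WCAS in line~\ref{cas-wcas} or line~\ref{invalidate-wcas} corresponds bijectively to a successful CAS in $E^\star$ on the genuine node $n$ with its genuine expected successor $\mathit{exp}$, both at their intended incarnations. The expected version $\max\{n\_b,\mathit{exp}\_b\}$ written by the CAS encodes both birth epochs: the version can equal this value only during the incarnation of $n$ and $\mathit{exp}$ identified by those birth epochs, by invariants~(i) and~(ii). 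Once this correspondence is in place, I adopt for each completed operation in $E$ the linearization point provided by the original linearizability proof for the unmodified data structure, obtaining a legal sequential order. Since Assumption~\ref{assumption-mutable} forbids using the content of stale pointers past a rollback-unsafe update without re-validation, no observable deviation from the original sequential semantics is introduced.

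The main obstacle I anticipate is the update correspondence. A thread reaches line~\ref{cas-wcas} carrying local values $n\_b, \mathit{exp}\_b$ that were installed in its local state potentially long before the CAS, possibly across intervening retirements of $n$ or $\mathit{exp}$. I need to rule out the scenario that the WCAS succeeds even though $n$ has been retired and re-allocated between the time $n\_b$ was read and the CAS. The argument proceeds by combining invariant~(iii) with the observation that any version on a live field of $n$ during a later incarnation is $\ge n'.\mathit{birth\_epoch} > n.\mathit{retire\_epoch} \ge n\_b$, so the expected version $\max\{n\_b,\mathit{exp}\_b\}$ cannot match. A symmetric argument works for $\mathit{exp}$, since its birth epoch is encoded into the pointer's version through line~\ref{cas-max-exp}. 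The remaining cases (the target pointer has been marked but not retired, or \textit{exp} has been logically deleted) are handled by Assumption~\ref{assumption-invalidate} together with Assumption~\ref{assumption-life}(\ref{node-unlinked}), which forbid re-linking a deleted node and ensure that invalidation makes the field immutable, so any successful WCAS must witness an un-invalidated predecessor pointing to a live successor. Once this claim is proven, the rest of the proof is routine bookkeeping.
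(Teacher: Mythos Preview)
Your proposal is correct and follows essentially the same strategy as the paper: establish version/epoch invariants (your (i)--(iii) correspond to the paper's Claims~\ref{claim-alloc-safe} and~\ref{claim-version-correct}), argue that each VBR wrapper (\textit{getNext}, \textit{getKey}, \textit{isMarked}, \textit{update}, \textit{mark}) behaves like its reclamation-free counterpart (the paper's Claims~\ref{claim-is-marked-correct}--\ref{claim-mark-postconditions}), and construct a step-by-step simulation to a reclamation-free execution with the same history.

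The one structural difference worth noting is the order in which rollbacks are handled. You excise rolled-back segments from the VBR execution $E$ first, obtaining $E'$, and only then simulate $E'$ by an original execution $E^\star$. The paper instead keeps the rollbacks: it maps $E$ (including rollback steps) directly to a reclamation-free execution $E^{RF}$ that also contains rollback steps, and separately proves (Corollary~\ref{corollary-rollback-maintain}) that augmenting the reclamation-free implementation with checkpoints and arbitrary rollbacks preserves linearizability. Both routes work, but the paper's ordering is cleaner for one reason: in a VBR execution, a rolled-back segment of thread $T$ may contain shared-memory writes that are \emph{not} part of the original algorithm---the epoch-increment CAS in line~\ref{alloc-update-e}, setting the retire epoch in line~\ref{retire-update}, and the allocation-pool manipulations of Appendix~\ref{sec-checkpoint-extra}. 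Excising $T$'s steps while leaving other threads' steps (which may have observed $T$'s epoch increment and rolled back because of it) can yield an $E'$ that is not itself a valid execution, so your subsequent step-by-step correspondence $E' \to E^\star$ would need additional justification. By deferring the rollback argument to the reclamation-free side---where between checkpoints a thread performs only reads, local steps, and failed CASes---the paper sidesteps this subtlety entirely.
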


Our motivation is that linearizability is a {\em local} property~\cite{herlihy1990linearizability}. In our context, this means the following: Given a linearizable data-structure implementation (i.e., all possible executions are represented by linearizable respective histories), if each atomic instruction is replaced by its linearizable implementation,
then the overall data-structure implementation remains linearizable. 

Using this concept, in Section~\ref{sec-linearization-points} we prove some basic claims regarding the effect of each method from Figure~\ref{fig:interface}. We do not set actual linearization points, since methods that  terminate by a rollback step actually do not take effect at all.
In Section~\ref{sec-reclamation-free-executions} we show that integrating a linearizable reclamation-free implementation with checkpoints and rollback steps maintains linearizability.
Using the claims proved in Sections~\ref{sec-linearization-points}-\ref{sec-reclamation-free-executions}, in Sections~\ref{sec-inverse} we prove that every VBR-integrated execution has the same history as some linearizable reclamation-free execution, by constructing the respective execution inductively. As every VBR-integrated execution has a linearizable history, Lemma~\ref{lemma-ds-linearizable} derives.

\subsubsection{Basic Linearizability-Related Claims} \label{sec-linearization-points}

We first prove that our allocation mechanism is safe. I.e., a node is never allocated from a memory address before the previous node, allocated from the same address, is retired.
In order to show that our allocation mechanism is safe, we prove Claim~\ref{claim-alloc-safe}. We first make some basic observations:

\begin{observation} \label{observation-e-increase}
The global epoch $e$ never decreases.
\end{observation}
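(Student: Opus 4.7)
The plan is to prove Observation~\ref{observation-e-increase} by a straightforward code inspection argument, identifying all write accesses to the shared variable $e$ and showing that each one can only increase it.

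First, I would scan every method in Figure~\ref{fig:interface} and the surrounding prose to enumerate the instructions that can modify $e$. The methods \textbf{retire}, \textbf{getNext}, \textbf{getKey}, \textbf{isMarked}, \textbf{update}, and \textbf{mark} only \emph{read} $e$ (via \texttt{e.get()}) but never write to it. The only write to $e$ anywhere in the scheme appears in the \textbf{alloc} method, at line~\ref{alloc-update-e}, which executes \texttt{CAS(\&e, my\_e, my\_e + 1)}. In particular, $e$ is modified only through CAS, never through a plain store.

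Second, I would argue by induction on the length of the execution that $e$ is non-decreasing. In the initial state $e$ takes its initialization value. For the inductive step, consider a step $s_i$ that modifies $e$. By the enumeration above, $s_i$ must be a successful CAS at line~\ref{alloc-update-e}, which atomically replaces the contents of $e$ by one plus its previous value. An unsuccessful CAS leaves $e$ untouched. Since no other instruction in the integrated code writes to $e$, each modification strictly increases $e$ by exactly one, which yields the claim.

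I do not foresee any real obstacle here; the observation is a direct consequence of the syntactic structure of the code. The only subtlety to be careful about is to confirm that no method implicitly modifies $e$ (for instance via the checkpoint rollback mechanism described in Section~\ref{sec-checkpoints}); I would check that the rollback only restores a thread's local variables and does not touch shared state, so it cannot decrease $e$ either.
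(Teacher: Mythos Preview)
Your proposal is correct; the paper itself gives no proof for this observation and simply states it as self-evident from code inspection, which is exactly the argument you spell out in detail.
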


\begin{observation} \label{observation-node-epoch-changes}
The node's birth and retire epochs only change during allocation and retirement.
\end{observation}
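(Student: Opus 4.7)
The plan is to prove Observation~\ref{observation-node-epoch-changes} by a direct code inspection argument, since the VBR interface is fully exhibited in Figure~\ref{fig:interface} and the birth/retire epoch fields are only ever touched through that interface. First I would recall Assumption~\ref{assumption-cas} together with the interface specification in Section~\ref{sec-algorithm}: all shared-memory modifications to VBR-managed nodes go through the routines listed in Figure~\ref{fig:interface}, so it suffices to scan those routines for writes to \texttt{birth\_epoch} or \texttt{retire\_epoch}.

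Next I would enumerate the occurrences. In \texttt{alloc}, line~\ref{alloc-update-birth} assigns \texttt{n\,$\rightarrow$\,birth\_epoch := my\_e} and line~\ref{alloc-update-retire} assigns \texttt{n\,$\rightarrow$\,retire\_epoch := $\bot$}; both writes occur after the rollback-or-commit decision of lines~\ref{alloc-if}--\ref{alloc-return-cp}, so on any allocating path they happen exactly during the allocation of \texttt{n}. In \texttt{retire}, line~\ref{retire-update} assigns \texttt{n\,$\rightarrow$\,retire\_epoch := e.get()}, and this is the only write. A direct inspection of the remaining routines---\texttt{getNext}, \texttt{getKey}, \texttt{isMarked}, \texttt{update}, and \texttt{mark}---shows that they only read \texttt{birth\_epoch} (to compute versions, compare identity, or decide validity) and modify at most the \texttt{next} field via WCAS; none of them issue any write whose target is \texttt{birth\_epoch} or \texttt{retire\_epoch}.

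Finally I would invoke Assumption~\ref{assumption-cas} once more to rule out any stray modification outside the interface: every update to a node is performed by a CAS or WCAS, and the only such CAS/WCAS instructions in the code target the \texttt{next} field (lines~\ref{alloc-cas}, \ref{cas-wcas}, \ref{invalidate-wcas}) or the global epoch counter (line~\ref{alloc-update-e}). Hence the only execution steps that can alter a node's \texttt{birth\_epoch} or \texttt{retire\_epoch} lie inside \texttt{alloc} (for both fields) and \texttt{retire} (for \texttt{retire\_epoch}), which is exactly what the observation asserts.

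The only subtlety, and the single place where the argument is not a completely mechanical scan, is ensuring that the simple (non-CAS) writes in lines~\ref{alloc-update-birth}--\ref{alloc-update-retire} are consistent with Assumption~\ref{assumption-cas}'s prohibition of plain writes. The resolution, which I would state explicitly, is that at these lines the node \texttt{n} has just been taken off the thread's local allocation pool and is therefore not yet reachable by any other thread; the assumption restricts shared-data-structure modifications, while the initializing writes here act on a node that is still private to the allocating thread. This is the only step that requires a sentence of justification; everything else reduces to pointing at the relevant line numbers in Figure~\ref{fig:interface}.
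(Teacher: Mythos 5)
Your code-inspection argument is the right justification here; the paper itself states this as an unproved observation, so there is no ``official'' proof to diverge from, and your enumeration of the writes in \texttt{alloc} (lines~\ref{alloc-update-birth}--\ref{alloc-update-retire}) and \texttt{retire} (line~\ref{retire-update}) together with the check that the remaining interface routines only read these fields is exactly the implicit reasoning the paper relies on. Your resolution of the tension with Assumption~\ref{assumption-cas} (the plain writes act on a node that is still private to the allocating thread, or already unlinked in the case of \texttt{retire}) is also sound.

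One small incompleteness: your opening premise that the epoch fields are only ever touched through the routines of Figure~\ref{fig:interface} is not literally true. Appendix~\ref{sec-checkpoint-extra} describes an additional code path, executed right before a rollback step, in which a thread writes a node's \texttt{retire\_epoch} (setting it to the current global epoch when it is still $\bot$) and appends the node to its retired list, outside the \texttt{retire()} method proper. The observation survives because the paper explicitly treats that write as a retirement (``it retires $n$ right before rolling back''), and the proof of Claim~\ref{claim-alloc-safe} folds it back into the standard case; but a complete scan should mention this path rather than stopping at Figure~\ref{fig:interface}. With that one sentence added, your argument is complete.
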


\begin{claim} \label{claim-alloc-safe}
Let $E$ be a VBR-integrated execution, and suppose that two different {\em alloc()} calls return nodes $n_1, n_2$, allocated from the same memory address. W.l.o.g., assume that $n_1$ is returned first. Then:

\begin{enumerate}
    \item If $n_1$ never becomes reachable, then it is appended to an allocation list before $n_2$ is allocated.
    \item If $n_1$ ever becomes reachable, then $n_1$ is retired exactly once.
    \item If $n_1$ ever becomes reachable, then $n_1$'s retirement completes before $n_2$ is allocated.
    \item $n_1$'s birth and retire epochs are smaller than $n_2$'s birth epoch.
\end{enumerate}
\end{claim}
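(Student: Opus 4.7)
The plan is to prove all four parts in turn by tracing the lifecycle of a memory address through the alloc/retire machinery described in Figure~\ref{fig:interface}, leveraging the type-preserving allocator, Observation~\ref{observation-e-increase} (monotonicity of the global epoch $e$), and Observation~\ref{observation-node-epoch-changes} (birth/retire epochs change only inside alloc/retire). The central observation is that the only pathway for the same memory address to be returned by two different \texttt{alloc()} calls is: one thread retrieves the node at line~\ref{alloc-retrieve}, either rolls back at line~\ref{alloc-return-cp} (re-inserting it via line~\ref{alloc-realloc}) or proceeds to the end of \texttt{alloc()}; if it proceeds, the node eventually must find its way back onto some allocation list before \texttt{alloc()} can return it again.

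For part~(2), I would invoke Assumption~\ref{assumption-life}: if $n_1$ becomes reachable, it eventually enters the Retired state, so at least one \texttt{retire()} call acts on $n_1$. The guard on line~\ref{retire-if-retired} (which checks both \texttt{birth\_epoch} and \texttt{retire\_epoch} against the captured values) together with the life-cycle assumption that a node is retired at most once per lifetime ensures exactly one effective retirement. For part~(3), I would observe that the only mechanism that places a node on \texttt{alloc\_list} (beyond initialization and the rollback path of part~(1)) is the flush of \texttt{retired\_list} into the local allocation pool described in Section~\ref{sec-allocator}. Since \texttt{retire()} completes at line~\ref{retire-retire} before $n_1$ can be appended to \texttt{retired\_list} and eventually to \texttt{alloc\_list}, the retirement happens-before $n_2$'s retrieval at line~\ref{alloc-retrieve}.

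For part~(1), I would do a case split on how $n_1$ can fail to become reachable. Either the allocating thread took the rollback branch of \texttt{alloc()}, in which case line~\ref{alloc-realloc} explicitly reinstalls $n_1$ in \texttt{alloc\_list} before line~\ref{alloc-return-cp}; or \texttt{alloc()} returned $n_1$ but the caller never linked it in. In the latter case, the robustness mechanism sketched at the end of Section~\ref{sec-checkpoints} and made explicit in Appendix~\ref{sec-checkpoint-extra} guarantees the abandoned node is re-added to an allocation list before the thread resumes normal operation. Either way, $n_1$ lies on an allocation list before any subsequent \texttt{alloc()} can pick up the same address.

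For part~(4), the key is the test on line~\ref{alloc-if} executed during $n_2$'s successful allocation: the thread exits that branch only when $n_1.\text{retire\_epoch} < \text{my\_e}$, and then sets $n_2.\text{birth\_epoch} := \text{my\_e}$ at line~\ref{alloc-update-birth}. Hence $n_1.\text{retire\_epoch} < n_2.\text{birth\_epoch}$. By Observation~\ref{observation-e-increase} and the fact that $n_1.\text{birth\_epoch}$ was read from $e$ strictly before $n_1.\text{retire\_epoch}$ was read from $e$ (by parts~(2) and~(3), retirement happened after the allocation), we also get $n_1.\text{birth\_epoch} \leq n_1.\text{retire\_epoch}$, closing the chain. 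The main obstacle I anticipate is part~(4): I must argue carefully about the CAS race at line~\ref{alloc-update-e}, showing that even when multiple concurrent \texttt{alloc()} invocations observe a stale $\text{my\_e}$, only those that later re-read $e$ (via the rollback) can proceed past line~\ref{alloc-if}, so the strict inequality on birth/retire epochs is preserved regardless of interleaving. Part~(1) is secondary in difficulty but relies on material deferred to Appendix~\ref{sec-checkpoint-extra}, so the cleanest presentation is to cite that appendix for the abandoned-node handling rather than prove it inline.
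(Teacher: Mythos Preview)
Your decomposition and arguments track the paper's proof closely; both lean on Assumption~\ref{assumption-life}, the guard at line~\ref{retire-if-retired}, and the abandoned-node handling of Appendix~\ref{sec-checkpoint-extra}. One minor point: in part~(1), your first sub-case is vacuous under the claim's hypotheses, since the \texttt{alloc()} call is assumed to \emph{return} $n_1$, so the rollback at line~\ref{alloc-return-cp} was not taken inside that call; only your second sub-case (the node is returned but never linked, then recovered via Appendix~\ref{sec-checkpoint-extra}) is live.

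There is a real gap in part~(4). Your chain $n_1.\text{birth\_epoch} \leq n_1.\text{retire\_epoch} < n_2.\text{birth\_epoch}$ rests on parts~(2)--(3), which apply only when $n_1$ becomes reachable. In the complementary case (part~(1)), $n_1$'s retire epoch stays $\bot$, so neither the test at line~\ref{alloc-if} nor the chain gives you $n_1.\text{birth\_epoch} < n_2.\text{birth\_epoch}$; you still owe a direct argument for that inequality. The paper handles it by noting that in this case $n_1$ is re-appended to the allocation list just before a checkpoint rollback, after which the owning thread refreshes \texttt{my\_e} to a strictly larger value (the rollback was triggered precisely because the epoch moved), so the next successful \texttt{alloc()} on that address assigns a strictly larger birth epoch. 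Your closing remarks about the CAS race at line~\ref{alloc-update-e} are about a different interleaving and do not cover this case.
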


\begin{claimproof}
W.l.o.g., let's assume that $n_2$ is the first node allocated from the memory address $n_1$ has been previously allocated from. By transitivity, the claim holds for any future allocation from this address. 

\begin{enumerate}
    \item Assume that $n_1$ never becomes reachable. By Assumption~\ref{assumption-life}, it is never retired and therefore, is never appended to a retired nodes list. Since $n_2$ is allocated from the same address, $n_1$ must have been re-appended to an allocation list (for more details, see Appendix~\ref{sec-checkpoint-extra}), before $n_2$ is allocated.

    \item Since the rolling-back mechanism may result in calling the {\em retire()} method more than once per node, extra precautions are added. After a new node is allocated, its retire epoch is initialized to $\bot$ (see line~\ref{alloc-update-retire}). A node is retired (i.e., added to a thread's retired nodes list in line~\ref{retire-retire}) only if its retire epoch contains $\bot$ and its birth epoch is the one received as input (otherwise, the {\em retire()} call returns in line~\ref{retire-if-retired}), and it does not contain $\bot$ after being retired (see line~\ref{retire-update}). 
    In addition, according to Assumption~\ref{assumption-life}, a single thread is in charge of retiring every node. Notice that, as can be shown inductively, $n_1$ cannot be accidentally retired by a thread trying to retire a node previously allocated from the same memory address. Therefore, $n_1$ is retired at most once.
    
    \item When $n_2$ is allocated, it is popped out of the thread's allocation list (line~\ref{alloc-retrieve}). Therefore, the previous node, allocated from the same memory address (i.e., $n_1$) must have been previously retired and added to this allocation list.
    
    \item If $n_1$ is never retired, then it is appended to an allocation list (for more details, see Appendix~\ref{sec-checkpoint-extra}), right before a rollback step. Since its birth epoch is set before this rollback step, by Observation~\ref{observation-node-epoch-changes}, its birth epoch is at most the epoch value, as saved by the executing thread. Since this value is updated after rolling-back (see Section~\ref{sec-checkpoints}), $n_2$, allocated by the same thread, has a bigger birth epoch. Since $n_1$'s retire epoch is $\bot$ in this case, $n_2$'s birth epoch is bigger than $n_1$'s retire epoch as well.
    
    Otherwise, if $n_1$ is retired, then $n_1$'s retire epoch is set during its retirement (in line~\ref{retire-update}), which completes before $n_2$'s allocation. When $n_2$ is allocated, if $n_1$'s retire epoch equals the thread's copy of the global epoch, then $n_2$ is not allocated (see lines~\ref{alloc-if}-~\ref{alloc-return-cp}). Since $n_2$'s birth epoch is set to the global epoch only after making sure that it is no longer equal to $n_1$'s retire epoch, $n_2$'s birth epoch is strictly bigger than $n_1$'s retire epoch (and by transitivity, bigger than its birth epoch).
\end{enumerate}

There is one extra case that still needs to be handled. 
Suppose that $n_1$ is retired right before the rollback step (see Appendix~\ref{sec-checkpoint-extra}). In this case, like in a standard retirement, if $n_1$'s retire epoch is $\bot$, then the retiring thread sets its retire epoch to be the current global epoch and appends it to its retired nodes list. Therefore, this case falls back to the standard case, and is already handled above.

\end{claimproof}

Given Claim~\ref{claim-alloc-safe}, we are guaranteed that the life periods of any two nodes, allocated from the same memory address, do not overlap. 
Next, we are going to prove some claims for each method from Figure~\ref{fig:interface}. This claims will later be used in Section~\ref{sec-inverse}, when showing that any VBR-integrated execution has the same history as a linearizable reclamation-free one.

\subparagraph{The {\em isMarked()} method}

Our {\em isMarked()} method always returns a correct answer, as it is not affected by global epoch changes, and never performs rollback steps. This concept is captured in Claim~\ref{claim-is-marked-correct}, that can actually be used as a linearization proof for this method.

\begin{claim} \label{claim-is-marked-correct}
Assume that the input parameters to an {\em isMarked()} call by a thread T are a pointer to a node $n$, allocated from a memory address $a$, and its birth epoch $b$. Then:
\begin{enumerate}
    \item If the method returns TRUE in line~\ref{is-valid-return-false}, then $n$ is already marked at this point.
    \item If the method returns in line~\ref{is-valid-return-res}, then it returns TRUE iff $n$ is marked when its {\em next} pointer is read in line~\ref{is-valid-is-valid}.
\end{enumerate}
\end{claim}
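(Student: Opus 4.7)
The claim splits naturally into two cases matching the two return statements of isMarked(). My approach would be to combine Claim~\ref{claim-alloc-safe} (which governs how birth epochs evolve at a fixed memory address) with Assumption~\ref{assumption-life} (which forces invalidation before retirement). Since isMarked() is the one method in Figure~\ref{fig:interface} that never triggers a rollback and never consults the global epoch, both parts reduce to local reasoning about the two reads from $n$'s memory.

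For part (1), I would argue that if $n \rightarrow birth\_epoch \neq n\_b$ at line~\ref{is-valid-if}, then by Observation~\ref{observation-node-epoch-changes} the birth epoch at $n$'s address $a$ has changed since $n$ was allocated, and it changes only when a fresh node is allocated at $a$. Applying Claim~\ref{claim-alloc-safe}(2)-(3), $n$ must therefore have been retired before that reallocation took place. Assumption~\ref{assumption-life} then forces all of $n$'s mutable fields, including its next pointer, to be invalidated before retirement — i.e., marked — so $n$ is indeed already marked at the moment line~\ref{is-valid-if} evaluates its guard.

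For part (2), the method returns $res$ at line~\ref{is-valid-return-res} precisely when $n \rightarrow birth\_epoch = n\_b$ at line~\ref{is-valid-if}. By Claim~\ref{claim-alloc-safe}(4), successive birth epochs at address $a$ are strictly increasing, so this equality rules out any intervening reallocation at $a$ between $n$'s allocation and the read in line~\ref{is-valid-if}. In particular, address $a$ still holds $n$ at the earlier moment when line~\ref{is-valid-is-valid} sampled the next pointer, and so the bit stored in $res$ faithfully records whether $n$'s next pointer was marked at that instant. The main delicacy I anticipate is the degenerate case where $n$ was allocated but never became reachable, so Assumption~\ref{assumption-life} does not literally apply; here I would fall back on Claim~\ref{claim-alloc-safe}(1), noting that any pointer to such an $n$ held by the caller must have come from a stale read, so the TRUE returned at line~\ref{is-valid-return-false} is still safe in the sense that it is later discarded by the rollback discipline installed in Section~\ref{sec-checkpoints} and therefore does not compromise linearizability.
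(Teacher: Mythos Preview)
Your proposal is correct and follows essentially the same approach as the paper: for part~(1), a changed birth epoch forces $n$ to have been retired (the paper cites Observation~\ref{observation-node-epoch-changes} directly, you route through Claim~\ref{claim-alloc-safe}(2)--(3), but the content is the same), and Assumption~\ref{assumption-life} then yields that $n$ is invalid; for part~(2), Claim~\ref{claim-alloc-safe} guarantees no reallocation at $a$ between line~\ref{is-valid-is-valid} and line~\ref{is-valid-if}, so $res$ is faithful.

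One remark: drop the final paragraph on the ``degenerate case.'' It is unnecessary, since a node that never became reachable is never exposed through \texttt{getNext()}, so no thread other than its allocator can hold the pair $(n,b)$ as input to \texttt{isMarked()}. More importantly, your proposed handling is circular: you justify a possibly unsound return value by appealing to the rollback discipline, but the correctness of \texttt{isMarked()} is itself an ingredient in establishing that rollbacks preserve linearizability (it is invoked in the proof of Claim~\ref{claim-get-next-correct}). The paper simply does not raise this case, and neither should you.
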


\begin{claimproof}
Recall that $b$ is $n$'s birth epoch, as originally read by T. 
If the method returns in line~\ref{is-valid-return-false}, then by Observation~\ref{observation-node-epoch-changes}, $n$ is already retired when the birth epoch is read in line~\ref{is-valid-return-false}. In particular, by Assumption~\ref{assumption-life}, $n$ is considered as marked (i.e., invalid) at this point, and the method indeed returns TRUE.

Otherwise, if the method returns in line~\ref{is-valid-return-res}, then by Claim~\ref{claim-alloc-safe}, $n$ is not reclaimed before executing line~\ref{is-valid-return-false}. Therefore, the method returns TRUE iff the {\em isMarked()} call in line~\ref{is-valid-is-valid} returns TRUE. I.e., it returns TRUE iff the {\em next} field, read in line~\ref{is-valid-is-valid}, is marked, and the claim holds.
\end{claimproof}

Recall that part~\ref{node-invalid} of Assumption~\ref{assumption-life} holds for every mutable field.
Therefore, although Claim~\ref{claim-is-marked-correct} relates to the {\em next} field specifically, it holds for any mutable node field.

\subparagraph{The {\em getKey()} method}

Our {\em getKey()} method may result in a rollback step to the last checkpoint. Therefore, we only prove that it returns the correct answer when it returns in line~\ref{get-key-return}.
Recall that, as we describe in Section~\ref{sec-read}, this method is always called before the next checkpoint installation. Therefore, it is safe to assume that the executing thread does not update its local copy of the global epoch between installing the pointer and calling the {\em getKey()} method.

\begin{claim} \label{claim-get-key-correct}
Assume that the input parameter to a {\em getKey()} call by a thread T is a pointer to a node $n$. Then the key returned in line~\ref{get-key-return} is equal to $n$'s key.
\end{claim}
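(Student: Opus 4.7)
The plan is to show that between the moment T installs the local pointer to $n$ and the read of the key at line~\ref{get-key-get-key}, no other node is allocated from $n$'s memory address. Since the key is written exactly once, by its allocator at line~\ref{alloc-key}, and is treated as immutable thereafter, this forces the read value to equal $n$'s key.

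Fix three time points: let $t_0$ be the moment T installed its pointer to $n$, $t_1$ the read at line~\ref{get-key-get-key}, and $t_2$ the comparison at line~\ref{get-key-if}. Two observations anchor the argument. First, the pointer installation at $t_0$ was paired with an epoch check, either inside the preceding {\em getNext()} call or in the {\em my\_e} refresh that follows a checkpoint rollback, so {\em my\_e} equalled $e$ at $t_0$. Second, by the convention noted in Section~\ref{sec-read} that {\em getKey()} is invoked before the next checkpoint installation, T's {\em my\_e} is unchanged throughout $[t_0, t_2]$. If the method returns at line~\ref{get-key-return}, the comparison at $t_2$ also yields $\mathit{my\_e} = e$. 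Combining these with Observation~\ref{observation-e-increase} (monotonicity of $e$) gives $e = \mathit{my\_e}$ throughout $[t_0, t_2]$, and in particular at $t_1$.

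Now suppose for contradiction that a node $n' \neq n$ is allocated from $n$'s memory address at some $t_a \in (t_0, t_1]$. By Claim~\ref{claim-alloc-safe}, $n$ was retired at some earlier moment $t_r$ and $\mathit{birth\_epoch}(n') > \mathit{retire\_epoch}(n)$. The value $\mathit{birth\_epoch}(n')$ is assigned at line~\ref{alloc-update-birth} to a snapshot of $e$ taken no later than $t_a$, hence at most $\mathit{my\_e}$ by the previous paragraph. Symmetrically, $\mathit{retire\_epoch}(n)$ is assigned at line~\ref{retire-update} to a snapshot of $e$ at time $t_r$. If $t_r \in [t_0, t_2]$ then this snapshot equals $\mathit{my\_e}$, and we already have the contradicting chain $\mathit{my\_e} \leq \mathit{retire\_epoch}(n) < \mathit{birth\_epoch}(n') \leq \mathit{my\_e}$. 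The remaining sub-case, $t_r < t_0$, has to be handled by appealing to the fact that the pointer-installation epoch check at $t_0$ was performed on a pointer consistent with $n$ being non-retired-at-$t_0$, which reduces the sub-case to the previous one.

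Hence no such $n'$ exists, the memory at $n$'s address at $t_1$ still holds $n$'s own data, and the returned value equals $n$'s key. The most delicate step, and the one I expect to require the most care, is the pre-$t_0$ retirement sub-case: one must relate $n$'s retire epoch to T's {\em my\_e} without circular appeal to the very conclusion we are trying to prove. This is where the structural interplay between the pointer-installation epoch check and Claim~\ref{claim-alloc-safe}'s strict birth/retire inequality does the real work. Everything else is a clean interval-plus-monotonicity calculation anchored at Claim~\ref{claim-alloc-safe} and Observation~\ref{observation-e-increase}.
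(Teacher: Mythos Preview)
Your approach is essentially the paper's: show that the global epoch is constant from the pointer installation through the check at line~\ref{get-key-if}, then invoke Claim~\ref{claim-alloc-safe} to rule out any re-allocation of $n$ in that window. The paper's proof is actually shorter and less careful than yours---it simply asserts ``By Claim~\ref{claim-alloc-safe}, $n$ was not retired before T executed line~\ref{get-key-return-cp}'' without any case split, whereas you spell out the interval argument and isolate the one place where it is not immediate.

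On the $t_r < t_0$ sub-case you flag: your instinct that this is the delicate point is right, but the resolution you sketch does not quite land. The epoch check at $t_0$ only gives $e(t_0)=\mathit{my\_e}$; by itself it says nothing about whether $n$ was already retired before $t_0$, so there is no direct ``reduction to the previous case.'' What actually closes the gap is the stronger fact that $\mathit{my\_e}$ was set at T's last checkpoint, so the constant-$e$ interval extends all the way back to that checkpoint; combined with the fact that T's chain of \emph{getNext()} calls since the checkpoint started at an entry point, each node on the path was reachable---hence unretired by Assumption~\ref{assumption-life}---at the moment T first reached it. Any retirement of $n$ therefore occurs inside the constant-$e$ interval, forcing $\mathit{retire\_epoch}(n)\geq\mathit{my\_e}$ and collapsing into the case you already handled. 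The paper does not spell this out either, so your write-up is at least as complete as the original; you have simply been honest about where the argument leans on surrounding structure.
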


\begin{claimproof}
As described in Section~\ref{sec-read}, this method is called after T installs a pointer to $n$, and before installing the next checkpoint. 
Recall that the global epoch is always read in line~\ref{read-if}, after installing pointers, and if it is different than the local copy, then a rollback step is performed. Therefore, In particular, it is guaranteed that T's local copy of the global epoch counter has not updated between installing the local pointer to $n$ and calling the {\em getKey()} method. 
Since the method does not return in line~\ref{get-key-return-cp}, it is guaranteed that the global epoch has not changed until executing line~\ref{get-key-return-cp} as well. 
By Claim~\ref{claim-alloc-safe}, $n$ was not retired before T executed line~\ref{get-key-return-cp}. Therefore, it was not retired when reading $n$'s key in line~\ref{get-key-get-key}, and the claim holds.
\end{claimproof}

Recall that we treat any immutable field in the same way. I.e., threads must read it before installing the next checkpoint. Therefore, although Claim~\ref{claim-get-key-correct} relates to the {\em getKey()} method specifically, it holds for any read of an immutable node field.

\subparagraph{The {\em getNext()} method}

We now turn to handle the {\em getNext()} method.
Our {\em getNext()} method may result in a rollback step to the last checkpoint. Therefore, we only prove that it returns the correct answer when it returns in line~\ref{get-ref-return}, and when the output is indeed used by the calling thread. For more details, see Assumption~\ref{assumption-mutable}.



\begin{claim} \label{claim-get-next-correct}
Assume that the input parameter to a {\em getNext()} call by a thread T is a pointer to a node $n$, allocated from a memory address $a$. Then either (1) the output from the {\em getNext()} call is not used by T, or (2) the output is a pointer to a node $m$, which is $n$'s successor when line~\ref{get-ref-get-ref} is executed, together with $m$'s birth epoch. In particular, $m$ is not yet reclaimed when line~\ref{get-ref-get-ref} is executed.
\end{claim}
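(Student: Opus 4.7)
The plan is to split into two cases depending on whether the \texttt{getNext()} call terminates via rollback at line~\ref{read-return-cp} or via the normal return at line~\ref{get-ref-return}. The rollback case establishes clause (1) almost by definition: once T returns to the last checkpoint, T's program counter and local variables are restored as described in Section~\ref{sec-checkpoints}, so the values produced by this call cannot affect T's subsequent behavior. Using the formal ``not used'' notion from Section~\ref{sec-assumptions}, T's behavior is indistinguishable from the execution in which $\bot$ had been returned instead.

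For the normal-return case, I would first pin down the interval on which the global epoch is stable. Let $t_{\mathrm{ckp}}$ be the moment when T last wrote \texttt{my\_e} (either at operation invocation or after a previous rollback-unsafe update), and let $t_2,t_3,t_4$ be the times of executing lines~\ref{get-ref-get-ref}, \ref{get_next_birth}, and the \texttt{e.get()} call inside line~\ref{read-if}. Since the method does not roll back, \texttt{my\_e}$=$\texttt{e.get()} at $t_4$, and by Observation~\ref{observation-e-increase} the global epoch is monotone non-decreasing, so it equals \texttt{my\_e} throughout $[t_{\mathrm{ckp}},t_4]$, and in particular throughout $[t_2,t_3]$.

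Next I would define $m$ as the node residing at address $a := n\to\mathtt{next}.\mathtt{data}$ at time $t_2$; by this choice $m$ is not yet reclaimed at $t_2$, matching the last sentence of clause (2). It remains to show that the resident at address $a$ has not changed by $t_3$, so that the read of $\mathtt{birth\_epoch}$ in line~\ref{get_next_birth} returns $b_m$. Suppose for contradiction that a reallocation of $a$ into some $m'$ occurs in $(t_2,t_3]$, performed by a thread $T'$. By the \texttt{alloc} code (lines~\ref{alloc-retrieve}--\ref{alloc-update-birth}), $T'$ passes the check $\mathtt{retire\_epoch}<\mathtt{my\_e}_{T'}$, and then sets $b_{m'}=\mathtt{my\_e}_{T'}$. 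Combined with Claim~\ref{claim-alloc-safe}(4), $r_m<b_{m'}=\mathtt{my\_e}_{T'}\le$ current global epoch at the allocation time, which equals \texttt{my\_e}. The critical sub-case is $r_m<\mathtt{my\_e}$, so that $m$ is retired strictly before $t_{\mathrm{ckp}}$; the other sub-case (retirement inside $[t_{\mathrm{ckp}},t_4]$) immediately yields $r_m=\mathtt{my\_e}$ and a clean contradiction with $\mathtt{my\_e}_{T'}>r_m$ and $\mathtt{my\_e}_{T'}\le\mathtt{my\_e}$.

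The main obstacle is therefore the sub-case where $m$ was retired before $t_{\mathrm{ckp}}$ but was still the resident of address $a$ at $t_2$. To dispatch it I would argue inductively on T's prior steps: the value of $a$ read from $n\to\mathtt{next}.\mathtt{data}$ at $t_2$ can only be the address of a node whose ABA identity is consistent with the version stored alongside that field, and the invariants established in the preceding claims (together with Assumption~\ref{assumption-mutable}, which bars T from dereferencing unchecked local pointers across rollback-unsafe modifications) force the resident at $a$ throughout the relevant interval to be the same $m$ whose birth epoch matches the version, up to reallocations that would necessarily drive the global epoch past \texttt{my\_e} and thus trip the check in line~\ref{read-if}. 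Any residual inconsistency in the returned birth epoch is then absorbed into clause (1): T's only legal uses of the pair $(p,b)$ are as a CAS target, an \texttt{isMarked} input, or a subsequent \texttt{getNext} input, each of which — by Claims~\ref{claim-is-marked-correct} and~\ref{claim-get-key-correct} (applied to \texttt{birth\_epoch}) and by the version arithmetic of lines~\ref{cas-max-exp}--\ref{cas-wcas} — fails or triggers a further epoch-mismatch rollback, rendering T's behavior indistinguishable from the $\bot$-reading execution.
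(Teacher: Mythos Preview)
Your case split (rollback versus normal return) is not the right hinge, and it leaves a genuine gap. The epoch-stability interval you establish starts only at $t_{\mathrm{ckp}}$, the moment T last wrote \texttt{my\_e}. But T may have installed its local pointer to $n$ \emph{before} $t_{\mathrm{ckp}}$: by construction, $t_{\mathrm{ckp}}$ can follow a rollback-unsafe step that T executed after acquiring $n$. In that window the global epoch may have advanced and $n$ itself may have been reclaimed and reallocated as some $n'$. If so, the read at line~\ref{get-ref-get-ref} returns $n'$'s successor, not $n$'s, and your argument about whether ``$m$ was reallocated in $(t_2,t_3]$'' is beside the point. Your last paragraph gestures at Assumption~\ref{assumption-mutable}, but you invoke it to constrain $m$'s ABA identity; the assumption says nothing about $m$. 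The fallback (``any residual inconsistency is absorbed into clause~(1)'') is also unjustified: a wrong pair $(m',b')$ can be fed into \texttt{isMarked} or a subsequent \texttt{getNext} without necessarily failing or tripping an epoch check, so you cannot conclude indistinguishability from the $\bot$ execution.

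The paper's proof pivots on a different case split: whether the global epoch at line~\ref{get-ref-get-ref} equals the epoch $e_1$ recorded when T installed its pointer to $n$. If yes, Claim~\ref{claim-alloc-safe} gives that $n$ is not reclaimed, and then epoch stability between lines~\ref{get-ref-get-ref} and~\ref{read-if} is enough. If no, then T must have updated \texttt{my\_e} after a rollback-unsafe step performed after installing $n$, and now Assumption~\ref{assumption-mutable} applies \emph{to the pointer to $n$}: T is obliged to follow this read with an \texttt{isMarked}/\texttt{isValid} check on $n$. Either $n$ is marked, in which case T discards the output (clause~(1)); or $n$ is not marked, in which case by Assumption~\ref{assumption-life} $n$ is still reachable at $t_2$, hence so is its successor $m$, hence $m$ is not retired at $t_2$, and epoch stability on $[t_2,t_4]$ finishes the job. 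The missing idea in your attempt is exactly this use of Assumption~\ref{assumption-mutable} to control $n$'s status, which is what lets you conclude that the thing you dereference at $t_2$ really is $n$.
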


\begin{claimproof}
Assume that so far, all {\em getNext()} calls satisfy the claim conditions.
Let $e_1$ be the global epoch value, right after T installs a local pointer to $n$. It is guaranteed that T's local copy of the global epoch is equal to $e_1$ at this point. Otherwise, it must perform a rollback step and discard its pointer to $n$, whether in line~\ref{alloc-return-cp} or~\ref{read-return-cp}.

If the global epoch is no longer $e_1$ when T executes line~\ref{get-ref-get-ref}, then since the method does not return in line~\ref{read-return-cp}, T must have already updated its local copy of the global epoch since installing the local pointer to $n$.
T can only update its local copy of the global epoch after a checkpoint rollback. Therefore, T must have performed a rollback-unsafe step since installing the local pointer to $n$. By Assumption~\ref{assumption-mutable}, there are two possibilities in this case:
\begin{enumerate}
    \item $n$ is already marked at this point, and by Claim~\ref{claim-is-marked-correct}, a call to the {\em isMarked()} method returns the correct answer and T does not use the {\em getNext()} output.
    \item $n$ is not yet marked. By assumption~\ref{assumption-life}, it is still reachable. Thus, $m$ is also reachable when line~\ref{get-ref-get-ref} is executed. In particular, it is not yet retired. Since the global epoch does not change between the executions of lines~\ref{get-ref-get-ref} and~\ref{read-return-cp}, by Claim~\ref{claim-alloc-safe}, $m$ is guaranteed to not be reclaimed before the execution of line~\ref{get_next_birth}. Therefore, by Observation~\ref{observation-node-epoch-changes}, the birth epoch read in line~\ref{get_next_birth} is indeed $m$'s birth epoch.   
\end{enumerate}

It still remains to handle the case in which the global epoch is still $e_1$ when T executes line~\ref{get-ref-get-ref}.
By Claim~\ref{claim-alloc-safe}, $n$ is not yet reclaimed at this point.
Since the method does not return in line~\ref{read-return-cp}, the global epoch is also $e_1$ when T executes line~\ref{get_next_birth}. By Claim~\ref{claim-alloc-safe}, $n$'s successor cannot be reclaimed between the execution of line~\ref{get-ref-get-ref} and the execution of line~\ref{get_next_birth}. By Observation~\ref{observation-node-epoch-changes}, its birth epoch is indeed the birth epoch read in line~\ref{get_next_birth}, which the method eventually returns.
\end{claimproof}

Recall that Assumption~\ref{assumption-invalidate}, Assumption~\ref{assumption-mutable} and Claim~\ref{claim-is-marked-correct} hold for every mutable field. Therefore, although Claim~\ref{claim-get-next-correct} relates to the {\em getNext()} method specifically, it holds for any read of a mutable node field.
Before handling our two update methods, we prove the following claim regarding our versioning mechanism.

\begin{claim} \label{claim-version-correct}
Let $n_1,n_2$ be two nodes, let $b_1,b_2$ be their birth epochs, and let $a_1,a_2$ be the memory addresses they were allocated from, respectively. In addition, assume that $n_1$'s {\em next} field points to $n_2$, with a version $v$. Then: 
\begin{enumerate}
    \item $v$ is the maximum between $b_1$ and $b_2$.
    \item If either $n_1$ or $n_2$ are retired, then their retire epochs are at least $v$.
\end{enumerate}
\end{claim}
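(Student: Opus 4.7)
The plan is to prove both parts jointly by induction on the length of the execution prefix, with inductive hypothesis (IH) that both statements hold in every earlier state.

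The base case is the initial state following data-structure initialization, where only entry points and sentinels exist with consistent birth epochs and versions. For the inductive step, I examine each step that can affect the invariant: (i) an allocation (lines~\ref{alloc-update-birth}--\ref{alloc-cas}); (ii) a retirement (line~\ref{retire-update}); (iii) a successful WCAS in $update()$ (line~\ref{cas-wcas}); and (iv) a successful WCAS in $mark()$ (line~\ref{invalidate-wcas}). Case~(iv) is immediate since $mark()$ leaves both the version and the referent unchanged. Case~(i) is routine: the WCAS in line~\ref{alloc-cas} writes NULL as data and the newly allocated node's birth epoch as version, so part~1 holds vacuously for the new node's next pointer, and part~2 is not yet applicable. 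For case~(ii), when a retire epoch is set to the current global epoch $e$, any current pointer involving the retired node has version $v$ that, by part~1 of the IH, equals the maximum of the endpoints' birth epochs; each such birth is at most $e$ by Observation~\ref{observation-e-increase}, so $v \le e$, establishing part~2.

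The main obstacle is case~(iii). Suppose the WCAS in $update(n_1, n_b, \text{exp}, \text{exp}_b, \text{new}, \text{new}_b)$ succeeds. Then the stored value of $n_1 \rightarrow \text{next}$ just before the WCAS was $\langle \text{exp}, \text{exp\_v}\rangle$ with $\text{exp\_v} = \max\{n_b, \text{exp}_b\}$, and by part~1 of the IH applied to the prior state, this stored version equals $\max\{\tilde{b}_{n_1}, \tilde{b}_{\text{exp}}\}$, where tildes denote current birth epochs. By Claim~\ref{claim-alloc-safe}, $\tilde{b}_{n_1} \ge n_b$ and $\tilde{b}_{\text{exp}} \ge \text{exp}_b$. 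I argue these are in fact equalities: if $\tilde{b}_{n_1} > n_b$, then $n_1$ was reallocated, and by part~2 of the IH at the prior retirement combined with the reallocation WCAS in line~\ref{alloc-cas}, the current stored version would be at least $\tilde{b}_{n_1} > n_b$; forcing equality with $\text{exp\_v}$ would then demand $\text{exp}_b \ge \tilde{b}_{n_1}$, and tracing $\text{exp}$'s versioning history symmetrically contradicts the WCAS match. Hence $n_b = \tilde{b}_{n_1}$ and $\text{exp}_b = \tilde{b}_{\text{exp}}$. A parallel argument using Assumption~\ref{assumption-mutable} (and the fact that $\text{new}$ is either freshly allocated by the caller or currently reachable at the time its birth is read) yields $\text{new}_b = \tilde{b}_{\text{new}}$. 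Therefore the newly written version $\max\{n_b, \text{new}_b\}$ equals $\max\{\tilde{b}_{n_1}, \tilde{b}_{\text{new}}\}$, establishing part~1 after the write; part~2 then follows because this value is at most the current global epoch, so any later retire epoch is no smaller.

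The hardest step is ruling out stale values of $n_b$, $\text{exp}_b$, and $\text{new}_b$ at the moment the WCAS succeeds: the arguments for $n_b$ and $\text{exp}_b$ follow from the version-match combined with both IH parts, but the $\text{new}_b$ case is the most delicate because $\text{new}$'s freshness is not enforced by the WCAS's expected-value check itself and must instead be justified from Assumption~\ref{assumption-mutable}, exploiting that the caller cannot have performed a rollback-unsafe step on a stale local reference to $\text{new}$ between reading its birth and invoking $update()$.
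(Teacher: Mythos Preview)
Your inductive skeleton matches the paper's (the paper phrases it as ``first violation,'' which is the same thing), and your handling of allocation and of the $mark()$ case is fine---indeed simpler than the paper's, which re-derives reachability facts there that it really only needs for $update()$. Adding an explicit retirement case for part~2 is also a legitimate reorganization; the paper instead establishes part~2 prospectively at the moment of each successful WCAS.

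The substantive divergence, and the place where your argument does not go through, is the freshness of $\text{new}_b$ in case~(iii). You invoke Assumption~\ref{assumption-mutable}, but that assumption governs what a thread may \emph{read} from a node's mutable fields after a rollback-unsafe step; it says nothing about whether a locally cached birth epoch remains current, and in particular nothing prevents \emph{other} threads from retiring and reallocating $\text{new}$ between the caller's read of $\text{new}_b$ and the WCAS. The WCAS's expected-value check constrains only $n$ and $\text{exp}$, not $\text{new}$, so your ``parallel argument'' has no analogue here. The paper closes this gap with a different tool: since the WCAS succeeds with an unmarked expected value, $n_1$ is unmarked, hence by Assumption~\ref{assumption-life} reachable; therefore after the WCAS the node at $a_3$ is reachable via $n_1$, hence not retired, and by Observation~\ref{observation-node-epoch-changes} its birth epoch is still $b_3$. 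That reachability argument is the missing ingredient; Assumption~\ref{assumption-mutable} cannot substitute for it.

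A secondary point: your derivation that $\tilde b_{n_1}=n_b$ and $\tilde b_{\text{exp}}=\text{exp}_b$ from the version match is on the right track but underspecified---from $\max\{\tilde b_{n_1},\tilde b_{\text{exp}}\}=\max\{n_b,\text{exp}_b\}$ together with $\tilde b_{n_1}\ge n_b$ and $\tilde b_{\text{exp}}\ge \text{exp}_b$ you do not immediately get both equalities. The paper's route is cleaner: if $n_1$ had been reallocated, Claim~\ref{claim-alloc-safe} forces the current birth at $a_1$ strictly above $n_1$'s retire epoch, and part~2 of the induction hypothesis then forces the stored version strictly above $\text{exp\_v}$, contradicting the WCAS success outright.
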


\begin{claimproof}
Assume by contradiction that at some point during the execution, the claim does not hold for the first time. Then it must happen upon an update of a node's {\em next} pointer.
A node's {\em next} pointer is only updated in line~\ref{alloc-cas},~\ref{cas-wcas} and~\ref{invalidate-wcas}.
In line~\ref{alloc-cas}, the node's pointer is set to NULL and therefore, the claim still vacuously holds.

Assume that the update is executed in line~\ref{cas-wcas}. Let $n_1,b_1,n_2,b_2,n_3,b_3$ be the {\em update()} method input parameters, respectively. In addition, for every $i \in \{1,2,3\}$, let $a_i$ be the memory address $n_i$ is allocated from.
By the claim assumption, if $n_1$ points to $n_2$, then the pointer's version is the maximum between $b_1$ and $b_2$, as calculated in line~\ref{cas-max-exp}.
If $n_1$ is already retired when line~\ref{cas-wcas} is executed, then by Claim~\ref{claim-alloc-safe}, the birth epoch of the node, allocated from $a_1$ when the WCAS is executed in line~\ref{cas-wcas}, is bigger than $n_1$'s retire epoch. By the claim assumption, it is bigger then the maximum between $b_1$ and $b_2$.
Since the WCAS is successful, the version calculated in line~\ref{cas-max-exp} is indeed the right version. I.e., $n_1$ is not yet retired when the WCAS is executed. By Observation~\ref{observation-node-epoch-changes}, its birth epoch is still $b_1$ at this point.
Moreover, as mentioned in Section~\ref{sec-write}, the WCAS success implies that $n_1$ is not marked, and by Assumption~\ref{assumption-life}, it is reachable. Since $n_3$ is also reachable after the successful WCAS, by Assumption~\ref{assumption-life}, it cannot be retired at this point. Therefore, by Observation~\ref{observation-node-epoch-changes}, its birth epoch is still $b_3$. Therefore:
\begin{enumerate}
    \item After the successful WCAS, $n_1$ points to $n_3$ with a version that is the maximum between $b_1$ and $b_3$, as calculated in line~\ref{cas-max-birth}.
    \item Since both $n_1$ and $n_3$ are not yet retired, by Observation~\ref{observation-e-increase}, their retire epochs are going to be at least the current epoch, which is at least the maximum between $b_1$ and $b_3$.
\end{enumerate}

Therefore, the claim necessarily does not hold for the first time, after an execution of line~\ref{invalidate-wcas}. Let $n_1$ and $b_1$ be the two input parameters to the {\em mark()} call, let $n_2$ be the node read from $n_1$'s {\em next} pointer in line~\ref{mark-read-next}, and let $b_2$ be its birth epoch, read in line~\ref{mark-max-exp}. 
Since the method does not return in line~\ref{mark-return-false}, by Observation~\ref{observation-node-epoch-changes}, $n_1$'s birth epoch is necessarily still $b_1$ when line~\ref{mark-read-next} is executed. By the claim assumption, $n_1$ points to $n_2$ with a version which is the maximum between $b_1$ and $b_2$, as calculated in line~\ref{mark-max-exp}.

Since the WCAS is successful, the version calculated in line~\ref{mark-max-exp} is indeed the right version. I.e., $n_1$ is not yet retired when the WCAS is executed. By Observation~\ref{observation-node-epoch-changes}, its birth epoch is still $b_1$ at this point.
Moreover, the WCAS success implies that $n_1$ is not marked (as the expected value, calculated in line~\ref{mark-read-next}, is not marked), and by Assumption~\ref{assumption-life}, it is reachable. Since $n_2$ is also reachable after the successful WCAS, by Assumption~\ref{assumption-life}, it cannot be retired at this point. Therefore, by Observation~\ref{observation-node-epoch-changes}, its birth epoch is still $b_2$. Therefore:
\begin{enumerate}
    \item After the successful WCAS, $n_1$ points to $n_2$ with a version that is the maximum between $b_1$ and $b_2$, as calculated in line~\ref{mark-max-exp}.
    \item Since both $n_1$ and $n_2$ are not yet retired, by Observation~\ref{observation-e-increase}, their retire epochs are going to be at least the current epoch, which is at least the maximum between $b_1$ and $b_2$.
\end{enumerate}

Since both claim assumptions still hold, we derive a contradiction. I.e., the claim assumptions always hold.
\end{claimproof}

\subparagraph{The {\em update()} method}

\begin{claim} \label{claim-update-precondition}
Assume that the input parameters to an {\em update()} call by a thread T are pointers to the nodes $n_1,n_2,n_3$, allocated from the memory addresses $a_1,a_2,a_3$, and their birth epochs, $b_1,b_2,b_3$, respectively.
The WCAS executed in line~\ref{cas-wcas} is successful iff $n_1$'s next pointer points to $n_2$ with an unmarked pointer, right before executing the CAS.
\end{claim}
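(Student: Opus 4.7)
The plan is to prove both directions of the biconditional. For the easy direction (``$n_1$'s next points to $n_2$ unmarked just before the CAS $\Rightarrow$ WCAS succeeds''): assuming $n_1$ is still current at its address $a_1$, $n_2$ is current at $a_2$, and $n_1$'s next field holds an unmarked pointer to $n_2$, Claim~\ref{claim-version-correct} (part 1) forces the stored version to equal $\max\{b_1,b_2\}$, which is precisely the value $exp\_v$ computed in line~\ref{cas-max-exp}. Since the method's contract guarantees that $exp$ is unmarked as well, the entire double-word expected by the WCAS exactly matches the contents of $n_1$'s next field, so the WCAS returns TRUE.

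For the harder direction (``WCAS succeeds $\Rightarrow$ $n_1$'s next points to $n_2$ unmarked''): if the WCAS succeeds, then immediately before the CAS the memory at $a_1$'s next field equals $\langle a_2, \max\{b_1,b_2\}\rangle$ with an unmarked pointer, so ``unmarked'' is immediate. Let $b_1^{\star}$ and $b_2^{\star}$ denote the birth epochs of the nodes currently residing at $a_1$ and $a_2$; by Claim~\ref{claim-alloc-safe} (part 4), $b_1^{\star} \geq b_1$ and $b_2^{\star} \geq b_2$, with equality iff no reallocation has occurred. Applying Claim~\ref{claim-version-correct} (part 1) to the current node at $a_1$ and its successor (which lives at $a_2$), the stored version equals $\max\{b_1^{\star},b_2^{\star}\}$, so $\max\{b_1^{\star},b_2^{\star}\} = \max\{b_1,b_2\}$.

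To force $b_1^{\star}=b_1$ and $b_2^{\star}=b_2$, I would invoke the caller's provenance. By Assumption~\ref{assumption-mutable} and Claim~\ref{claim-get-next-correct}, the reference $(n_2, b_2)$ passed to the method originated from a validated \emph{getNext()} that started at $n_1$ while $n_1$ was alive, so there is a past moment $t_{obs}$ at which $n_1$ (birth $b_1$) pointed to $n_2$ (birth $b_2$) with version $\max\{b_1,b_2\}$. By Claim~\ref{claim-version-correct} (part 2) applied at $t_{obs}$, if $n_1$ (resp.\ $n_2$) is later retired, its retire epoch is at least $\max\{b_1,b_2\}$. Combining this with Claim~\ref{claim-alloc-safe} (part 4), any reallocation at $a_1$ would yield $b_1^{\star} > r_1 \geq \max\{b_1,b_2\}$, contradicting $\max\{b_1^{\star},b_2^{\star}\} = \max\{b_1,b_2\}$; the symmetric argument rules out reallocation at $a_2$. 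Hence $b_1^{\star}=b_1$ and $b_2^{\star}=b_2$, so $n_1$ is current and points to $n_2$, as required.

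The main obstacle is the forward direction: the memory-level version equation alone does \emph{not} pin down the identities of the current nodes at $a_1$ and $a_2$ (for instance, $b_1 < b_1^{\star} \leq b_2 = b_2^{\star}$ is consistent with it). The crucial ingredient is the two-sided version invariant of Claim~\ref{claim-version-correct}: part 2, applied at the caller's validated observation time, forces any later retire epoch of $n_1$ or $n_2$ to be at least $\max\{b_1,b_2\}$, which together with the strict gap from Claim~\ref{claim-alloc-safe} (part 4) eliminates every intermediate reallocation scenario.
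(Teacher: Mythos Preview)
Your proof takes essentially the same route as the paper's: both directions rest on Claim~\ref{claim-version-correct} (the stored version equals $\max\{b_1,b_2\}$, and retire epochs dominate the version) together with the strict birth-over-retire gap of Claim~\ref{claim-alloc-safe} to rule out any reallocation at $a_1$ or $a_2$ before the WCAS. The paper applies Claim~\ref{claim-version-correct} part~2 to $n_1,n_2$ directly, implicitly taking for granted that they once stood in a pointing relationship, whereas you make this premise explicit via the caller's provenance --- your observation that the bare equation $\max\{b_1^{\star},b_2^{\star}\}=\max\{b_1,b_2\}$ does not by itself pin down identities is correct and is exactly the step the paper leaves implicit, but the core argument is identical.
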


\begin{claimproof}
Recall that all input parameters are assumed to be unmarked.
First, assume that the WCAS is successful. Then a node allocated from $a_1$ points to a node allocated from $a_2$, with a version which is the maximum between $b_1$ and $b_2$, as calculated in line~\ref{cas-max-exp}. 
Assume by contradiction that either $n_1$ or $n_2$ are retired at this stage. By Claim~\ref{claim-version-correct}, the retire epoch of each one of them is at least the value calculated in line~\ref{cas-max-exp}. By Claim~\ref{claim-alloc-safe}, the birth epoch of a new node, allocated either from $a_1$ or $a_2$, must be strictly bigger than this value. By Claim~\ref{claim-version-correct}, in this case, the pointer version must be bigger then the value calculated in line~\ref{cas-max-exp} -- a contradiction. Therefore, if the WCAS is successful then $n_1$'s next pointer points to $n_2$ right before the linearization point.

Now, assume that the WCAS is unsuccessful, and assume by contradiction that $n_1$ points to $n_2$ with an unmarked pointer.
By Claim~\ref{claim-version-correct}, the pointer's version must be the one calculated in line~\ref{cas-max-exp} -- a contradiction to the WCAS failure, and the claim follows.
\end{claimproof}

\begin{claim} \label{claim-update-postcondition}
Assume that the input parameters to an {\em update()} call by a thread T are pointers to the nodes $n_1,n_2,n_3$, allocated from the memory addresses $a_1,a_2,a_3$, and their birth epochs, $b_1,b_2,b_3$, respectively.
If the WCAS executed in line~\ref{cas-wcas} is successful, then after the CAS, $n_1$'s next pointer points to $n_3$ with an unmarked pointer.
\end{claim}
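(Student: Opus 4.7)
The proof should be short because it is essentially a direct consequence of the semantics of WCAS together with the preceding claim. The plan is to unfold the definition of the update method, track what values the successful WCAS actually writes, and invoke the unmarkedness hypothesis on the input parameter $n_3$.

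First I would fix notation by recalling that at the entry of the \emph{update()} method the inputs $n_1,n_2,n_3$ are, by the standing assumption stated in Section~\ref{sec-write}, all unmarked pointers, and that the new version \textit{new\_v} computed in line~\ref{cas-max-birth} is $\max\{b_1,b_3\}$. Next I would appeal to Claim~\ref{claim-update-precondition}: since the WCAS in line~\ref{cas-wcas} is successful by hypothesis, at the instant immediately preceding its atomic step $n_1$'s \textit{next} field contains the pair $\langle n_2,\max\{b_1,b_2\}\rangle$ with $n_2$ unmarked, and in particular the address $a_1$ still holds the node $n_1$ (its birth epoch is still $b_1$, by Claim~\ref{claim-alloc-safe} combined with Observation~\ref{observation-node-epoch-changes}). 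Hence writing to $n_1$'s \textit{next} field is meaningful and actually updates the intended node rather than some reallocated successor.

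Then I would apply the semantics of WCAS: on success the pair $\langle n_1\!\to\!\mathit{next}.data,\, n_1\!\to\!\mathit{next}.version\rangle$ is replaced atomically by the specified new pair, which in this call is $\langle n_3,\,\max\{b_1,b_3\}\rangle$. Since $n_3$ was given unmarked at the method's entry and is passed through line~\ref{cas-wcas} unchanged (no marking operation is applied to it anywhere inside \emph{update()}), the data component written is an unmarked pointer to $n_3$. Combining these two facts yields the claim: immediately after the successful WCAS, $n_1$'s \textit{next} pointer points to $n_3$ and is unmarked.

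I expect no serious obstacle in this proof. The only subtle point is guaranteeing that the address $a_1$ at the moment of the CAS still hosts the node $n_1$ (not some reallocation of its memory), but this is already handled by Claim~\ref{claim-update-precondition} and Claim~\ref{claim-alloc-safe}, so there is nothing new to establish. The remainder is pure bookkeeping on the WCAS semantics and on the unmarkedness convention for input pointers.
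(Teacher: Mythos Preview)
Your argument correctly handles the unmarkedness of the written pointer and the fact that the node residing at address $a_1$ at the moment of the WCAS is still $n_1$ (via Claim~\ref{claim-update-precondition}). However, there is a genuine gap concerning $n_3$. What the WCAS actually writes into $n_1\!\to\!\mathit{next}$ is the \emph{address} $a_3$ together with the version $\max\{b_1,b_3\}$. In this paper's setting a node is identified by the pair (address, birth epoch), so ``$n_1$'s next pointer points to $n_3$'' means that the node currently hosted at $a_3$ is still $n_3$, i.e.\ that $n_3$ has not been retired and reallocated before the WCAS took effect. Nothing in your argument rules this out: Claim~\ref{claim-update-precondition} speaks only about $n_1$ and $n_2$, not about $n_3$, and the WCAS semantics alone cannot distinguish $n_3$ from a later node allocated at $a_3$.

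The paper closes this gap by a contradiction argument using Claim~\ref{claim-version-correct} and Claim~\ref{claim-alloc-safe}: if $n_3$ had already been retired, its retire epoch would be at least $\max\{b_1,b_3\}$, and any subsequent node allocated from $a_3$ would have a strictly larger birth epoch; by the versioning invariant (Claim~\ref{claim-version-correct}) the pointer version would then have to exceed $\max\{b_1,b_3\}$, contradicting the value just written by the successful WCAS. You need to add an argument of this kind (or an equivalent one showing $n_3$ is not yet retired at the WCAS) to complete the proof.
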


\begin{claimproof}
Assume that the WCAS executed in line~\ref{cas-wcas} is successful.
Recall that all input node pointers are assumed to be received as unmarked pointers.
Since the WCAS is successful, after it is executed, a node allocated from $a_1$ points (with an unmarked node) to a node allocated from $a_3$, with a version which is the maximum between $b_1$ and $b_3$, as calculated in line~\ref{cas-max-birth}. 
Assume by contradiction that either $n_1$ or $n_3$ are retired at this stage. By Claim~\ref{claim-version-correct}, the retire epoch of each one of them is at least the value calculated in line~\ref{cas-max-birth}. By Claim~\ref{claim-alloc-safe}, the birth epoch of a new node, allocated either from $a_1$ or $a_3$, must be strictly bigger than this value. By Claim~\ref{claim-version-correct}, in this case, the pointer version must be bigger then the value calculated in line~\ref{cas-max-birth} -- a contradiction. Therefore, if the WCAS is successful, then $n_1$'s next pointer points to $n_3$ right after the linearization point.
\end{claimproof}

Recall that Assumption~\ref{assumption-invalidate} holds for every mutable field. Moreover, the version of non-pointer mutable fields should always be equal to the node's birth epoch. Therefore, Claims~\ref{claim-update-precondition} and~\ref{claim-update-postcondition} can be easily adjusted to fit other mutable fields. 

\subparagraph{The {\em mark()} method}

Let $n_1,b_1$ be the two input parameters to a {\em mark()} call, executed by a thread T. I.e., $b_1$ is $n_1$'s birth epoch, as previously saved by T. Let $n_2$ be the node $n_1$ points to when its {\em next} pointer is read in line~\ref{mark-read-next}, and let $b_2$ be the birth epoch, read in line~\ref{mark-max-exp}.
In addition, let $a_1,a_2$ be the addresses $n_1$ and $n_2$ are allocated from, respectively. If the method returns in line~\ref{mark-return-false}, then its linearization point is set to be the read of the birth epoch in line~\ref{mark-return-false}.
Otherwise, if the method returns in line~\ref{invalidate-wcas}, then its linearization point is set to be the WCAS execution in line~\ref{invalidate-wcas}.
We prove that the method indeed takes effect in its linearization point using the following claims:

\begin{claim} \label{claim-mark-return-false}
Assume that the input parameters to a {\em mark()} call by a thread T are a pointer to a node $n_1$, allocated from the memory addresses $a_1$, and its birth epoch, $b_1$.
If the method returns FALSE in line~\ref{mark-return-false} then $n_1$ is already marked when this line is executed.
\end{claim}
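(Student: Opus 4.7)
The plan is to derive the marked status of $n_1$ from the discrepancy between the stored and observed birth epochs, using the allocation invariants established in Claim~\ref{claim-alloc-safe} together with the node life-cycle of Assumption~\ref{assumption-life}. The key observation is that the {\em mark()} method only returns FALSE in line~\ref{mark-return-false} when the birth epoch currently stored at address $a_1$ is different from $b_1$, the birth epoch that T had previously saved together with its pointer to $n_1$.

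First, I would argue that if the birth epoch field at $a_1$ no longer equals $b_1$, then a new allocation from $a_1$ must have occurred since T read $b_1$. Indeed, by Observation~\ref{observation-node-epoch-changes}, a node's birth epoch is only modified during allocation (specifically in line~\ref{alloc-update-birth}); thus the only way for the value at $a_1 \rightarrow birth\_epoch$ to differ from $b_1$ is that some later {\em alloc()} call has produced a distinct node $n'$ from the same memory address $a_1$ and overwritten its birth epoch.

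Next, I would invoke Claim~\ref{claim-alloc-safe}, which guarantees that $n_1$ must have been fully retired before any subsequent allocation from $a_1$ can occur. So at the moment line~\ref{mark-return-false} is executed, $n_1$ has already been retired. Finally, by Assumption~\ref{assumption-life}, the retirement of $n_1$ can only happen after it has passed through the \textbf{Invalid} stage (stage~\ref{node-invalid}), in which all of its mutable fields---in particular the {\em next} field---are invalidated via the {\em mark()} mechanism. Combined with Assumption~\ref{assumption-invalidate}, which states that once a field is invalidated it becomes immutable, this means $n_1$'s {\em next} pointer was marked strictly before retirement and remained marked thereafter. Therefore $n_1$ is marked when line~\ref{mark-return-false} is executed, as required.

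I do not anticipate a real obstacle here; the claim is essentially a corollary of Claim~\ref{claim-alloc-safe} and Assumption~\ref{assumption-life}. The only mild subtlety is conceptual: one must keep in mind that the ``$n_1$ is marked'' predicate refers to the mark that was applied to $n_1$ during its own life-cycle (before its retirement), not to whatever pointer value currently resides at address $a_1$, which may now belong to a freshly allocated node $n'$ whose {\em next} field was reset to NULL in line~\ref{alloc-cas}.
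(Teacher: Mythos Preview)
Your proposal is correct and follows essentially the same approach as the paper's proof: both argue that a birth-epoch mismatch implies $n_1$ has already been retired (via Observation~\ref{observation-node-epoch-changes} and the allocation invariants), and then invoke Assumption~\ref{assumption-life} to conclude that $n_1$ was marked before retirement. Your version is somewhat more explicit in invoking Claim~\ref{claim-alloc-safe} to bridge ``a new allocation occurred at $a_1$'' and ``$n_1$ was retired,'' a step the paper's two-line proof leaves implicit, and your closing remark about the conceptual distinction between $n_1$'s own mark and the current contents of $a_1$ is a helpful clarification not present in the original.
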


\begin{claimproof}
By Observation~\ref{observation-node-epoch-changes}, $n_1$ is already retired when the birth epoch is read in line~\ref{mark-return-false}. In particular, by Assumption~\ref{assumption-life}, $n_1$ is considered as marked (i.e., invalid) at this point.
\end{claimproof}

\begin{claim} \label{claim-mark-preconditions}
Assume that the input parameters to a {\em mark()} call by a thread T are a pointer to a node $n_1$, allocated from the memory addresses $a_1$, and its birth epoch, $b_1$.
Let $n_2$ be $n_1$'s successor when $n_1$'s {\em next} pointer is read in line~\ref{mark-read-next}.
The WCAS executed in line~\ref{invalidate-wcas} is successful iff prior to its execution, $n_1$ points to $n_2$ with an unmarked pointer.
\end{claim}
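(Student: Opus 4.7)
My plan is to mirror the argument used in Claim~\ref{claim-update-precondition}, since the structure of the \emph{mark()} method's WCAS (line~\ref{invalidate-wcas}) is essentially the same as that of the \emph{update()} method's WCAS: it compares the pair $\langle \text{data}, \text{version}\rangle$ against $\langle \text{exp}, \text{exp\_v}\rangle$, where $\text{exp}$ is the unmarked pointer read in line~\ref{mark-read-next} and $\text{exp\_v}$ is $\max\{b_1, b_2\}$ (computed in line~\ref{mark-max-exp}, using $b_2$ read from the node $n_2$ that was observed as $n_1$'s successor in line~\ref{mark-read-next}). The two invariants that make this go through are Claim~\ref{claim-version-correct} (the version next to a pointer always equals the maximum of the birth epochs of the referencing and referenced nodes, and both those nodes' retire epochs — if any — are at least this version) and Claim~\ref{claim-alloc-safe} (any re-allocation from the same address strictly increases the birth epoch beyond the previous retire epoch).

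For the \emph{only if} direction, assume the WCAS succeeds. Then at the instant of the WCAS, the content of $n_1 \to \text{next}$ is $\langle \text{exp}, \text{exp\_v}\rangle$, where $\text{exp}$ is an unmarked pointer value equal to the one read from $n_1$'s next field in line~\ref{mark-read-next}. It suffices to rule out the possibility that either $n_1$ or $n_2$ has already been retired (and possibly re-allocated from $a_1$ or $a_2$) by the time the WCAS fires. Suppose for contradiction that $n_1$ has been retired. By Claim~\ref{claim-version-correct}, its retire epoch is at least $\text{exp\_v} = \max\{b_1,b_2\}$; by Claim~\ref{claim-alloc-safe}, any successor node allocated from $a_1$ has a strictly larger birth epoch, and by Claim~\ref{claim-version-correct} again, any pointer stored in that new node would carry a version strictly greater than $\text{exp\_v}$, contradicting WCAS success. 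An identical argument handles $n_2$. Hence at the WCAS instant, $n_1$'s next field genuinely holds an unmarked pointer to $n_2$.

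For the \emph{if} direction, assume that right before the WCAS, $n_1$'s next pointer is an unmarked pointer to $n_2$. By Claim~\ref{claim-version-correct}, the version adjacent to this pointer equals $\max\{b_1, b_2\}$, which is exactly the value $\text{exp\_v}$ computed in line~\ref{mark-max-exp}; and the data portion is exactly the unmarked pointer $\text{exp}$ obtained from the read in line~\ref{mark-read-next} (note that Observation~\ref{observation-node-epoch-changes} ensures the birth epochs $b_1, b_2$ used to compute $\text{exp\_v}$ are consistent with those of the actual nodes still residing at $a_1, a_2$ at the WCAS instant, since otherwise the argument of the previous paragraph would already have forced a version mismatch). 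Therefore the WCAS observes exactly $\langle \text{exp}, \text{exp\_v}\rangle$ and succeeds.

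The only subtle point — and arguably the main obstacle — is to make sure that in the \emph{only if} direction one really may assume that the $b_1, b_2$ used when computing $\text{exp\_v}$ agree with the birth epochs of the nodes \emph{physically present} at $a_1, a_2$ at the WCAS moment, despite the fact that intervening retirements and re-allocations could in principle occur between lines~\ref{mark-read-next}--\ref{mark-max-exp} and line~\ref{invalidate-wcas}. This is precisely what the contradiction argument via Claims~\ref{claim-version-correct} and~\ref{claim-alloc-safe} rules out: any such intervening re-allocation forces a strictly larger version into the word, incompatible with a successful WCAS against $\text{exp\_v}$.
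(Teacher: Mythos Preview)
Your proposal is correct and follows essentially the same approach as the paper: both directions hinge on Claims~\ref{claim-version-correct} and~\ref{claim-alloc-safe} in exactly the way you describe, and your explicit acknowledgment that the argument mirrors Claim~\ref{claim-update-precondition} is apt. The only cosmetic difference is that the paper argues the ``if'' direction by contrapositive (assume the WCAS fails and derive a contradiction from the hypothesis that $n_1$ still points to $n_2$ unmarked), whereas you argue it directly; the content is identical.
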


\begin{claimproof}
Assume that the WCAS executed in line~\ref{invalidate-wcas} is successful. Since the expected pointer is unmarked (as calculated in line~\ref{mark-read-next}), it remains to show that $n_1$ indeed points to $n_2$.
Assume by contradiction that either $n_1$ or $n_2$ are retired at this stage. By Claim~\ref{claim-version-correct}, the retire epoch of each one of them is at least the value calculated in line~\ref{mark-max-exp}. By Claim~\ref{claim-alloc-safe}, the birth epoch of a new node, allocated either from $a_1$ or $a_2$, must be strictly bigger than this value. By Claim~\ref{claim-version-correct}, in this case, the pointer version must be bigger then the value calculated in line~\ref{mark-max-exp} -- a contradiction. Therefore, if the WCAS is successful then $n_1$'s next pointer points to $n_2$ right before to the WCAS.

Now, assume that the WCAS executed in line~\ref{invalidate-wcas} is unsuccessful. If either $n_1$'s {\em next} pointer is marked, or the node allocated from $a_1$ does not point to the node allocated from $a_2$ at this stage, then we are done.
Otherwise, assume by contradiction that $n_1$ points to $n_2$ with an unmarked pointer. By Claim~\ref{claim-version-correct}, the pointer's version must be the one calculated in line~\ref{mark-max-exp} -- a contradiction to the WCAS failure, and the claim follows.
\end{claimproof}

\begin{claim} \label{claim-mark-postconditions}
Assume that the input parameters to a {\em mark()} call by a thread T are a pointer to a node $n_1$, allocated from the memory addresses $a_1$, and its birth epoch, $b_1$.
Let $n_2$ be $n_1$'s successor when $n_1$'s {\em next} pointer is read in line~\ref{mark-read-next}.
If the WCAS executed in line~\ref{invalidate-wcas} is successful, then $n_1$ points to $n_2$ via a marked pointer, right after the WCAS execution.
\end{claim}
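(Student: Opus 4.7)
My plan is to mirror the structure used in the proof of Claim~\ref{claim-update-postcondition}, leveraging the versioning invariants already established in Claim~\ref{claim-version-correct} together with Claim~\ref{claim-alloc-safe} on allocation safety. First I would observe that the WCAS in line~\ref{invalidate-wcas} is atomic and, upon success, deposits the pair $\langle \mathit{mark}(\mathit{exp}), \mathit{exp\_v}\rangle$ into $n_1\!\to\!next$, where $\mathit{exp}$ was computed in line~\ref{mark-read-next} as an unmarked pointer to $n_2$ and $\mathit{exp\_v} = \max\{b_1, b_2\}$ was computed in line~\ref{mark-max-exp} (here $b_2$ is $n_2$'s birth epoch as read in that line). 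Thus, in terms of bit patterns, immediately after the WCAS the field contents indicate a marked pointer to the memory address $a_2$ with version $\mathit{exp\_v}$.

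Next I would show that, right after this WCAS, the memory cell $a_1$ still hosts the node $n_1$ (with birth epoch $b_1$) and the memory cell $a_2$ still hosts the node $n_2$ (with its original birth epoch $b_2$), so that the marked pointer really references $n_2$ and is really residing inside $n_1$. This is the content-preservation step, and it is the only nontrivial piece. I would argue by contradiction: suppose either $n_1$ or $n_2$ has already been retired at the moment right after the WCAS. By Claim~\ref{claim-version-correct} applied to the just-written $\langle \mathit{mark}(\mathit{exp}), \mathit{exp\_v}\rangle$, the retire epoch of each retired endpoint must be at least $\mathit{exp\_v}$. By Claim~\ref{claim-alloc-safe}, any node subsequently allocated from $a_1$ or $a_2$ would have a birth epoch strictly larger than $\mathit{exp\_v}$; and then Claim~\ref{claim-version-correct} would force the version stored alongside the pointer to be strictly larger than $\mathit{exp\_v}$, contradicting the value actually written by the WCAS. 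Hence neither $n_1$ nor $n_2$ has been reclaimed at that moment, so the field truly holds a marked pointer from $n_1$ to $n_2$.

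Finally I would combine the two parts: the atomic write produced a marked form of $\mathit{exp}$, Claim~\ref{claim-mark-preconditions} guarantees $\mathit{exp}$ pointed to $n_2$ just before the WCAS, and the step above guarantees that both endpoints survive the WCAS, so the claim holds. The main obstacle, as in the analogous Claim~\ref{claim-update-postcondition}, is precisely the content-preservation step: we must rule out an ABA-style reuse of $a_1$ or $a_2$ in between the WCAS and our observation, and the only tool for doing so is the strict monotonicity of versions across reclamation boundaries given by Claims~\ref{claim-alloc-safe} and~\ref{claim-version-correct}. Once that is in place the rest is essentially bookkeeping about what the WCAS writes.
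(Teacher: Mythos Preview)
Your proposal is correct and follows essentially the same approach as the paper's proof: both observe that the successful WCAS writes a marked pointer with version $\max\{b_1,b_2\}$, then argue by contradiction via Claims~\ref{claim-version-correct} and~\ref{claim-alloc-safe} that neither $n_1$ nor $n_2$ can have been reclaimed (else the stored version would strictly exceed $\mathit{exp\_v}$). Your explicit appeal to Claim~\ref{claim-mark-preconditions} in the final step is a harmless addition the paper leaves implicit.
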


\begin{claimproof}
If the WCAS is successful, then the updated pointer is indeed marked (as the new pointer value is marked in line~\ref{mark}).
Since the WCAS is successful, after it is executed, a node allocated from $a_1$ points to a node allocated from $a_2$, with a version which is the maximum between $b_1$ and $b_2$, as calculated in line~\ref{mark-max-exp}. 
Assume by contradiction that either $n_1$ or $n_2$ are retired at this stage. By Claim~\ref{claim-version-correct}, the retire epoch of each one of them is at least the value calculated in line~\ref{mark-max-exp}. By Claim~\ref{claim-alloc-safe}, the birth epoch of a new node, allocated either from $a_1$ or $a_2$, must be strictly bigger than this value. By Claim~\ref{claim-version-correct}, in this case, the pointer version must be bigger than the value calculated in line~\ref{mark-max-exp} -- a contradiction. Therefore, if the WCAS is successful, then $n_1$'s next pointer points to $n_2$ right after the linearization point.
\end{claimproof}

Recall that Assumption~\ref{assumption-invalidate} hold for every mutable field. Moreover, the version of non-pointer mutable fields should always be equal to the node's birth epoch. Therefore, Claims~\ref{claim-mark-return-false},~\ref{claim-mark-preconditions} and~\ref{claim-mark-postconditions} can be easily adjusted to fit other mutable fields. 

\subsubsection{Inserting Checkpoints into Reclamation-Free Implementations} \label{sec-reclamation-free-executions}

We first show that, given any (and in particular, a reclamation-free) execution, checkpoint rollbacks preserve its set of possible history extensions. 
Recall that rollback-safe steps are defined in Definition~\ref{definition-rollback-safe}, and may be either local steps, shared-memory reads or shared memory writes. In addition, given a linearizable implementation, installing checkpoints is well-defined in Section~\ref{sec-checkpoints}.
After  a thread installs a checkpoint, a rollback to this checkpoint includes restoring the local variables, saved upon installing this checkpoint. Intuitively, by Definition~\ref{definition-rollback-safe}, threads execute only rollback-safe steps between checkpoints and thus, a rollback to the last checkpoint is always safe. We prove this notion in the following claim:

\begin{claim} \label{claim-compose}
Assume that $s_j$ is a rollback-safe step, executed by a thread T. Let $0<i<j$, and assume that for every $i<t < j$, if $s_t$ is a step executed by T, then $s_t$ is also a rollback-safe step.
Let $E'$ be the execution obtained by removing from $E_j$ all steps $s_t$ (for every $i<t \leq j$), executed by T during $E_j$. Then EXT($E'$)=EXT($E_j$).
\end{claim}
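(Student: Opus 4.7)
The plan is to induct on $k$, the number of $T$-steps appearing in positions $i+1, \ldots, j$ of $E_j$. The base case $k=0$ is trivial, since then $E' = E_j$. For the inductive step, the natural move is to peel off $T$'s latest rollback-safe step in the range via Definition~\ref{definition-rollback-safe}, but that definition only lets us discard the very last step of an execution; in general the latest $T$-step in our range is followed by some (possibly empty) block of steps executed by other threads.

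I would therefore first establish a late-removal lemma of the following shape: if $F = F_{m-1} \cdot s_m$ with $s_m$ a rollback-safe step by $T$, and $\gamma = s_{m+1}\cdots s_n$ is a finite sequence of steps none of which is executed by $T$, then $\mathrm{EXT}(F\cdot\gamma) = \mathrm{EXT}(F_{m-1}\cdot\gamma)$. The argument rests on two observations. First, by the characterization given right after Definition~\ref{definition-rollback-safe}, any rollback-safe step is a local step, a shared-memory read, or an unsuccessful CAS, none of which mutates shared memory; so $\gamma$ is executable after $F_{m-1}$ and induces exactly the same shared-memory trajectory as after $F$, and in particular $F\cdot\gamma$ and $F_{m-1}\cdot\gamma$ have identical histories (since $s_m$ is neither an invocation nor a response). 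Second, the only discrepancy between the two resulting configurations is in $T$'s private state after the omitted $s_m$, and Definition~\ref{definition-rollback-safe} tells us precisely that $\mathrm{EXT}(F)=\mathrm{EXT}(F_{m-1})$; so any future history realizable by $T$ starting from its post-$s_m$ state is also realizable (possibly via a different sequence of local steps) from its pre-$s_m$ state, and conversely. Combining the two, one shows each direction of history-set inclusion between $\mathrm{EXT}(F\cdot\gamma)$ and $\mathrm{EXT}(F_{m-1}\cdot\gamma)$ by matching extensions step by step: non-$T$ continuations transfer verbatim, and $T$ continuations are reconciled through the $\mathrm{EXT}(F)=\mathrm{EXT}(F_{m-1})$ equality.

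With the late-removal lemma in hand, the inductive step of the main claim is immediate. Let $t$ be the largest index in $(i,j]$ for which $s_t$ is a step by $T$ (so $t=j$ in the initial application, by hypothesis on $s_j$), and let $\gamma = s_{t+1}\cdots s_j$, which by choice of $t$ contains only non-$T$ steps. The lemma applied to $F=E_t$ and this $\gamma$ yields $\mathrm{EXT}(E_j) = \mathrm{EXT}(E_{t-1}\cdot s_{t+1}\cdots s_j)$. The resulting execution still has all of $T$'s remaining rollback-safe steps from positions $i+1,\ldots,t-1$, but exactly $k-1$ of them; the induction hypothesis closes the loop.

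The main obstacle is the late-removal lemma, since Definition~\ref{definition-rollback-safe} is stated only for the terminal step of an execution. The easier half of that lemma is the shared-memory trace, which falls out of the three-way classification of rollback-safe steps; the delicate half is the bookkeeping of $T$'s own future continuations, where one genuinely needs to invoke $\mathrm{EXT}(F)=\mathrm{EXT}(F_{m-1})$ to let $T$ mimic, starting from its pre-$s_m$ state, whatever sequence of subsequent $T$-steps it would have performed from its post-$s_m$ state (and vice versa), so that the sets of history extensions match exactly.
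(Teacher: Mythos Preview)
Your induction scheme differs from the paper's: the paper inducts on $j-i$, simply peeling off $s_j$ via Definition~\ref{definition-rollback-safe} (so $\mathrm{EXT}(E_j)=\mathrm{EXT}(E_{j-1})$), noting that the $E''$ obtained by removing $T$'s steps from $(i,j-1]$ in $E_{j-1}$ equals $E'$, and invoking the hypothesis on $(i,j-1)$.  It does not separately treat the case where $s_{j-1}$ belongs to another thread; in that sense your late-removal lemma is addressing a subtlety the paper's own argument glosses over.

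However, the argument you give for that lemma has a genuine gap.  You read the remark following Definition~\ref{definition-rollback-safe} as a \emph{characterization}: ``any rollback-safe step is a local step, a shared-memory read, or an unsuccessful CAS.''  The paper only asserts the forward implication---those three kinds of steps \emph{are} rollback-safe---never the converse.  Indeed, Section~\ref{sec-checkpoints} and the linked-list example in Appendix~\ref{sec-example} explicitly label the successful physical-deletion CAS a rollback-safe step.  So a rollback-safe $s_m$ may mutate shared memory, and then $F_{m-1}\cdot\gamma$ need not even be a valid execution: any step in $\gamma$ that reads the location $s_m$ wrote carries the post-$s_m$ value in its recorded output, which is inconsistent with the state after $F_{m-1}$.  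Your ``$\gamma$ is executable after $F_{m-1}$ and induces exactly the same shared-memory trajectory'' step therefore fails, and with it the ``non-$T$ continuations transfer verbatim'' half of the inclusion argument.

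To repair the lemma you would have to argue purely from the semantic equality $\mathrm{EXT}(F)=\mathrm{EXT}(F_{m-1})$, which speaks about histories rather than configurations.  That is substantially harder: given an extension of $F\cdot\gamma$ you do obtain, via $\mathrm{EXT}(F)=\mathrm{EXT}(F_{m-1})$, an extension of $F_{m-1}$ with the same history, but nothing forces that extension to begin with the particular step sequence $\gamma$, which is what you need for the conclusion $\mathrm{EXT}(F\cdot\gamma)=\mathrm{EXT}(F_{m-1}\cdot\gamma)$.
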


\begin{claimproof}
We are going to prove the claim by induction on $j-i$. For the base case, the claim holds by Definition~\ref{definition-rollback-safe}.
For the induction step, assume that the claim holds for $j-1$. Let $E''$ be the execution obtained by removing all steps $s_t$ (for every $i<t \leq j-1$), executed by T during $E_{j-1}$. Then $E''=E'$, and by the induction hypothesis, $EXT(E'')=EXT(E_{j-1})$. I.e., $EXT(E')=EXT(E_{j-1})$, and by Definition~\ref{definition-rollback-safe}, $EXT(E')=EXT(E_j)$, and the claim holds.
\end{claimproof}

Our next goal is to show equivalence between any given linearizable implementation and the implementation obtained after inserting checkpoints and rollback instructions into the given code.
Let $E$ be a linearizable execution, and assume that $s_j$ is a rollback step by a thread T. Since checkpoints are installed upon every operation invocation, there exists at least one step $s_i$ (for some $i<j$) which is T's last checkpoint visit. W.l.o.g., let $i$ be the maximal such index before $j$. Then by Claim~\ref{claim-compose}, executing $s_j$ maintains the set of possible history extensions. In particular, since the original execution is linearizable, then integrating it with checkpoints and rollback steps maintains its linearizability. We conclude with the next Corollary to Claim~\ref{claim-compose}, using its transitivity property:

\begin{corollary} \label{corollary-rollback-maintain}
Given any linearizable implementation that satisfy the assumptions from Section~\ref{sec-assumptions}, integrating it with checkpoints according to the guidelines from Section~\ref{sec-checkpoints}, and any number of rollback instructions, maintains its linearizability.
\end{corollary}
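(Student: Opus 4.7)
The plan is to prove the corollary by induction on the number of rollback steps that appear in the integrated execution, using Claim~\ref{claim-compose} as the central tool. For the base case, when no rollback step is ever taken, the only effect of the instrumentation is the installation of checkpoints, which consists of saving local variables and updating a local program-counter reference. These are purely local steps and therefore cannot alter the set of history extensions; so the resulting history is identical to some history of the original linearizable implementation and is itself linearizable.

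For the inductive step, consider an execution $E$ that contains $k \geq 1$ rollback steps, and assume the claim for every integrated execution containing fewer rollbacks. Let $s_j$ be the first rollback step in $E$, performed by some thread $T$. By the installation rule of Section~\ref{sec-checkpoints}, $T$ has installed a checkpoint either at the invocation of its currently pending operation or immediately after its most recent successful rollback-unsafe CAS; let $s_i$, with $i<j$ maximal, be that checkpoint. Because every rollback-unsafe step of $T$ is immediately followed by a checkpoint installation, maximality of $i$ forces every $T$-step in the open interval between $s_i$ and $s_j$ to be rollback-safe. Applying Claim~\ref{claim-compose} to that segment gives an execution $E''$ obtained from $E_{j-1}$ by removing those $T$-steps, with EXT($E''$) $=$ EXT($E_{j-1}$). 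The rollback step $s_j$ then simply restores $T$'s local state to the contents saved at $s_i$, which is indistinguishable from not having taken those intermediate $T$-steps at all. Splicing this argument into $E$ yields an execution $E'$ with strictly fewer rollback steps and with the same set of history extensions as $E$; by the inductive hypothesis $E'$'s history is linearizable, hence so is $E$'s.

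The main obstacle is establishing the seemingly innocent claim that between any two consecutive checkpoints visited by the same thread no rollback-unsafe step occurs. This is a structural property of the installation rule, not of Claim~\ref{claim-compose}, and it hinges on the classification of successful data-structure updates as rollback-unsafe. One has to verify that the installation rule of Section~\ref{sec-checkpoints} really does place a checkpoint immediately after \emph{every} potentially rollback-unsafe successful CAS, and that operation invocations (which are themselves rollback-unsafe by Definition~\ref{definition-rollback-safe}) are also always covered. Once this coverage is confirmed, the ``transitivity'' remark preceding the corollary is essentially the observation that each rollback step can be eliminated independently by the above argument, and iterating the inductive step disposes of all of them, yielding the desired linearizability.
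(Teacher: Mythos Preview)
Your proposal is correct and follows essentially the same approach as the paper: both arguments identify, for each rollback step $s_j$ by a thread $T$, the last preceding checkpoint $s_i$, observe (via the installation rule of Section~\ref{sec-checkpoints}) that every $T$-step in between is rollback-safe, and then invoke Claim~\ref{claim-compose} to conclude that the rollback preserves the set of possible history extensions. The paper compresses the iteration over multiple rollbacks into the single word ``transitivity,'' whereas you spell it out as an induction on the number of rollback steps and explicitly flag the structural coverage property (every rollback-unsafe step is immediately followed by a checkpoint) as the key point to verify---which is indeed the substantive content the paper leaves implicit.
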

\subsubsection{The Inverse Transformation} \label{sec-inverse}

Let $I^{RF}$ be a linearizable reclamation-free implementation, satisfying the assumptions from Section~\ref{sec-assumptions}. By Corollary~\ref{corollary-rollback-maintain}, we can assume that it also contains checkpoint instructions, and that its linearizability property is not affected by inserting rollback steps.
Let $I$ be the implementation obtained after applying the transformation from Section~\ref{sec-algorithm} to $I^{RF}$.
Given a VBR-integrated execution $E \in I$, we construct a reclamation-free execution $E^{RF} \in I^{RF}$ by induction on $E$'s length\footnote{We ignore {\em retire()} calls, as they do not affect linearizability}.

First, we map node addresses as follows: if a node is allocated during $E$ from a memory address $a$ and with a birth epoch $b$, then during $E^{RF}$, it is mapped to a node, allocated from the memory address $\langle a,b\rangle$. This mapping is legal since the memory is considered to be unbounded in the reclamation-free setting. Moreover, by Claim~\ref{claim-alloc-safe}, it is safe to assume that every two different nodes are allocated from different addresses in the reclamation-free setting.

In addition, any local variable from the original implementation $I^{RF}$ (not added after the VBR integration) is mapped to itself. Now, consider the finite VBR-integrated execution $E_{i-1} \in I$ and its respective reclamation-free execution $E_{i-1}^{RF} \in I^{RF}$, obtained so far.

\begin{enumerate}
    \item If $s_i$ is an operation invocation, an operation response, or a local step from the original execution (not added after the VBR integration), then it is also appended to $E_{i-1}^{RF}$.
    
    \item Assume that $s_i$ is a return from an {\em alloc($k$)} call in line~\ref{alloc-return}, let $a$ be the address $n$ is allocated from, and let $b$ be its birth epoch (as set in line~\ref{alloc-update-birth}). Then a respective allocation of a node from the memory address $\langle a,b\rangle$, by the same thread, and an initialization of its key to be $k$, are appended to $E_{i-1}^{RF}$.
    
    
    
    \item Assume that $s_i$ is the read of a node's {\em next} pointer in line~\ref{get-ref-get-ref}, by a thread T. Assume that the input parameter to this {\em getNext()} call is a pointer to a node, allocated from a memory address $a$ with a birth epoch $b$. Let $m$ be the node , allocated from $\langle a,b\rangle$ during $E_{i-1}^{RF}$.
    If T returns in line\ref{get-ref-return}, then a read of $m$'s {\em next} pointer and a respective {\em unmark()} call, both by T, are appended to $E_{i-1}^{RF}$.
    
    \item Assume that $s_i$ is the read of a node's key in line~\ref{get-key-get-key}, by a thread T. Assume that the input parameter to this {\em getKey()} call is a pointer to a node, allocated from a memory address $a$ with a birth epoch $b$. Let $m$ be the node , allocated from $\langle a,b\rangle$ during $E_{i-1}^{RF}$.
    If T returns in line\ref{get-key-return}, then a read of $m$'s key is appended to $E_{i-1}^{RF}$.
    
    \item Assume that $s_i$ is the read of a node's key, after this key has been previously saved in a local pointer by a thread T (for more details, see Section~\ref{sec-read}). Assume that this node is allocated from the address $a$ and with a birth epoch $b$. Then a read of the key of the node, allocated from $\langle a,b\rangle$, is appended to $E_{i-1}^{RF}$.
    
    \item Assume that $s_i$ is the read of a node's {\em next} pointer in line~\ref{is-valid-is-valid}, by a thread T. Assume that the input parameter to this {\em isMarked()} call is a pointer to a node, allocated from a memory address $a$ with a birth epoch $b$. Let $m$ be the node , allocated from $\langle a,b\rangle$ during $E_{i-1}^{RF}$.
    If T returns in line~\ref{is-valid-return-res}, then a respective {\em isMarked()} call is appended to $E_{i-1}^{RF}$.
    
    \item Assume that $s_i$ is returning TRUE in line~\ref{is-valid-return-false}, by a thread T. Assume that the input parameter to this {\em isMarked()} call is a pointer to a node, allocated from a memory address $a$ with a birth epoch $b$. Let $m$ be the node , allocated from $\langle a,b\rangle$ during $E_{i-1}^{RF}$.
    Then a read of $m$'s {\em next} pointer and a respective {\em isMarked()} call, both by T, are appended to $E_{i-1}^{RF}$.
    \item Assume that $s_i$ is the execution of line~\ref{cas-wcas}. Let $a_1,b_1,a_2,b_2,a_3,b_3$ be the input parameters to the respective {\em update()} call. For every $i \in \{1,2,3\}$, let $n_i$ be the node allocated from $\langle a_i,b_i\rangle$ during $E_{i-1}^{RF}$. Then a CAS on $n_1$'s {\em next} field, with $n_2$ as the expected value and $n_3$ as the new one, is appended to $E_{i-1}^{RF}$.
    
    \item Assume that $s_i$ is the read of a node's {\em next} pointer in line~\ref{mark-read-next}, by a thread T. Assume that the input parameter to this {\em mark()} call is a pointer to a node, allocated from a memory address $a$ with a birth epoch $b$. Let $m$ be the node , allocated from $\langle a,b\rangle$ during $E_{i-1}^{RF}$.
    If T returns in line~\ref{invalidate-wcas}, then a read of $m$'s {\em next} pointer into a local variable $exp$ is appended to $E_{i-1}^{RF}$.
    
    \item Assume that $s_i$ is the execution of line~\ref{invalidate-wcas}, by a thread T. Assume that the input parameter to this {\em mark()} call is a pointer to a node, allocated from a memory address $a$ with a birth epoch $b$. Let $m$ be the node , allocated from $\langle a,b\rangle$ during $E_{i-1}^{RF}$, and let $exp$ be the respective local variable from the previous transformation step. Then a mark of $m$'s {\em next} field, with $exp$ as the expected value, is appended to $E_{i-1}^{RF}$.
    
    \item Assume that $s_i$ is returning FALSE in line~\ref{mark-return-false}, by a thread T. Assume that the input parameter to this {\em mark()} call is a pointer to a node, allocated from a memory address $a$ with a birth epoch $b$. Let $m$ be the node , allocated from $\langle a,b\rangle$ during $E_{i-1}^{RF}$.
    Then a mark attempt of $m$'s {\em next} pointer is appended to $E_{i-1}^{RF}$.
    
    \item If $s_i$ is a checkpoint installation, then a checkpoint installation by the same thread is appended to $E_{i-1}^{RF}$.
    
    \item If $s_i$ is a rollback step, then a rollback step by the same thread is appended to $E_{i-1}^{RF}$.
\end{enumerate}

Other steps are discarded when constructing $E^{RF}$.
Recall that successful shared memory updates are only executed in lines~\ref{cas-wcas} and~\ref{mark}, which are the respective last instructions in both update methods. Therefore, checkpoint instructions are installed in the same places in both implementation.
Next, we prove the following equivalence between $E_i$ and its respective transformation output $E_i^{RF}$:

\begin{claim} \label{claim-legal-execution}
For every $i>0$, $E_i^{RF} \in I^{RF}$.
\end{claim}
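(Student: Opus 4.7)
The plan is to proceed by induction on $i$. The base case $i=0$ is immediate: $E_0$ and $E_0^{RF}$ are both the initial (empty) state, with all threads at their starting program counter and no allocated nodes, so $E_0^{RF}\in I^{RF}$ trivially. For the inductive step I assume $E_{i-1}^{RF}\in I^{RF}$ and show that the sequence appended by the transformation rule for $s_i$ is a legal continuation. The invariant I will carry through the induction has three parts: (i) each thread's program counter in $E_{i-1}^{RF}$ sits at the matching point of the original $I^{RF}$ code (stripped of VBR wrappers); (ii) each thread's $I^{RF}$-local variables agree with the corresponding $I$-local variables after applying the address mapping $a\mapsto\langle a,b\rangle$ on node pointers; and (iii) for every node in $E_{i-1}$ that is reachable (or locally referenced by some thread), the mapped node in $E_{i-1}^{RF}$ has the same key and its mutable fields point to the mapped targets, with the same validity/mark status.

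The straightforward cases in the transformation list are invocations, responses, original local steps, checkpoint installations, and rollback steps: each is mapped one-to-one to the same step in $E^{RF}$, and invariant (i)--(iii) is preserved by inspection. For the {\em alloc} case, Claim~\ref{claim-alloc-safe} guarantees that the address $\langle a,b\rangle$ has never been used before in $E_{i-1}^{RF}$, so a fresh allocation at that address is a legal step of $I^{RF}$, and the newly installed key initializes invariant (iii) for the new node. For the two read methods that return normally, Claim~\ref{claim-get-next-correct} and Claim~\ref{claim-get-key-correct} say that the returned data indeed agrees with the current content of the node at $\langle a,b\rangle$, so the appended pointer/key read in $E^{RF}$ yields the same value, preserving invariant (ii). For the {\em isMarked} method, Claim~\ref{claim-is-marked-correct} gives the analogous guarantee in both of its return branches. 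For the update methods, Claims~\ref{claim-update-precondition}--\ref{claim-update-postcondition} and Claims~\ref{claim-mark-return-false}--\ref{claim-mark-postconditions} say that the WCAS succeeds in $E$ exactly when the natural single-word CAS would succeed in $E^{RF}$, and that the resulting pointer value and mark bit agree; hence appending the corresponding CAS or mark step is legal and preserves invariant (iii).

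The main obstacle is justifying the clause ``other steps are discarded.'' Discarded $s_i$ include VBR bookkeeping (reading or incrementing the global epoch, pushing to retired/allocation lists, initializing the retire epoch, etc.), reads that end in a rollback via line~\ref{read-return-cp}, \ref{get-key-return-cp} or~\ref{alloc-return-cp}, and WCAS executions that return FALSE. I need to argue that skipping them in $E^{RF}$ does not break the invariant. None of these steps touches a field that is visible to $I^{RF}$ (epoch and version words do not exist there; retired/allocation pool links are internal to VBR), so they do not change the mapped shared state; and their effects on local variables are either to update VBR-only variables or to be undone at the next rollback. Formally, between consecutive checkpoints every such discarded step is rollback-safe by Definition~\ref{definition-rollback-safe}, so by Claim~\ref{claim-compose} erasing them from $E_i$ does not alter the reachable set of $I^{RF}$-histories; Corollary~\ref{corollary-rollback-maintain} then says the enclosing rollback, if any, is still a legal step of $I^{RF}$. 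The one subtlety here is that between a local pointer installation and its next use, the thread in $E$ may perform a rollback-unsafe update that in principle could have invalidated the pointer; Assumption~\ref{assumption-mutable} is what rules out a later ``silent'' dereference without an {\em isMarked} check, and Claims~\ref{claim-get-next-correct} and~\ref{claim-get-key-correct} translate this into the guarantee that whenever a read is not discarded, its mapped value is correct. Combining these observations case by case closes the induction.
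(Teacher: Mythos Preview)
Your approach is sound and in fact stronger than what the paper does for this claim. The paper's own proof is deliberately minimal: it notes explicitly that ``we do not claim for correctness at this stage. We just prove that the execution $E^{RF}$ follows the original reclamation-free code.'' In other words, the paper only verifies the program-counter part of your invariant (i), case-splitting over each transformation rule and observing that the appended step is the instruction occupying that slot in $I^{RF}$. Data consistency---your invariants (ii) and (iii)---is deferred entirely to the subsequent Claim~\ref{claim-equivalent}, which is where Claims~\ref{claim-alloc-safe}--\ref{claim-mark-postconditions} are actually invoked. Your combined induction is arguably cleaner, since control flow and values are genuinely entangled (a conditional branch in the original code can depend on a previously read value), but it proves strictly more than this particular claim asks.

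Two corrections. First, failed WCAS executions in lines~\ref{cas-wcas} and~\ref{invalidate-wcas} are \emph{not} discarded: transformation rules~8 and~10 append a CAS (respectively a mark attempt) to $E^{RF}$ regardless of outcome, and Claims~\ref{claim-update-precondition} and~\ref{claim-mark-preconditions} are what guarantee the appended CAS fails exactly when the WCAS did. Second, for the steps that are genuinely discarded there is nothing to argue beyond $E_i^{RF}=E_{i-1}^{RF}\in I^{RF}$; your appeal to Claim~\ref{claim-compose} and Corollary~\ref{corollary-rollback-maintain} is off-target, since those results concern removing rollback-safe steps from an execution of $I^{RF}$, not the act of skipping VBR-internal steps when building $E^{RF}$ from $E$. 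Your earlier observation that discarded steps touch only VBR-internal state already suffices to preserve your invariants.
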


\begin{proof}
We prove the claim by induction on $i$. For the induction step, assume that $E_{i-1}^{RF} \in I^{RF}$.
Note that we do not claim for correctness at this stage. We just prove that the execution $E^{RF}$ follows the original reclamation-free code.
\begin{itemize}
    \item If $s_i$ is an operation invocation, an operation response, or a local step from the original execution (not added after the VBR integration), then it is also appended to $E_{i-1}^{RF}$, and obviously, $E_i^{RF} \in I^{RF}$.
    
    \item Assume that $s_i$ is a return from an {\em alloc()} call in line~\ref{alloc-return}. Then a respective allocation appeared in $I_{RF}$. Since the return from the {\em alloc()} call is the only step resulting in appending an allocation to $E_{i-1}^{RF}$, the claim still holds for $E_i^{RF}$ in this case.

    

    \item Assume that $s_i$ is the read of a node's {\em next} pointer in line~\ref{get-ref-get-ref}, by a thread T. Then a respective read of this {\em next} pointer, along with an {\em unmark} call, appeared in $I_{RF}$. Since the read of line~\ref{get-ref-get-ref} is the only step resulting in appending the respective read and unmark of a {\em next} pointer to $E_{i-1}^{RF}$, the claim still holds for $E_i^{RF}$.
    
    \item Assume that $s_i$ is the read of a node's key in line~\ref{get-key-get-key}. Then a respective read of this key appeared in $I_{RF}$. Since the read of line~\ref{get-key-get-key} is the only step resulting in appending the respective read key to $E_{i-1}^{RF}$, the claim still holds for $E_i^{RF}$.
    
    \item Assume that $s_i$ is the read of a node's key, after it has been previously saved in a local variable. Then a respective read of this key appeared in $I_{RF}$, and the claim still holds for $E_i^{RF}$.

    \item Assume that $s_i$ is the read of a node's {\em next} pointer in line~\ref{is-valid-is-valid}, by a thread T, that returns in line~\ref{is-valid-return-res}. Then a respective {\em isMarked()} call appeared in $I_{RF}$. Since a respective {\em isMarked()} call is appended to $E_{i-1}^{RF}$ only with respect to the execution of line~\ref{is-valid-is-valid} in this case, the claim still holds for $E_i^{RF}$.

    \item Assume that $s_i$ is returning TRUE in line~\ref{is-valid-return-false}, by a thread T. Then a respective {\em isMarked()} call appeared in $I_{RF}$. Since a respective {\em isMarked()} call is appended to $E_{i-1}^{RF}$ only with respect to the execution of line~\ref{is-valid-return-false} in this case, the claim still holds for $E_i^{RF}$.

    \item Assume that $s_i$ is the execution of line~\ref{cas-wcas}. Then a respective pointer update appeared in $I_{RF}$. Since the WCAS execution in line~\ref{cas-wcas} is the only step resulting in appending the respective update to $E_{i-1}^{RF}$, the claim still holds for $E_i^{RF}$.
    
    \item Assume that $s_i$ is the read of a node's {\em next} pointer in line~\ref{mark-read-next}, by a thread T. Then a respective pointer read, in the scope of a {\em mark()} execution, appeared in $I_{RF}$. Since the execution of line~\ref{mark-read-next} is the only step resulting in appending the respective pointer read to $E_{i-1}^{RF}$, the claim still holds for $E_i^{RF}$.

    \item Assume that $s_i$ is the execution of line~\ref{invalidate-wcas}, by a thread T. Then a respective pointer mark, in the scope of a {\em mark()} execution, appeared in $I_{RF}$. Since the execution of line~\ref{invalidate-wcas} is the only step resulting in appending the respective marking to $E_{i-1}^{RF}$, the claim still holds for $E_i^{RF}$.
    
    \item Assume that $s_i$ is returning FALSE in line~\ref{mark-return-false}, by a thread T. Then a respective pointer mark, in the scope of a {\em mark()} execution, appeared in $I_{RF}$. Since the execution of line~\ref{mark-return-false} is the only step resulting in appending the respective marking to $E_{i-1}^{RF}$, the claim still holds for $E_i^{RF}$.
    
    \item Assume that $s_i$ is a checkpoint installation by a thread T. Then the previous step by T was either an operation invocation or a shared-memory update in line~\ref{cas-wcas} or~\ref{invalidate-wcas}. I.e, by the induction hypothesis, the last step in $E_{i-1}^{RF}$ it the execution of a checkpoint-trigger as well, and the claim still holds for $E_i^{RF}$.
    
    \item Assume that $s_i$ is a rollback step. By Corollary~\ref{corollary-rollback-maintain}, rollback steps can be installed at any stage, and the claim still holds for $E_i^{RF}$.
\end{itemize}
\end{proof}

After proving that $E^{RF}$ indeed follows the original code of the reclamation-free implementation (including checkpoints and rollback steps). It still remains to show that $E$ and $E^{RF}$ share the same history, as we prove in Claim~\ref{claim-equivalent} below.

\begin{claim} \label{claim-equivalent}

Let $i>0$. Then the following hold:
\begin{enumerate}
    \item \label{induction-shared} After $E_i$ and after $E_i^{RF}$, the shared memory view is identical.
    \item \label{induction-local} If a local variable $x$ contains a value $v$ after $E_i^{RF}$, then it either contains $v$ after $E_i$, or its content is indistinguishable to the executing thread from $\bot$ after $E_i$.
    \item \label{induction-history} $E_i$ and $E_i^{RF}$ share the same history.
\end{enumerate}
\end{claim}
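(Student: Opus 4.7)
\medskip
\noindent\textbf{Proof plan for Claim~\ref{claim-equivalent}.}

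The plan is to prove all three items simultaneously by induction on $i$. For the base case ($i=0$), both executions start from the same empty initial state, no local variables have been assigned, and both histories are empty, so all three items hold vacuously. For the inductive step, I assume the three items hold for $E_{i-1}$ and $E_{i-1}^{RF}$, and case-split on $s_i$ exactly according to the enumeration of translation rules in Section~\ref{sec-inverse}. For each case I need to verify that (a) the shared memory view (under the address mapping $a \mapsto \langle a,b\rangle$) is preserved, (b) every local variable that $E_i^{RF}$ sets is either identical in $E_i$ or is a value the executing thread treats as $\bot$, and (c) the new suffix produced in $E_i^{RF}$ consists only of non-invocation, non-response steps so the sub-sequences of invocation and response events agree.

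The history part (item~\ref{induction-history}) is the easiest: only two of the translation rules (operation invocation and response) add invocation or response events, and they add exactly the same event to $E_i^{RF}$ as appears in $s_i$; all other rules add only shared-memory or local steps, which do not appear in histories. So item~\ref{induction-history} follows immediately from the induction hypothesis plus inspection of the translation. For the shared memory view (item~\ref{induction-shared}), local steps, reads, unsuccessful CAS steps, rollbacks, and checkpoint installations leave the (reclamation-free) shared state unchanged, so for these cases item~\ref{induction-shared} follows directly from the hypothesis. The nontrivial cases are the three that modify reclamation-free shared memory: the return from \emph{alloc()} in line~\ref{alloc-return}, and the successful WCAS executions in lines~\ref{cas-wcas} and~\ref{invalidate-wcas}. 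For allocation, Claim~\ref{claim-alloc-safe} guarantees that no node with address $\langle a,b\rangle$ has ever been allocated before in $E^{RF}$, so appending a fresh allocation with that address is well-defined and gives the expected view. For line~\ref{cas-wcas}, Claims~\ref{claim-update-precondition} and~\ref{claim-update-postcondition} establish that the WCAS succeeds iff the translated CAS would succeed in the reclamation-free world and produces exactly the intended post-state; similarly for line~\ref{invalidate-wcas} I use Claims~\ref{claim-mark-preconditions} and~\ref{claim-mark-postconditions}, and the rule that converts a FALSE return in line~\ref{mark-return-false} is justified by Claim~\ref{claim-mark-return-false} together with the induction hypothesis.

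For the local variable preservation (item~\ref{induction-local}), the interesting cases are the optimistic reads: line~\ref{get-ref-get-ref} (\emph{getNext()}), line~\ref{get-key-get-key} and the subsequent re-reads of previously saved immutable fields (\emph{getKey()}), and line~\ref{is-valid-is-valid} (\emph{isMarked()}). In each case, if the method returns normally (reaching line~\ref{get-ref-return}, \ref{get-key-return}, or \ref{is-valid-return-res} respectively), Claims~\ref{claim-get-next-correct}, \ref{claim-get-key-correct}, and \ref{claim-is-marked-correct} guarantee that the value read is the current value of the corresponding node field in the reclamation-free view, so the local variables after $s_i$ in $E$ agree with those after the translated step in $E^{RF}$. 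The key subtlety is the alternative allowed by Claim~\ref{claim-get-next-correct}: the output may ``not be used by T.'' In that case the executing thread's future behavior is, by Assumption~\ref{assumption-mutable} and the formal definition of ``not used,'' indistinguishable from reading $\bot$, which is exactly what the relaxation in item~\ref{induction-local} allows. Thus the induction hypothesis carries through this case as well.

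The main obstacle I anticipate is precisely this last point: the stale-read case. A thread may read a value from a reclaimed node in line~\ref{get-ref-get-ref} without the epoch check catching it (because the epoch has not yet advanced, or the stale read was of an immutable field), and then continue for several steps before any mismatch becomes observable. I need to argue carefully that between the stale read and any future rollback-unsafe step, the thread only uses the value in ways covered by Assumption~\ref{assumption-mutable} (i.e., as a CAS expected value, as the target of a write, or only after an \emph{isValid()} check whose FALSE return is honored), so the execution is indistinguishable from one where the read returned $\bot$. Formalizing this indistinguishability without turning the proof into a long operational simulation is the delicate part; I would keep it contained by invoking the formal ``not used'' definition from Section~\ref{sec-assumptions} as a black box and relying on the translation rule's treatment of such cases.
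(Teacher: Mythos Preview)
Your proposal is correct and follows essentially the same approach as the paper: induction on $i$, a case-split on the form of $s_i$ mirroring the translation rules, and appeal to Claims~\ref{claim-is-marked-correct}--\ref{claim-mark-postconditions} together with Assumption~\ref{assumption-mutable} for the nontrivial read and update cases. One small point to tighten: in your argument for item~\ref{induction-history} you say that invocation and response rules ``add exactly the same event,'' but for a response step the returned value is computed from local variables, so you must invoke item~\ref{induction-local} (as the paper does) to conclude the outputs coincide.
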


\begin{claimproof}
We are going to prove by induction that if the invariants hold for $E_{i-1}$ and $E_{i-1}^{RF}$, then they hold for $E_i$ and $E_i^{RF}$ as well.
For the inductive step, assume that the claim holds for $E_{i-1}$ and $E_{i-1}^{RF}$.
\begin{enumerate}
    \item If $s_i$ is either discarded or has no influence on shared memory, then the claim still holds by the induction hypothesis (invariant~\ref{induction-shared}). Otherwise, it must either be the execution of line~\ref{cas-wcas} or~\ref{mark}. By the induction hypothesis (invariants~\ref{induction-shared} and~\ref{induction-local}), and by Claims~\ref{claim-update-precondition}-\ref{claim-mark-postconditions}, the claim still holds for $E_i$ and $E_i^{RF}$.
    
    \item If $s_i$ is discarded, then the claim still holds by the induction hypothesis (invariant~\ref{induction-local}). 
    Otherwise, if $s_i$ is an operation invocation, an operation response, or a local step from the original execution (not added after the VBR integration), then the claim still holds by the induction hypothesis (invariants~\ref{induction-shared} and~\ref{induction-local}). Otherwise:
    
    \begin{itemize}
        \item If $s_i$ is the return from an {\em alloc()}, resulting in the read of the allocated node into a thread's local pointer, then by our transformation, the respective node, with the same key, is read into the thread's same local variable in $E_i^{RF}$, and the claim still holds.
        \item If $s_i$ is the read of a node's key into a local variable, then by the induction hypothesis (invariant~\ref{induction-shared}) and  Claim~\ref{claim-get-key-correct}, the claim still holds.
        \item If $s_i$ is the read of a node's {\em next} field into a local variable (that is not ignored by the executing thread), then by Assumption~\ref{assumption-mutable}, the induction hypothesis (invariant~\ref{induction-shared}), Claim~\ref{claim-get-next-correct} and Claim~\ref{claim-mark-preconditions}, then the claim still holds.
        If the executing thread ignores the read value, then 
        it is indistinguishable to it from $\bot$, and the claim still holds.
        \item If $s_i$ is an {\em isMarked()} call then by the induction hypothesis (invariant~\ref{induction-shared}) and  Claim~\ref{claim-is-marked-correct}, the claim still holds.
        \item If $s_i$ is a read of a pointer in line~\ref{mark-read-next}, then 
        it is indistinguishable to the executing thread from $\bot$, and the claim still holds.
        \item If $s_i$ is a pointer update, then its response is read into a local variable. By the induction hypothesis (invariants~\ref{induction-shared} and~\ref{induction-local}) and Claims~\ref{claim-update-precondition}-\ref{claim-mark-postconditions}, the claim still holds.
        \item If $s_i$ is a checkpoint installation then the same respective local variables are logged. By the induction hypothesis (invariants~\ref{induction-shared} and~\ref{induction-local}), the claim still holds.
        \item If $s_i$ is a rollback step, then by the induction hypothesis (invariants~\ref{induction-shared} and~\ref{induction-local}), both executions rollback to the same checkpoint and restore the same local variables, and the claim still holds.
    \end{itemize}
    
    \item If $s_i$ is neither an invocation nor a response step, then the history of both executions remains unchanged. If $s_i$ is an invocation step, then by our construction, an identical invocation is appended to $E_{i-1}^{RF}$. If $s_i$ is a response step, then by our construction, a respective response step is appended to $E_{i-1}^{RF}$. By invariant~\ref{induction-local}, the response output is identical in both executions. Finally, by the induction hypothesis (invariant~\ref{induction-history}), the claim still holds.
   
\end{enumerate}
\end{claimproof}

By Claims~\ref{claim-legal-execution} and~\ref{claim-equivalent}, every VBR-integrated execution $E \in I$ has the same history as a certain reclamation-free execution $E^{RF} \in I^{RF}$. By our assumption, $I^{RF}$ contains only linearizable executions, which derives Lemma~\ref{lemma-ds-linearizable}.

\subsection{VBR Maintains Lock-Freedom} \label{sec-lock-freedom}

In this section we prove that VBR maintains the lock-freedom guarantee of the original linearizable and lock-free implementation:

\begin{lemma} \label{lemma-ds-lock-free}
Given a lock-free linearizable data-structure implementation, that satisfy all of the assumptions presented in Section~\ref{sec-assumptions}, the implementation remains lock-free after integrating it with VBR, according to the modifications described in Sections~\ref{sec-allocator}-\ref{sec-modifications}.
\end{lemma}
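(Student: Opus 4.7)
The plan is to show that in any infinite execution of the VBR-integrated implementation, infinitely many operations complete. First I would observe that each code addition introduced by VBR (versioned fields and WCAS checks, the epoch reads accompanying shared-memory loads, manipulation of the local and global pools, and checkpoint installation) imposes only a bounded amount of extra work per step of the original algorithm. Hence, if the execution has a suffix in which no thread performs a rollback, it is effectively a run of the original lock-free implementation with constant per-step overhead, and progress follows from the assumed lock-freedom of the underlying data structure.

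The remaining case is an infinite tail of rollbacks, which I plan to dispatch by a charging argument walking the chain rollback $\to$ epoch increment $\to$ retirement $\to$ rollback-unsafe modification. A thread that rolls back refreshes its local copy of the global epoch, so between any two successive rollbacks by the same thread, some other thread must have incremented $e$. With a fixed thread set, an infinite number of rollbacks therefore forces an infinite number of successful increments of $e$ in line~\ref{alloc-update-e}. Each such increment is issued by an allocating thread that observed retire\_epoch $=$ my\_e on some node, and after the increment no further increment can be charged to that same retirement event, because any subsequent allocation attempt on that node finds retire\_epoch $<$ my\_e. Infinitely many increments thus force infinitely many distinct retirements, each preceded by a successful unlinking CAS. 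By the checkpoint rule of Section~\ref{sec-checkpoints}, every such successful unlinking is immediately followed by the installation of a fresh checkpoint, so no later rollback can undo it; by Assumption~\ref{assumption-mutable} and the structure of the underlying lock-free algorithm, the operation that performed the unlinking completes after a bounded number of rollback-safe steps. This yields infinitely many completed operations, contradicting the assumption.

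The main obstacle will be keeping this charging argument airtight against corner cases. In particular, rollbacks that originate inside \texttt{alloc} itself (line~\ref{alloc-return-cp}) and inside \texttt{retire} (line~\ref{retire-return-cp}) must also be charged to prior epoch increments, and one has to verify that the node a rolling-back allocator re-appends to its allocation list (line~\ref{alloc-realloc}) eventually becomes available to some live thread -- otherwise the local pool could in principle be monopolized by a stalled thread. This is handled by the global fallback pool together with the immediate-reclamation property of VBR, but making the argument rigorous requires specifying precisely when retired nodes migrate between local and global pools and verifying that a live thread always eventually obtains a node. A secondary subtlety is that a single increment may trigger up to one rollback per thread, but with a constant number of threads this factor is absorbed without affecting the asymptotic accounting, and Claim~\ref{claim-alloc-safe} already supplies the bookkeeping needed to rule out spurious double charges.
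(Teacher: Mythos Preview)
Your charging argument has a genuine gap at the final link of the chain. You write that ``every such successful unlinking is immediately followed by the installation of a fresh checkpoint,'' but this is false: the paper is explicit (Section~\ref{sec-checkpoints} and the linked-list example in Appendix~\ref{sec-example}) that physical removal of a marked node is a \emph{rollback-safe} step, so no checkpoint is installed after it. Checkpoints follow only rollback-\emph{unsafe} updates such as a successful logical insertion or a successful mark. Consequently your inference ``so no later rollback can undo it'' does not apply to unlinking, and the subsequent claim that ``the operation that performed the unlinking completes after a bounded number of rollback-safe steps'' is neither an assumption of the paper nor a consequence of lock-freedom of the base algorithm: between one rollback-unsafe step and operation termination a thread may execute an unbounded number of rollback-safe steps (e.g., the \texttt{find()} call in line~\ref{remove-find-physical-removal}), during which it can continue to roll back. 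Thus the chain rollback $\to$ epoch increment $\to$ retirement $\to$ operation completion is broken at its last step.

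The paper's proof avoids this by running the contradiction argument in the opposite direction and leaning on linearizability (Lemma~\ref{lemma-ds-linearizable}) rather than on any per-operation step bound. Assuming a non-terminating suffix, Claim~\ref{claim-inf-logical-insertions} shows that there can be only finitely many logical insertions (otherwise a single pending operation would have two linearization points), hence finitely many logical deletions and retirements; Claim~\ref{claim-inf-epoch-changes} then bounds the number of epoch increments (since each increment is charged to a retire epoch), and Claim~\ref{claim-inf-rollbacks} bounds the rollbacks. Only after establishing that the tail is rollback-free does the paper invoke lock-freedom of the underlying implementation, via the transformation to the reclamation-free execution $E^{RF}$ (Claims~\ref{claim-legal-execution} and~\ref{claim-equivalent}). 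Your approach could in principle be salvaged by replacing the broken step with ``infinitely many retirements imply infinitely many logical deletions, each a linearization point of a distinct \texttt{remove} operation,'' but that is precisely the linearizability argument of Claim~\ref{claim-inf-logical-insertions}, and once you use it the detour through checkpoint installation becomes unnecessary.
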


We first prove that VBR is {\em robust}. I.e., a
stalled thread may not prevent the reclamation of an unbounded number of retired objects and therefore, the system never blocks due to a possible exhaustion of the heap. 
We start by showing that the number of unused nodes that are not retired is bounded:

\begin{claim} \label{claim-recycled}
Let $E$ be a VBR-integrated execution and let $C_i$ be a configuration. The number of unlinked nodes (see Assumption~\ref{assumption-life}) that are not retired at $C_i$ is bounded.
\end{claim}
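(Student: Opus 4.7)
The plan is to bound the number of unlinked-but-not-yet-retired nodes at any configuration $C_i$ by a quantity that depends only on the number of executing threads (which is fixed) and a small constant per thread that depends on the data-structure operations. Thus the total will be $O(T)$, where $T$ is the number of threads, and hence bounded.

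First I would establish ownership: by Assumption~\ref{assumption-cas} and Assumption~\ref{assumption-life}, a node $n$ transitions into the Unlinked stage only through a successful CAS performed by some specific thread $T$. The code structure makes $T$ the thread responsible for subsequently invoking {\em retire($n$)}. Next I would argue that, within the normal (non-rollback) control flow, the {\em retire()} call follows the successful unlinking CAS after only a constant number of steps inside the same operation; and since a lock-free data-structure operation can physically unlink at most a bounded number (typically $O(1)$) of nodes per operation, the number of nodes for which $T$ is currently an ``owner without having yet retired'' is bounded by a constant $c$.

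The only remaining case is when $T$ executes a rollback step after unlinking $n$ but before calling {\em retire($n$)}: this is precisely the scenario addressed by the mechanism in Appendix~\ref{sec-checkpoint-extra} (mentioned in the footnote just before Section~\ref{sec-read}), where a thread handles its pending unlinked nodes right before rolling back, so that after the rollback its ``owes'' list is empty. Combining these facts, at any configuration $C_i$ each thread owns at most $c$ unlinked-but-not-retired nodes, and since every unlinked-but-not-retired node must be owned by some thread, the total count at $C_i$ is at most $c \cdot T$, which is bounded.

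The main obstacle is the case in which a thread is arbitrarily stalled or crashes between the unlinking CAS and the matching {\em retire()} invocation, and the subtle issue that the rollback mechanism itself can be triggered (by a stale epoch observation) mid-operation. Handling this requires invoking the Appendix~\ref{sec-checkpoint-extra} procedure rigorously: one must verify that the pre-rollback handoff retires every node for which the thread is owner, and that a crashed thread between unlink and retire can account for at most $c$ nodes (since it was inside a single operation at the moment of the crash). Once that is formalized, the $O(T)$ bound follows directly.
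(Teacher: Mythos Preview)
Your overall strategy—assign each unlinked-not-retired node to an owning thread, bound the per-thread debt, and invoke the pre-rollback handling of Appendix~\ref{sec-checkpoint-extra}—mirrors the paper's. But two of the supporting claims you rely on are incorrect and would have to be replaced.

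First, your ownership rule is wrong: the thread performing the unlinking CAS is not, in general, the thread responsible for calling \emph{retire()}. In the paper's own linked-list example, \emph{find()} may trim an arbitrarily long chain of marked nodes in a single \emph{update()} (Figure~\ref{fig:find}, line~\ref{find-physical-deletion}) yet never calls \emph{retire()}; each trimmed node is retired by the thread that \emph{marked} it (Figure~\ref{fig:remove}, line~\ref{remove-retire}). The paper therefore does not tie ownership to the unlinking CAS; it relies on Assumption~\ref{assumption-life}'s designated retiring thread together with Appendix~\ref{sec-checkpoint-extra}, which lets a thread enumerate, via its local pointers (and chains of \emph{next} pointers from them), exactly the nodes it still owes before rolling back.

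Second, and for the same reason, your assertion that ``a lock-free data-structure operation can physically unlink at most a bounded number (typically $O(1)$) of nodes per operation'' is false: Harris-style trimming unlinks unboundedly many nodes in one step. The per-thread bound still holds, but because each such node is owed by its \emph{marker} (who owes at most one per pending operation), not because unlinking is bounded—so your justification for the constant $c$ does not go through as written. The paper's proof also covers a case you omit: nodes allocated since the last checkpoint that never became reachable, which are pushed back onto the allocation list (not retired) before a rollback; these are the first of the two cases the paper enumerates.
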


\begin{claimproof}
During $E$, an unlinked node $n$ must be in one of the following statuses:

\begin{enumerate}
    \item $n$ was allocated by a thread $T$, and has not become reachable since, and $T$ is now performing a rollback step. After $T$ performs its rollback step, no thread will have a local pointer to $n$, and by Assumption~\ref{assumption-life}, $n$ will also not be reachable from shared-memory.
    \item $n$ has gone through stages~\ref{node-allocated}-\ref{node-unlinked} of Assumption~\ref{assumption-life}, and has not been retired yet.
\end{enumerate}

By Assumption~\ref{assumption-life}, there exist no other possibilities. 
The first case is handled as stated in Appendix~\ref{sec-checkpoint-extra}. Before $T$ rolls back, it appends $n$ to its allocation list. Although it is not a standard retirement, it makes sure that $n$ is recycled.

The second case is naturally handled, according to Assumption~\ref{assumption-life}. If the retiring thread is given with enough scheduler time, then it should eventually retire $n$. If it is not given enough time, then the number of unlinked nodes that are still not retired is still bounded.

In addition, as described in Appendix~\ref{sec-checkpoint-extra}, if $T$ is forced to execute a roll-back step before it gets the chance to retire $n$, then it retires $n$ right before rolling back (i.e., if $n$'s retire epoch is $\bot$, it sets its retire epoch to be the current global epoch and appends it to its retired nodes list).
Notice that if $n$'s retirement depends on following steps (i.e., there exists an extension of the current execution in which $T$ does not retire $n$), then $n$ should not be retired at this stage anyway.
\end{claimproof}

After showing that the number of unused nodes that are not retired is bounded, we show that the number of retired nodes that are not reclaimed is also bounded.
Recall that the executing threads use local retired nodes list. Therefore, a stalled thread may prevent the reclamation of the nodes residing in its own retired nodes list. One can easily add a {\em stealing} mechanism, allowing other threads to reclaim nodes that were retired by other threads. However, we have not implemented such a mechanism, since the size of these lists is bounded (and pre-defined), and does not affect the system progress in practice. 
For showing that VBR is robust, it still remains to show that a stalled thread does not affect the reclamation of nodes, retired by other threads. We prove a stronger claim:

\begin{claim} \label{claim-robust}
Let $E$ be a VBR-integrated execution, and let $s_i$ and $s_j$ be two consecutive calls to the {\em alloc()} method by a thread T. Then at least one call returns a new allocated node in line~\ref{alloc-return}.
\end{claim}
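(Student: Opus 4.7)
The plan is to prove the claim by contradiction: assume $s_i$ fails and show that $s_j$ must succeed. Let $\mu_i$ and $\mu_j$ denote the value of T's local \textit{my\_e} at the start of $s_i$ and $s_j$, respectively, let $n$ denote the node that T reads at line~\ref{alloc-retrieve} during $s_i$, and write $R$ for the value of $n$'s \textit{retire\_epoch} field at that moment.

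Since $s_i$ fails, control must pass through the conditional branch at lines~\ref{alloc-if}--\ref{alloc-return-cp}: T executes the CAS on the shared epoch counter (line~\ref{alloc-update-e}), re-pushes $n$ back to the head of its local allocation list (line~\ref{alloc-realloc}), and rolls back. Right after this CAS, I claim that $e \geq \mu_i + 1$: if the CAS succeeded then $e$ moved from $\mu_i$ to $\mu_i + 1$; otherwise the CAS found $e \neq \mu_i$ at the moment of the compare, and by monotonicity of $e$ (Observation~\ref{observation-e-increase}) this forces $e > \mu_i$. Then, as described in Section~\ref{sec-checkpoints}, the rollback refreshes T's local epoch, so $\mu_j \geq \mu_i + 1$.

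The core step is the invariant that any node sitting at the head of T's local allocation list when T consumes it has \textit{retire\_epoch} at most T's current \textit{my\_e}. The argument is that $n$ reached T's local list via an earlier \textit{retire()} call by T, which set $R := e.\text{get}()$ and, if the resulting $R$ exceeded T's \textit{my\_e}, immediately performed a rollback that refreshed \textit{my\_e} to the current $e$. Thus T's \textit{my\_e} was at least $R$ at the end of that retire step, and by monotonicity of \textit{my\_e} (refreshed only from the monotone $e$) the bound persists up to the present. Combining the invariant with the failure condition $R \geq \mu_i$ sharpens this inequality to $R = \mu_i$ exactly.

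Because line~\ref{alloc-realloc} re-inserts $n$ at the head of T's local list, and no other thread touches this list, $s_j$ reads the same $n$ at line~\ref{alloc-retrieve}, with $R$ unchanged. The test in line~\ref{alloc-if} now reduces to $\mu_i \geq \mu_j$, which fails because $\mu_j \geq \mu_i + 1$, so $s_j$ falls through to lines~\ref{alloc-update-birth}--\ref{alloc-return} and returns a freshly allocated node. The main obstacle is the invariant above: it is clean when $n$ was retired locally by T, but nodes entering T's local pool through the global pool require a companion hand-over invariant guaranteeing that T refreshes \textit{my\_e} from $e$ before exposing such a foreign node at the head of its allocation list. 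I would state this hand-over invariant alongside the robustness discussion in Appendix~\ref{sec-checkpoint-extra}; once recorded, the preceding argument applies uniformly.
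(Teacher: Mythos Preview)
Your argument has a genuine gap in the step ``Because line~\ref{alloc-realloc} re-inserts $n$ at the head of T's local list, and no other thread touches this list, $s_j$ reads the same $n$ at line~\ref{alloc-retrieve}.'' It is true that no \emph{other} thread touches the list, but T itself may. As described in Appendix~\ref{sec-checkpoint-extra} (and flagged by the footnote attached to Section~\ref{sec-checkpoints}), right before completing the rollback T appends to its allocation list any nodes it allocated since the last checkpoint that never became reachable. These pushes occur \emph{after} line~\ref{alloc-realloc}, so the head seen by $s_j$ may be such a node $m\neq n$. Your proof, as written, does not cover this case.

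The paper's proof handles exactly this by a two-case split: if $s_j$ retrieves some $m\neq n$, then $m$ was one of those never-reachable nodes, hence never retired, hence its \textit{retire\_epoch} is $\bot$ and the test in line~\ref{alloc-if} fails immediately; otherwise $s_j$ retrieves $n$ and the epoch-increase argument (which you also give) applies. Interestingly, your invariant ``the head's \textit{retire\_epoch} is at most T's current \textit{my\_e}'' would absorb the $m$-case for free, since $\bot$ trivially satisfies it; had you argued directly from the invariant at $s_j$ rather than asserting ``same $n$, same $R$'', the proof would close. Your observation that nodes arriving via the global pool need a separate hand-over argument is well taken and is indeed glossed over in the paper's proof; recording that refresh obligation is the right fix.
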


\begin{claimproof}
If the first call to the {\em alloc()} method returns in line~\ref{alloc-return}, then we are done. Otherwise, it returns in line~\ref{alloc-return-cp}, after T tries to increase the global epoch in line~\ref{alloc-update-e}, and after the new node, $n$, is appended to the allocation list in line~\ref{alloc-realloc}. 

When the {\em alloc()} method is called again by T, 
If the node retrieved from the allocation list is not $n$, then it must be a node $m$, returned to T's allocation list before performing the rollback step. The node $m$ is guaranteed to not be reachable yet (for more details, see Section~\ref{sec-checkpoints}). By Assumption~\ref{assumption-life}, $m$ is also guaranteed to not be retired. Therefore, $m$'s retire epoch is $\bot$, the condition in line~\ref{alloc-if} does not hold and $m$ is returned in line~\ref{alloc-return}.

Otherwise, $n$ is  node retrieved from the allocation list in line~\ref{alloc-retrieve}.
By assumption~\ref{assumption-life}, no other thread tries to meanwhile retire $n$, and thus, its retire epoch does not change before T calls the {\em alloc()} method for the second time.
By Observation~\ref{observation-e-increase}, the global epoch can only increase. Therefore, even if T's CAS in line~\ref{alloc-update-e} (during the first {\em alloc()} call) is not successful, the global epoch is guaranteed to be bigger than $n$'s retire epoch when T rolls-back in line~\ref{alloc-return-cp}. After executing line~\ref{alloc-return-cp}, T updates its local copy of the global epoch. Consequently, it is guaranteed to be bigger than $n$'s retire epoch during the second {\em alloc()} call, and the second call returns in line~\ref{alloc-return}.
\end{claimproof}

After showing that VBR is robust, it still remains to show that it does not foil the original implementation's lock-freedom guaranty.
To derive a contradiction, assume that there exists a VBR-integrated execution $E$ that does not satisfy the lock-freedom guaranty. Thus there is a suffix of $E$ in which no operations terminate and some operations take infinitely
many steps. 
Recall that there is a finite number of executing threads, and that we assume a well-formed execution. Therefore, at every given point during $E$, there is a finite number of pending invocations.

\begin{claim} \label{claim-inf-logical-insertions}
There is a finite number of logical node insertions, logical node removals and node retirements during $E$.
\end{claim}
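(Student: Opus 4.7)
The plan is to bound the counts of logical insertions, logical removals, and retirements in $E$ by first bounding the total number of operation invocations in $E$, and then observing that each invocation contributes at most one logical event, with retirements in turn bounded by logical removals.

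The first step is to bound the number of operation invocations. By the contradictory assumption of Section~\ref{sec-lock-freedom}, $E$ has a finite prefix $E_{\text{pre}}$ followed by an infinite suffix $E_{\text{suf}}$ in which no response step occurs. The number of invocations in $E_{\text{pre}}$ is finite. By well-formedness (Section~\ref{sec:linearizability}), each thread has at most one pending operation at any time, and since no response step occurs during $E_{\text{suf}}$, each of the finitely many threads issues at most one new invocation in the suffix. Hence the total number of invocations in $E$ is finite.

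Next, I will argue that each invocation contributes at most one logical insertion and at most one logical removal. By the sequential specification of the set-based data structures under consideration (linked-list, skiplist, hash-table), every {\em Insert}, {\em Delete}, or {\em Search} operation modifies the abstract set by at most one item. The crux is ruling out the possibility that the VBR rollback mechanism lets a single invocation log two such events: a successful logical insertion or removal is by definition a rollback-unsafe step (Definition~\ref{definition-rollback-safe}), and by the construction of Section~\ref{sec-checkpoints}, a checkpoint is installed immediately after every such successful update. Any subsequent rollback therefore returns the thread to a program counter strictly past the logical event, so it is not repeated within the same invocation. For retirements, Assumption~\ref{assumption-life} ensures that every node is retired at most once and only after being logically deleted, so the number of retirements is bounded by the number of distinct logical removals, giving the claim.

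I expect the main obstacle to be the ``at most one per invocation'' accounting for logical events: specifically, verifying that the checkpoint-after-rollback-unsafe-step rule indeed covers every step that could count as a logical insertion or removal in the data structures considered, so that no rollback trajectory can expose the thread's code to a repeat of such a step within the same invocation. Once this accounting is nailed down, the remaining pieces (finite prefix, well-formedness, and the retirement/removal chain from Assumption~\ref{assumption-life}) are routine.
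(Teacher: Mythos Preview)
Your argument is correct and takes a genuinely different route from the paper's. The paper proceeds by contradiction through the already-established Lemma~\ref{lemma-ds-linearizable}: assuming infinitely many logical insertions, pigeonhole places at least two of them inside a single pending operation $op$ executed by one thread in the non-terminating suffix, and since $E$ is linearizable this would force $op$ to admit two distinct linearization points, which is impossible; removals and retirements are then chained to insertions via Assumption~\ref{assumption-life} and Claim~\ref{claim-alloc-safe}. You instead cap the total number of invocations directly (finite prefix, plus at most one fresh invocation per thread in the response-free suffix by well-formedness) and then argue that each invocation contributes at most one logical event, using the checkpoint placement of Section~\ref{sec-checkpoints} to rule out rollback-induced repetition. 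Your route is more elementary in that it avoids invoking the full linearizability lemma as a black box. The one soft spot is the inference from ``the sequential specification changes the abstract set by at most one item'' to ``the implementation performs at most one logical insertion/removal per invocation'': that step is really a code-level fact about the particular list/skiplist/hash-table implementations (no helping on logical events, a single rollback-unsafe CAS per operation) rather than something that follows from the general assumptions of Section~\ref{sec-assumptions} alone, whereas the paper's detour through linearizability closes that gap uniformly for any implementation meeting those assumptions.
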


\begin{claimproof}
Assume by contradiction that there are infinitely many logical node insertions during $E$. 
In particular, there is a pending operation $op$ during which there are infinitely many logical node insertions.
By Assumption~\ref{assumption-life}, only previously allocated nodes can be logically inserted into the data-structure. Therefore, there exist at least two different points after the invocation of $op$, that are considered as logical insertions by the thread executing $op$. 
By Lemma~\ref{lemma-ds-linearizable}, $E$ is linearizable.
By the {\em non-blocking} property of linearizability (see~\cite{herlihy1990linearizability}), $op$ has two different linearization points -- a contradiction. Therefore, there is a finite number of logical node insertions during $E$.

By Assumption~\ref{assumption-life}, a node can only be logically removed after it is logically inserted, and can only be retired after it is logically removed. 
In addition, by Claim~\ref{claim-alloc-safe}, every node is retired exactly once.
Therefore, there is also a finite number of logical node removals and retirements during $E$.
\end{claimproof}

\begin{claim} \label{claim-inf-epoch-changes}
There is a suffix of $E$ in which the global epoch counter does not change.
\end{claim}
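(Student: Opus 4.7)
The plan is to show that the global epoch counter $e$ is bounded throughout $E$; since $e$ is monotonically non-decreasing by Observation~\ref{observation-e-increase}, boundedness immediately implies that it eventually stabilizes and hence that a suffix of $E$ exists in which $e$ does not change.

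First I would observe that the only code site that can modify $e$ is the CAS on line~\ref{alloc-update-e}, inside the \emph{alloc()} method. This CAS is attempted only after the guard in line~\ref{alloc-if} succeeds, i.e., only when the node $n$ retrieved from the thread's alloc\_list satisfies $n \rightarrow$ retire\_epoch $\geq$ my\_e. Moreover, for the CAS to actually increment $e$, its expected value my\_e must equal the current value of $e$. Combining these two conditions, whenever $e$ is successfully incremented, there exists at that moment some node whose retire\_epoch is at least the current value of $e$.

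Next I would invoke Claim~\ref{claim-inf-logical-insertions}, which guarantees that only finitely many retirements occur during $E$. Each retirement writes exactly one value into its node's retire\_epoch field, namely e.get() at the moment of retirement, which is a finite quantity; and by Claim~\ref{claim-alloc-safe} no node is retired more than once. Hence the set of all retire\_epoch values ever written during $E$ is finite and has a well-defined finite maximum $R$.

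Combining these observations, every successful CAS on line~\ref{alloc-update-e} occurs while $e \le R$, and each such CAS increments $e$ by exactly one, so at most $R+1$ such successful CAS events can occur during all of $E$. Consequently $e$ is bounded and, being monotone, stabilizes at some finite value, which yields the desired suffix of $E$ in which $e$ never changes. The only delicate point to justify carefully will be that $R$ is a finite number despite $E$ being infinite; this is exactly where Claim~\ref{claim-inf-logical-insertions} is essential, since a finite set of individually finite values has a finite maximum, whereas an unbounded number of retirements could in principle produce unbounded retire\_epoch values.
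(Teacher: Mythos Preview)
Your proposal is correct and follows essentially the same line as the paper's proof: both observe that $e$ is incremented only at line~\ref{alloc-update-e} when some node's retire\_epoch is at least the current epoch, invoke Claim~\ref{claim-inf-logical-insertions} to bound the set of retire epochs ever written, and conclude that $e$ is bounded and hence eventually stabilizes. Your version is slightly more explicit about the CAS guard and the finiteness argument, but the underlying idea is identical.
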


\begin{claimproof}
By Claim~\ref{claim-inf-logical-insertions}, there is a finite number of node retirements during $E$. Therefore, there is a suffix $\beta$ of $E$ in which node retirements are not executed anymore.
Let $e$ be the retire epoch of the last retired node during $E$. By Observation~\ref{observation-e-increase}, the retire epoch of every previously retired node is at most $e$. 

The global epoch counter can only be increased in line~\ref{alloc-update-e}, and only if it equal to the retire epoch of a newly allocated node. During $\beta$, all nodes have a retire epoch which is either at most $e$, or $\bot$. Therefore, the global epoch does not increase after it is incremented to $e+1$.
\end{claimproof}

\begin{claim} \label{claim-inf-rollbacks}
There is a suffix of $E$ in which rollback steps are never executed.
\end{claim}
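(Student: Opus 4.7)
The plan is to combine Claim~\ref{claim-inf-logical-insertions} and Claim~\ref{claim-inf-epoch-changes} to rule out, one by one, each of the four sources of rollback steps in the code of Figure~\ref{fig:interface}: lines~\ref{alloc-return-cp},~\ref{retire-return-cp},~\ref{read-return-cp}, and~\ref{get-key-return-cp}. Let $e^\ast$ be the stable value of the global epoch guaranteed by Claim~\ref{claim-inf-epoch-changes}, and let $\beta$ be a suffix of $E$ in which (i) the global epoch is constantly $e^\ast$, and (ii) no further retirements take place (such a $\beta$ exists by Claim~\ref{claim-inf-logical-insertions} combined with the proof of Claim~\ref{claim-inf-epoch-changes}, which in fact shows that every node's retire epoch is strictly smaller than $e^\ast$).

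First I would handle the retirement-based rollback at line~\ref{retire-return-cp}. By Claim~\ref{claim-inf-logical-insertions}, only finitely many {\em retire} invocations occur in $E$, so inside $\beta$ none is executed and this source disappears immediately. Next I would handle the read-based rollbacks at lines~\ref{read-return-cp} and~\ref{get-key-return-cp}. These fire only when a thread's local copy {\em my\_e} differs from the shared counter. By Observation~\ref{observation-e-increase}, a thread's {\em my\_e} never exceeds the shared counter, and after every rollback a thread refreshes {\em my\_e} to the current global epoch (see Section~\ref{sec-checkpoints}). Since the shared counter is constantly $e^\ast$ during $\beta$, each thread that executes a rollback inside $\beta$ sets {\em my\_e} to $e^\ast$ and, from that moment on, the guard $my\_e \neq e.get()$ in lines~\ref{read-if} and~\ref{get-key-if} can no longer evaluate to TRUE; a thread that never executes a read or rollback in $\beta$ trivially produces no further rollback. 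Consequently every thread can perform at most one rollback of this type inside $\beta$, and after a finite prefix of $\beta$ no such rollback occurs.

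Finally I would handle the allocation-based rollback at line~\ref{alloc-return-cp}. It is triggered only when the retrieved node's {\em retire\_epoch} is at least the thread's {\em my\_e}. After the finite prefix of $\beta$ handled above, every active thread has $\mathit{my\_e} = e^\ast$, while every previously retired node has retire epoch at most $e^\ast - 1$ (by the choice of $e^\ast$ in the proof of Claim~\ref{claim-inf-epoch-changes}); hence the test at line~\ref{alloc-if} evaluates to FALSE and no rollback is taken. Nodes whose retire epoch is $\bot$ (i.e.\ those returned to an allocation list without going through retirement, as described in Appendix~\ref{sec-checkpoint-extra}) also satisfy the test vacuously, since $\bot < e^\ast$ by convention.

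The main obstacle I anticipate is the third paragraph: one must be careful that a thread with a stale {\em my\_e} really does refresh it within $\beta$. The subtle point is that a stalled or crashed thread never takes steps and so produces no rollback, while a thread that takes infinitely many steps must eventually execute a shared-memory read (lines~\ref{get_next_birth},~\ref{read-if} or~\ref{get-key-if}) or an allocation, at which point a single rollback brings its {\em my\_e} up to $e^\ast$. Once this ``equalization'' is completed for the finitely many still-active threads, none of the four rollback triggers can fire, yielding the desired suffix.
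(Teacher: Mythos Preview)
Your proposal is correct and rests on the same key idea as the paper: once the global epoch stabilizes at some value $e^\ast$, any thread that performs a rollback refreshes $\mathit{my\_e}$ to $e^\ast$, and thereafter none of the rollback guards can fire; since there are finitely many threads, only finitely many rollbacks remain. The paper's proof compresses this into essentially two sentences without distinguishing the four rollback sites, whereas you treat lines~\ref{alloc-return-cp},~\ref{retire-return-cp},~\ref{read-return-cp}, and~\ref{get-key-return-cp} one by one. Your case split is sound but more elaborate than needed: the single observation ``after any rollback in $\beta$ a thread has $\mathit{my\_e}=e^\ast$ permanently, so no guard can trigger again'' already subsumes all four cases, and the artificial ordering (read-based before allocation-based) is unnecessary since a thread's \emph{first} rollback in $\beta$, of whatever type, already equalizes $\mathit{my\_e}$.

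One small technical caution: your assertion that every retired node has retire epoch at most $e^\ast-1$ is not exactly what the proof of Claim~\ref{claim-inf-epoch-changes} establishes; that proof only shows the epoch stops increasing once it exceeds the maximal retire epoch, not that it necessarily does exceed it. The clean way to close this is to observe that if a thread with $\mathit{my\_e}=e^\ast$ ever reached line~\ref{alloc-if} with a node whose retire epoch equals $e^\ast$, the CAS at line~\ref{alloc-update-e} would succeed and increment the epoch, contradicting stability of $e^\ast$. This is implicit in the paper's argument as well.
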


\begin{claimproof}
By Claim~\ref{claim-inf-epoch-changes}, there is a suffix $\alpha$ of $E$ in which the global epoch counter does not change.
Let $T$ be a thread, executing a rollback step during $\alpha$. Upon rolling-back to its last checkpoint, $T$ updates its local copy of the global epoch counter. Starting from this update, $T$'s local copy of the global epoch counter remains equal to the global epoch counter, and $T$ never execute a rollback step again.
The number of rollback steps during $\alpha$ is bounded by the number of executing threads (which is finite), and the claim holds.
\end{claimproof}

Let $s_i$ be the first step in the suffix of $E$, guaranteed by Claim~\ref{claim-inf-rollbacks}. By Claim~\ref{claim-legal-execution}, the reclamation-free execution $E^{RF}$, obtained by our transformation, is a legal execution of the original reclamation-free implementation.
Starting from the step which is equivalent to $s_i$, $E^{RF}$ contains no rollback steps as well. Therefore, by the lock-freedom guaranty of the original implementation, some pending operation terminates during $E^{RF}$. By Claim~\ref{claim-equivalent}, it terminates during $E$ as well -- a contradiction.
VBR does not foil the original implementation's lock-freedom guaranty, which derives Lemma~\ref{lemma-ds-lock-free}.

\section{Handling Unlinked Nodes before Checkpoint Rollbacks} \label{sec-checkpoint-extra}

In this section we explain how we avoid losing access to unlinked nodes when rolling back to a checkpoint. This obviously does not affect linearizability, as by Assumption~\ref{assumption-life}, such nodes cannot represent data-structure items anyway.
Moreover, this does not directly affect robustness, as robustness relates to the number of unreclaimed retired nodes, which is bounded in VBR (for more details, see Appendix~\ref{sec-lock-freedom}). 
However, our goal is to show that the system never blocks as a result of an allocation block (due to an exhaustion of the heap). Therefore, we handle such unlinked nodes as follows.

Assume that a thread T is about to execute a rollback step to its previous checkpoint, Then there are two types of nodes that may become unreachable to all of the executing threads after T executes its rollback step:

\begin{enumerate}
    \item Nodes that were allocated by T after its last checkpoint visit, and are not yet reachable from the data-structure entry points.
    \item Unlinked nodes that should be retired by T.
\end{enumerate}

By Assumption~\ref{assumption-life}, all other nodes are either reachable or must be retired by other threads. The first type of unlinked nodes is handled easily: for every such node, T appends them to its allocation list. By Assumption~\ref{assumption-life}, these nodes are not yet retired and therefore, their retire epoch is $\bot$. In Appendix~\ref{sec-correctness} we show that this does not foil correctness.

Handling the other type is more complex. Recall that by Assumption~\ref{assumption-life}, such nodes should already be unreachable. Therefore, T has direct access to them via its local pointers, or by following a sequence of {\em next} pointers, starting from a node referenced by one of its local pointers (e.g., retiring a sequence of unlinked nodes after a physical deletion in~\cite{harris2001pragmatic}).
By Assumption~\ref{assumption-life}, T should be able to distinguish between such nodes and nodes that should not be retired.
Additionally, when T is in charge of retiring a certain node, and this node is only reachable by a sequence of node pointers, it is guaranteed that all of the nodes in the sequence are also retired by T. In particular, they are not yet reclaimed, and by Assumption~\ref{assumption-invalidate}, their pointer data does not change throughout this retirement procedure.
Therefore, T can safely read the relevant node pointers, while ignoring pointer versions.
For every such node $n$, if $n$'s retire epoch is $\bot$ then T updates $n$'s retire epoch to be the current global epoch, and appends $n$ to its retired nodes list.
Note that T does not update its local copy of the global epoch, as its is going to be updated after the rollback step.

\section{Integrating VBR: An Example} \label{sec-example}

We are going to demonstrate VBR's integration using a standard implementation of a lock-free linked-list~\cite{harris2001pragmatic,herlihy2020art,michael2002high}. 
The linked-list interface suggests three operations: the {\em add()} operation adds a key to the list (and does nothing if the key is already present), the {\em remove()} operation removes an existing key (and does nothing if no such key is present), and the {\em contains()} operation returns a boolean indicating whether a key is present. In addition to these three operations, the implementation includes an auxiliary {\em find()} method, used by the {\em add()} and {\em remove()} operations for locating a given key. The {\em find()} method is also in charge of physically deleting marked nodes that has not been physically removed by their logical remover yet (logical and physical deletions are discussed in Section~\ref{sec-assumptions}).

\begin{figure}[!ht]
	\begin{algorithmic}[1]\footnotesize
        \State \textbf{find}(int key) \label{contains-call}
		\State retry: \label{find-retry}
		\Indent
	        \State pred := head \label{find-head}
	        \State pred\_b := head $\rightarrow$ birth\_epoch \label{find-head-birth}
            \State $\langle$ curr, curr\_b $\rangle$ := {\color{red}getNext}(pred) \label{find-head-next}
	        \State curr\_key := {\color{red}getKey}(curr) \label{find-head-key}
		    \State while (TRUE) \label{find-while}
		    \Indent
                \State if ({\color{red}isMarked}(curr, curr\_b)) \label{find-is-marked}
                \Indent
                    \State $\langle$ succ, succ\_b $\rangle$ := {\color{red}getNext}(curr) \label{find-curr-next}
                    \State while ({\color{red}isMarked}(succ, succ\_b))  \label{find-while-succ-marked}
                    \Indent
                        \State $\langle$ succ, succ\_b $\rangle$ := {\color{red}getNext}(succ) \label{find-succ-next}   
                    \EndIndent
                    \State if ({\color{red}update}(pred, pred\_b, curr, curr\_b, succ, succ\_b) == FALSE) \label {find-physical-deletion}
                    \Indent
                        \State goto retry \label{find-goto-retry}
                    \EndIndent
                    \State $\langle$ curr, curr\_b $\rangle$ := $\langle$ succ, succ\_b $\rangle$ \label{find-curr-gets-succ}
                    \State curr\_key := {\color{red}getKey}(curr) \label{find-succ-key}
                \EndIndent
                \State if (curr\_key $\geq$ key) \label{find-if-done}
                \Indent
                    \State return pred, pred\_b, curr, curr\_b   \label{find-return} 
                \EndIndent
                \State $\langle$ pred, pred\_b $\rangle$ := $\langle$ curr, curr\_b $\rangle$ \label{find-pred-gets-curr}
                \State $\langle$ curr, curr\_b $\rangle$ := {\color{red}getNext}(pred) \label{find-pred-next}
                \State curr\_key := {\color{red}getKey}(curr) \label{find-curr-key}
    		  \EndIndent
        \EndIndent
	\end{algorithmic}
	\caption{\small The {\em find} method.}
	\label{fig:find}
\end{figure}

\begin{figure}[!ht]
	\begin{algorithmic}[1]\footnotesize
        \State \textbf{add}(int key) \label{insert-call}
		\Indent
		    \State {\color{blue}install checkpoint} \label{insert-cp-1}
	        \State my\_e := e.get() \label{insert-read-e-1}
		    \State while (TRUE) \label{insert-while}
		    \Indent
		        \State $\langle$ pred, pred\_b, succ, succ\_b $\rangle$ := find(key) \label{insert-find}
		        \State succ\_key := {\color{red} getKey}(succ) \label{insert-succ-key}
		        \State if (succ\_key == key) return FALSE \label{insert-return-false}
		        \State n := {\color{red}alloc}(key) \label{insert-alloc}
		        \State n\_b := n $\rightarrow$ birth\_epoch \label{insert-read-birth}
		        \State res := {\color{red}update}(pred, pred\_b, succ, succ\_b, n, n\_b) \label{insert-update}
		        \State if (res == TRUE) \label{insert-if-update}
		        \Indent 
		            \State {\color{blue}install checkpoint} \label{insert-cp-2}
		            \State my\_e := e.get() \label{insert-read-e-2}
		            \State return TRUE \label{insert-return-true}
		        \EndIndent
		        \State else {\color{red}retire}(n, n\_b) \label{insert-retire}
    		  \EndIndent
        \EndIndent
	\end{algorithmic}
	\caption{\small The {\em add} operation.}
	\label{fig:insert}
\end{figure}

\begin{figure}[!ht]
	\begin{algorithmic}[1]\footnotesize
        \State \textbf{remove}(int key) \label{remove-call}
		\Indent
		    \State {\color{blue}install checkpoint} \label{remove-cp-1}
	        \State my\_e := e.get() \label{remove-read-e-1}
	        \State $\langle$ pred, pred\_b, curr, curr\_b $\rangle$ := find(key) \label{remove-find}
	        \State curr\_key := {\color{red}getKey}(curr) \label{remove-curr-key}
	        \State if (curr\_key $\neq$ key) return FALSE \label{remove-if-not-found}
	        \State while ({\color{red}isMarked}(curr, curr\_b) == FALSE) \label{remove-while-not-marked}
	        \Indent 
	            \State $\langle$ succ, succ\_b $\rangle$ := {\color{red}getNext}(curr) \label{remove-curr-next}
	            \State res := {\color{red}mark}(curr, curr\_b) \label{remove-mark}
    	        \State if (res == TRUE) \label{remove-if-mark}
	            \Indent 
	                \State {\color{blue}install checkpoint} \label{remove-cp-2}
	                \State my\_e := e.get() \label{remove-read-e-2}
	                \State if ({\color{red}update}(pred, pred\_b, curr, curr\_b, succ, succ\_b) == FALSE) \label{remove-physical-trial}
	                \Indent
	                    \State find(key) \label{remove-find-physical-removal}
	                \EndIndent
	                \State {\color{red}retire}(curr, curr\_b) \label{remove-retire}
	                \State return TRUE \label{remove-return-true}
	            \EndIndent
	        \EndIndent
	        \State return FALSE \label{remove-return-false}
        \EndIndent
	\end{algorithmic}
	\caption{\small The {\em remove} operation.}
	\label{fig:remove}
\end{figure}

\begin{figure}[!ht]
	\begin{algorithmic}[1]\footnotesize
        \State \textbf{contains}(int key) \label{contains-call}
		\Indent
		    \State {\color{blue}install checkpoint} \label{contains-cp}
	        \State my\_e := e.get() \label{contains-read-e}
	        \State curr := head \label{contains-head}
	        \State curr\_b := head $\rightarrow$ birth\_epoch \label{contains-head-birth}
	        \State curr\_key := {\color{red}getKey}(curr) \label{contains-head-key}
		    \State while (curr\_key < key) \label{contains-while}
		    \Indent
	            \State $\langle$ curr, curr\_b $\rangle$ := {\color{red}getNext}(curr) \label{contains-curr-next}
	            \State curr\_key := getKey(curr) \label{contains-curr-key}
    		  \EndIndent
		    \State return ({\color{red}isMarked}(curr, curr\_b) == FALSE) \label{contains-return}
        \EndIndent
	\end{algorithmic}
	\caption{\small The {\em contains} operation.}
	\label{fig:contains}
\end{figure}

As briefly discussed in Section~\ref{sec-checkpoints}, the standard linked-list implementation consists of several update instructions. Specifically, we chose to rely on an implementation with four update instructions.
The first two are the insertion of a new node into the list (executed during the {\em add()} operation) and the marking of a node for logically deleting it (executed during the {\em remove()} operation). 
In both cases, the successful update determines the linearization point of the respective operation. I.e., the output of the respective CAS execution is crucial for linearizability.
If the executing thread performs a rollback step, right after one of these updates, the obtained execution would have a different set of corresponding history extensions (and in particular, it would not be linearizable).
Therefore, these two updates are considered as rollback-unsafe steps, and when are executed successfully, must be followed by a checkpoint installation.

The remaining two update instructions, which are the physical deletion of a node (either during a {\em remove()} operation or the {\em find()} auxiliary method), are both rollback-safe steps. Intuitively, the remover identity does not affect linearization, so performing a rollback step right after the physical deletion would not change the set of corresponding history extensions.
Therefore, checkpoints are not installed after a successful execution of these update instructions.

The standard lock-free linked-list implementation is a natural candidate for demonstrating the VBR integration. Assumption~\ref{assumption-cas} holds since all shared-memory updates occur via CAS executions. Assumption~\ref{assumption-invalidate} holds since a node's \textit{next} pointer is its only mutable field, and it is indeed invalidated via marking. Assumption~\ref{assumption-life} holds since list nodes indeed follow the life-cycle, presented in Section~\ref{sec-assumptions}: they are first allocated, then become reachable and valid (a node's logical insertion into the list is its physical insertion), then invalid (via marking), physically removed, and retired -- either by their logical remover (after making sure they are physically removed) or by some physical remover.
Finally, Assumption~\ref{assumption-mutable} holds since, after a rollback-unsafe update, the list traversal is never resumed from a previously referenced node. E.g., after marking a node for its logical deletion, the following list traversal (for physically deleting this node) is initiated from the list head, and from any other previously saved node reference.

The pseudo code for the VBR integration example appears in Figures~\ref{fig:find}-\ref{fig:contains}. The methods from Figure~\ref{fig:interface} are marked with red and checkpoint installations are marked with blue.
As described in Section~\ref{sec-checkpoints}, installing a checkpoint includes updating the checkpoint reference. The checkpoint reference is used when a thread performs a rollback step (after performing line~\ref{alloc-return-cp}, \ref{retire-return-cp}, \ref{read-return-cp} or \ref{get-key-return-cp} from Figure~\ref{fig:interface}). It should also include saving the content of local variables for a later recovery. However, this part is unnecessary in the linked-list implementation, as local variables are never overwritten after a checkpoint. Finally, the local copy of the global epoch counter is always updated after installing (or rolling back to) a checkpoint.

\subsection{The {\em find()} method}

The original {\em find()} auxiliary method receives a key as its input parameter, and returns pointers to two nodes, \textit{pred} and \textit{curr}. \textit{pred} represents the node with the maximal key which is smaller than the input key in the list, and \textit{curr} represents the node with the minimal key which is equal or greater than the input key in the list. In addition, it is guaranteed that at some point during the {\em find()} execution, both node were reachable (and in particular, logically in the list), and that \textit{curr} was \textit{pred}'s successor at this point.
This method is also in charge of physically removing logically deleted nodes, while traversing the list.

The VBR-integrated pseudo code for the {\em find()} auxiliary method appears in Figure~\ref{fig:find}.
This variant returns the two respective nodes, together with their birth epochs (see line~\ref{find-return}). 
As this method is not an interface operation, a checkpoint is never installed upon its invocation. Additionally, as all updates performed in its scope (i.e., the physical removal in line~\ref{remove-find-physical-removal}) are rollback-safe steps, it dose not include installing checkpoints at all.
Consequently, whenever a read method (either {\em getKey()} or {\em getNext()}) results in a rollback, the rollback is to a checkpoint, installed during the calling operation (either an {\em add()} or a {\em remove()} operation). 

\subsection{The {\em add()} operation}

The original {\em add()} operation receives a key as its input parameter, and inserts a node with the given key into the list (and returns TRUE upon a successful insertion). If there already exists a node with the given key, which is logically in the list, it does nothing (and returns FALSE). The successful physical insertion of a new node is also its logical insertion, and is done via a CAS instruction. The operation returns TRUE only after such a successful CAS, which means that the CAS result is crucial for maintaining linearizability. Therefore, the physical insertion of a node is considered as a rollback-unsafe step according to Definition~\ref{definition-rollback-safe}. 

The VBR-integrated pseudo code for the {\em add()} operation appears in Figure~\ref{fig:insert}.
As this is an interface operation, a checkpoint is installed immediately after invocation (see line~\ref{insert-cp-1}). In addition, a second checkpoint is installed after the successful physical (and logical) insertion of a new node into the list (see line~\ref{insert-cp-2}). This second checkpoint is obviously unnecessary, as there are no rollback steps between the installation and the return in line~\ref{insert-return-true}. However, we added it for the completeness of our example.

\subsection{The {\em remove()} operation}

The original {\em remove()} operation receives a key as its input parameter, and removes a node with the given key from the list (and returns TRUE upon a successful removal). If there does not exist a node with the given key, which is logically in the list, it does nothing (and returns FALSE).
After traversing the list and locating the deletion candidate, it is first logically deleted via marking (the remover is the thread executing the successful respective CAS), then physically removed and retired.
The physical removal is not considered as a crucial phase of the {\em remove()} operation. In certain implementations, the {\em remove()} operation returns right after the logical deletion, and the physical deletion is executed during a future traversal (and may not happen at all). 
In our example implementation, the removing thread is also in charge of the node's retirement. Therefore, a node's physical removal is validated via an extra call to the {\em find()} method, before retirement.

The VBR-integrated pseudo code for the {\em remove()} operation appears in Figure~\ref{fig:remove}.
As this is an interface operation, a checkpoint is installed immediately after invocation (see line~\ref{remove-cp-1}). In addition, a second checkpoint is installed after the successful marking of the victim node (see line~\ref{remove-cp-2}). This second checkpoint is necessary, as the remover thread may rollback during the {\em find()} execution (line~\ref{remove-find-physical-removal}) or when retiring the victim node (line~\ref{remove-retire}).
Note that the physical removal trial in line~\ref{remove-physical-trial} is considered as a rollback-safe step according to Definition~\ref{definition-rollback-safe} and thus, it is not followed by a checkpoint installation. In addition, note that the {\em getNext()} call in line~\ref{remove-curr-next} does not violate Assumption~\ref{assumption-mutable}, as no rollback-unsafe step is executed prior to this reference read.

\subsection{The {\em contains()} operation}

The original {\em contains()} operation receives a key as its input parameter. It returns TRUE if the list contains a node with the given key, and FALSE otherwise.
While the {\em contains()} operation may be implemented using the {\em find()} auxiliary method, it is traditionally implemented in a wait-free manner, with a single list traversal, while avoiding physical deletions (which may foil wait-freedom).

The VBR-integrated pseudo code for the {\em contains()} operation appears in Figure~\ref{fig:contains}.
Although it is built on the wait-free variant of the original {\em contains()} implementation, it is not wait-free (as checkpoint rollbacks do not maintain wait-freedom). As it contains no shared-memory updates, it includes a single checkpoint installation, upon invocation (line~\ref{contains-cp}).

\ignore{
 We show it by applying the inverse version of the transformation described in Section~\ref{sec-algorithm}, and constructing a respective reclamation-free execution.


\subsubsection{The Inverse Transformation} \label{sec-inverse-transformation}

Given a VBR-integrated execution $E$, its respective reclamation-free execution, denoted $E^{RF}$, is constructed by induction on $E$. In Sections~\ref{sec-allocations-transformation}-\ref{sec-checkpoints-transformation} we construct $E^{RF}$ inductively and prove that the invariant $EXT(E)=EXT(E^{RF})$ holds throughout the construction.
As opposed to their equivalent instructions from the reclamation-free code, the methods from Figure~\ref{fig:interface} are not atomic. Therefore, for every such method execution, we specifically mention which step in the VBR-integrated execution is replaced with the method's original counterpart.

We start by the initial configuration, in which the data-structure is initialized and empty.
Since the memory in the reclamation-free setting is considered to be unbounded, we map every pair of a memory address $a$ from the VBR-integrated setting and a natural number $b$ to an address $\langle a,b \rangle$ in the reclamation-free setting. Throughout the inverse construction, if a node $n$ is allocated from an address $a$ with a birth epoch $b$, we assume that it is allocated from $\langle a,b \rangle$ in the reclamation-free setting.



\paragraph{Allocations} \label{sec-allocations-transformation}

Suppose the next step in $E$ is the return from an {\em alloc()} call (line~\ref{alloc-return}), and that the returned node is allocated from a memory address $a$, with a birth epoch $b$ (set in line~\ref{alloc-update-birth}).
Then the respective {\em alloc()} call, by the same thread, is appended to $E^{RF}$. In $E^{RF}$, the node is allocated from the $\langle a,b \rangle$ memory address.

In order to show that $EXT(E)=EXT(E^{RF})$ after allocations, we need to show that at most a single node is allocated from every memory address during $E^{RF}$.
As stated in Section~\ref{sec-assumptions}, we assume that each node is retired only once, and that its memory is not re-allocated before its retirement. Therefore, when a new node is allocated during $E$, the global epoch is at least the retire epoch of the node previously allocated from the same memory address. If they are equal, then the global epoch is incremented (line~\ref{alloc-update-e}) before setting the birth epoch of the new node. 
I.e., the life periods (from birth epoch to retire epoch) of any two nodes, allocated from the same memory address during $E$, do not overlap.
Therefore, during $E^{RF}$, nodes can indeed be considered as allocated from unique memory addresses, represented as pairs of  memory addresses plus birth epochs.

\paragraph{Retirements} \label{sec-retirements-transformation}

Suppose that the next step in $E$ is appending a node $n$ to a thread's list of retired nodes (line~\ref{retire-retire}). Let $a$ be its memory address and let $b$ be its birth epoch.
Then a $retire(\langle a,b \rangle)$ call, by the same thread, is appended to $E^{RF}$.
Notice that if a {\em retire()} call returns in line~\ref{retire-if-retired}, nothing is appended to $E^{RF}$.

\paragraph{Shared-Memory Reads} \label{sec-reads-transformation}
\paragraph*{getNext() Calls} 
\paragraph*{getKey() Calls} 
\paragraph*{isMarked() Calls} 
\paragraph{Shared-Memory Writes} \label{sec-writes-transformation}
\paragraph*{update() Calls} 
\paragraph*{mark() Calls} 
\paragraph{Checkpoints} \label{sec-checkpoints-transformation}

\subsubsection{Histories Equivalence} \label{sec-histories-equivalence}

Safe memory reclamation often requires that before reclaiming a retired node, the reclaimer ensures that no other threads hold local pointers to this retired node, as they may be later dereferenced.
However, VBR does not return freed space to the operating system, and this requirement can be relaxed.
In this section we prove that VBR 
maintains the original program's linearizability and lock-freedom guarantee. 
Our correctness proof relates to applying the interface introduced in Section~\ref{sec-algorithm}. Consequently, it applies to specific data-structures. However, it can be easily shown that the same invariants hold for every lock-free linearizable data-structure implementation, for which our assumptions hold (see Section~\ref{sec-assumptions}). We start by proving Claim~\ref{claim-compose} from Section~\ref{sec-checkpoints}:

\begin{claim} 
Assume that $s_j$ is a rollback-safe step, executed by a thread T. Let $0<i<j$, and assume that for every $i<t < j$, if $s_t$ is a step executed by T, then $s_t$ is also a rollback-safe step.
Let $E'$ be the execution obtained by removing from $E_j$ all steps $s_t$ (for every $i<t \leq j$), executed by T during $E_j$. Then EXT($E'$)=EXT($E_j$).
\end{claim}

\begin{proof}
We are going to prove the claim by induction on $j-i$. For the base case, the claim holds by Definition~\ref{definition-rollback-safe}.
For the induction step, assume that the claim holds for $j-1$. Let $E''$ be the execution obtained by removing all steps $s_t$ (for every $i<t \leq j-1$), executed by T during $E_{j-1}$. Then $E''=E'$, and by the induction hypothesis, $EXT(E'')=EXT(E_{j-1})$. I.e., $EXT(E')=EXT(E_{j-1})$, and by Definition~\ref{definition-rollback-safe}, $EXT(E')=EXT(E_j)$, and the claim holds.
\end{proof}

\subsection{VBR Maintains Linearizability} \label{sec-linearizability}

In order to prove that VBR maintains linearizability of the original implementation, we show that any given execution with the integrated VBR scheme is equivalent to a reclamation-free execution (in which memory is considered to be unbounded), by constructing a respective history.

\subsubsection{Rollback-Free Executions} \label{sec-rollback-free-executions}

We say that a VBR-integrated execution $E$ is {\em rollback-free} if checkpoint rollbacks (e.g., see lines~\ref{read-return-cp} and~\ref{get-key-return-cp} in Figure~\ref{fig:interface}) are never executed during $E$. 
We start by showing that all rollback-free VBR-integrated executions are linearizable, by matching each such execution with a reclamation-free counterpart with the same history.
Given a rollback-free VBR-integrated execution $E$, its {\em respective reclamation-free execution}, denoted $E^{RF}$ is obtained by applying the inverse transformation described in Section~\ref{sec-algorithm} and omitting all epoch-related instructions. I.e., the respective reclamation-free execution is obtained by omitting all of the global epoch accesses and checkpoint-related actions, and replacing each method from Figure~\ref{fig:interface} with its matching reclamation-free shared-memory access.
More specifically, $E^{RF}$ is constructed as follows:

\begin{enumerate}
    \item \textbf{Allocations}: Each call to the {\em alloc()} method (line~\ref{alloc-call}), eventually returning a node allocated from a memory address $a$, with a birth epoch $b$ (set in line~\ref{alloc-update-birth}), is replaced with an allocation from the $\langle a,b \rangle$ memory address.
    \item \textbf{Retirements}: Let $n$ be a node added to a thread's retired nodes list in line~\ref{retire-retire}. Let $a$ be its memory address and let $b$ be its birth epoch. Then the respective {\em retire()} method call (line~\ref{retire-call}) is replaced with a retirement of the node allocated from the $\langle a,b \rangle$ memory address. 
    \item \textbf{getNext() calls}: Let $n$ be a node received as an input parameter to a {\em getNext()} call (line~\ref{get-ref-call}). Let $a$ be its memory address and let $b$ be its birth epoch when the method is called. Then the method call is replaced with a read of the {\em next} field of the node, allocated from the $\langle a,b \rangle$ memory address.  
    \item \textbf{getKey() calls}: Let $n$ be a node received as an input parameter to a {\em getKey()} call (line~\ref{get-key-call}). Let $a$ be its memory address and let $b$ be its birth epoch when the method is called. Then the method call is replaced with a read of the {\em key} field of the node, allocated from the $\langle a,b \rangle$ memory address. 
    \item \textbf{isMarked() calls}: Let $n$ and $n_b$ be the node and birth epoch received as input to an {\em isMarked()} call (line~\ref{is-valid-call}). Let $a$ be $n$'s memory address. Then the method call is replaced with an {\em isMarked()} call, receiving the node, allocated from the $\langle a,b \rangle$ memory address, as its input parameter. 

\end{enumerate}


\subsubsection{Allocation Correctness} \label{sec-allocation-correctness} Each {\em alloc()} execution (lines~\ref{alloc-call}-\ref{alloc-return}), returning a node allocated from a memory address $a$, with a birth epoch $b$, is replaced with an allocation from the $\langle a,b \rangle$ memory address. As stated in Section~\ref{sec-assumptions}, we assume that each node is retired only once, and that its memory is not re-allocated before its retirement. Therefore, when a new node is allocated, the global epoch is at least the retire epoch of the node previously allocated from the same memory address. If they are equal, then the global epoch is incremented (line~\ref{alloc-update-e}) before setting the birth epoch of the new node. 
I.e., the life periods (from birth epoch to retire epoch) of any two nodes, allocated from the same memory address in the VBR-integrated setting, do not overlap.
Therefore, in an unbounded memory space setting, nodes can indeed be considered as allocated from unique memory addresses, represented as pairs of  memory addresses plus birth epochs.

\begin{observe} \label{one-to-one}
Let n$_0$,n$_1$ be two nodes, allocated during a VBR-integrated execution, from the memory addresses a$_0$,a$_1$, and with birth epochs b$_0$,b$_1$. Let m$_0$ be the node allocated from $\langle$a$_0$,b$_0\rangle$ and let m$_1$ be the node allocated from $\langle$a$_1$,b$_1\rangle$, during the respective reclamation-free execution.
Then n$_0$ points to n$_1$ iff m$_0$ points to m$_1$.
\end{observe}

If n$_0$ points to n$_1$, the pointer's version must be the maximum of b$_0$,b$_1$. By our assumptions regarding a node's life-cycle (see Section~\ref{sec-assumptions}), n$_0$ points to n$_1$ while both are reachable from the data-structure entry points. Therefore, both n$_0$'s and n$_1$'s retire epochs are at least the maximum of b$_0$,b$_1$. Any new allocation from either a$_0$ or a$_1$ would result in strictly bigger birth epochs (and consequently, bigger pointer versions).

\subsubsection{Read Correctness} Every execution of line~\ref{get-ref-get-ref} or~\ref{get-key-get-key} is replaced with a read of the next pointer (or the key, respectively) of the input node, allocated from $\langle$n,n\_b$\rangle$.
We assume a rollback-free execution. In other words, the global epoch has not changed since the last access to the target node, which implies that a new node has not yet been allocated from the same address. Therefore, the key returned in line~\ref{get-key-get-key} is correct.
By Observation~\ref{one-to-one}, it is guaranteed that the correct node is also returned in line~\ref{get-ref-get-ref}.
Notice that the birth epoch of the target node is ignored in both methods. By our assumption regarding a node's life-cycle, if the node has been reclaimed before the last global epoch update, the executing thread should not have access to this node at this stage.

If the VBR-compatible {\em isMarked()} method returns TRUE in line~\ref{is-valid-return-false}, then it is guaranteed that the node allocated from $\langle$n,n\_b$\rangle$ has already been retired. By our assumption regarding a node's life-cycle, this node is indeed marked, and the method returns the correct answer.
Otherwise, it returns the correct answer, as read in line~\ref{is-valid-is-valid}.

\subsubsection{Update Correctness} Each execution of line~\ref{cas-wcas} is replaced with a single CAS execution, operating on the next pointer of the node n$_0$, allocated from $\langle$n,n\_b$\rangle$, expecting an unmarked pointer to the node n$_1$, allocated from $\langle$exp,exp\_b$\rangle$ , and updating to an unmarked pointer to n$_2$, allocated from $\langle$new,new\_b$\rangle$. 
By Observation~\ref{one-to-one}, n points to exp (with the respective birth epochs, received as inputs) iff n$_0$ points to n$_1$ in the reclamation-free execution. In particular, n points to exp iff n is not reclaimed. In addition, n is guaranteed to be unmarked (as the expected content is unmarked). I.e., the {\em update()} execution succeeds iff the respective reclamation-free CAS succeeds.

Each execution of line~\ref{invalidate-wcas} is replaced with a single CAS execution, operating on the next pointer of the node n$_0$, allocated from $\langle$n,n\_b$\rangle$. If the condition checked in line~\ref{invalidate-if-marked} holds, then n$_0$ is assumed to be retired and in particular, already marked. Otherwise, by Observation~\ref{one-to-one}, n points with an unmarked node to exp iff n$_0$ points with an unmarked node to the node allocated from $\langle$exp,exp\_b$\rangle$. I.e., the {\em mark()} execution succeeds iff the respective reclamation-free {\em mark()} succeeds.

\subsubsection{Rollback-Free Equivalence} \label{sec-equivalence}
After showing that every rollback-free VBR-integrated execution is linearizable, it still remains to show that for every VBR-integrated execution $E$ and every executing thread $t$, $t$ cannot distinguish between $E$ and a rollback-free VBR-integrated execution (i.e., a VBR-integrated execution in which lines~\ref{read-return-cp} and~\ref{get-key-return-cp} of Figure~\ref{fig:interface} are never executed).
First, every avoidable shared-memory update in $E$ is replaced with a matching read of the new value from the given memory address (it is possible given our definition of avoidable shared-memory updates in Section~\ref{sec-checkpoints}). I.e., we construct an execution during which avoidable updates are never actually performed.
By our definition of avoidable CAS instructions, for every executing thread $t$, $t$ cannot distinguish between the obtained execution and the original one. 
The resulting execution is not a legal one, as memory may contain values not written by any executing thread, but it is equivalent to the original one (since the executing threads cannot distinguish between them).
As threads do not execute any unavoidable shared-memory updates, node allocations or operation invocations between checkpoints, in the obtained execution, each thread performs only shared-memory reads and local steps between every two consecutive checkpoints.
In addition, upon rolling back, a thread restores all of its local variables.
Therefore, a checkpoint rollback is indistinguishable to the executing thread from being stalled since the previous checkpoint.

\subsection{VBR Maintains Lock-Freedom} \label{sec-lock-freedom}

As shown in Section~\ref{sec-linearizability}, every VBR-integrated execution matches a certain reclamation-free executions. The VBR transformation indeed adds checkpoint rollbacks, but they can be viewed as a stall of the rolling back thread by the scheduler. 
Moreover, a rollback occurs only after another thread has allocated a new node, implying a system-wide progress.
Therefore, given a lock-free algorithm, integrating it with the VBR scheme would maintain lock-freedom.

It still remains to show that VBR is robust (the number of unreclaimed retired objects is strictly bounded). I.e., the system never blocks due to a possible exhaustion of the heap.
In VBR, the re-allocation of a retired node includes  setting its birth epoch to be the current global epoch.
In case its previous retire epoch equals the current epoch, the global epoch is incremented using a CAS instruction, and the birth epoch is set (lines~\ref{alloc-read-e}-\ref{alloc-update-birth}).
This procedure does not involve any loops, and does not depend on other threads' behavior (the CAS result does not have any influence on the reclamation flow).
As a result, the shared memory never contains unreclaimed retired nodes, and in a trivial manner, the number of such nodes is always bounded.
}

\end{document}